\newtheorem{assumption}{Assumption}
\newtheorem{prop}{Property}
\begin{document}

\title{Signaling for Decentralized Routing in a Queueing Network
\thanks{Preliminary versions of this paper appeared in \cite{ouyang2013} and \cite{ouyang2014}. This work was supported in part by National Science Foundation (NSF) Grant CCF-1111061 and NASA grant NNX12AO54G.}
}


\author{Yi~Ouyang \and Demosthenis~Teneketzis
}


\institute{Y. Ouyang \at
              Department of Electrical Engineering and Computer Science, University of Michigan, Ann Arbor, MI\\
              \email{ouyangyi@umich.edu}           
           \and
           D. Teneketzis \at       
           	  Department of Electrical Engineering and Computer Science, University of Michigan, Ann Arbor, MI\\      
              \email{teneket@umich.edu}    
}

\date{Received: date / Accepted: date}

\maketitle

\begin{abstract}
A discrete-time decentralized routing problem in a service system consisting of two service stations and two controllers is investigated. Each controller is affiliated with one station.
Each station has an infinite size buffer. Exogenous customer arrivals at each station occur with rate $\lambda$.
Service times at each station have rate $\mu$. At any time, a controller can route one of the customers waiting in its own station to the other station. Each controller knows perfectly the queue length in its own station and observes the exogenous arrivals to its own station as well as the arrivals of customers sent from the other station. 
At the beginning, each controller has a probability mass function (PMF) on the number of customers in the other station.
These PMFs are common knowledge between the two controllers.
At each time a holding cost is incurred at each station due to the customers waiting at that station.
The objective is to determine routing policies for the two controllers that minimize either the total expected holding cost over a finite horizon or the average cost per unit time over an infinite horizon.
In this problem there is implicit communication between the two controllers; whenever a controller decides to send or not to send a customer from its own station to the other station it communicates information about its queue length to the other station.
This implicit communication through control actions is referred to as signaling in decentralized control.
Signaling results in complex communication and decision problems. In spite of the complexity of signaling involved, it is shown that an optimal signaling strategy is described by a threshold policy which depends on the common information between the two controllers; this threshold policy is explicitly determined.
\keywords{
Decentralized System \and Non-classical Information Structure \and Signaling \and Queueing Networks \and Common Information}
\end{abstract}

\section{Introduction}
\label{sec:intro}
Routing problems to parallel queues arise in many modern technological systems such as communication, transportation and sensor networks.
The majority of the literature on optimal routing in parallel queues addresses situations where the information is centralized, either perfect (see \cite{winston1977optimality,weber1978optimal,davis1977optimal,ephremides1980simple,lin1984optimal,hajek1984optimal,whitt1986deciding,weber1987optimal,hordijk1990optimality,hordijk1992assignment,menich1991optimality,foley2001join,akgun2012understanding} and references therein)
or imperfect (see \cite{beutler1989routing,kuri1995optimal} and references therein).
Very few results on optimal routing to parallel queues under decentralized information are currently available.
The authors of \cite{cogill2006approximate} present a heuristic approach to decentralized routing in parallel queues.
In (\cite{ying2011throughput,reddy2012distributed,manfredi2014decentralized,abdollahi2008novel,si2013decentralized} and references therein), decentralized routing policies that stablize queueing networks are considered.
The work in \cite{pandelis1996simple} presents an optimal policy to a routing problem with a one-unit delay sharing information structure.

In this paper we investigate a decentralized routing problem in discrete time.
We consider a system consisting of two service stations/queues, called $Q_1$ and $Q_2$ and two controllers, called $C_1$ and $C_2$.
Controller $C_1$ (resp. $C_2$) is affiliated with service station $Q_1$ (resp. $Q_2$). 
Each station has an infinite size buffer. 
The processes describing exogenous customer arrivals at each station are independent Bernoulli with parameter $(\lambda)$. 
The random variables describing the service times at each station are independent geometric with parameter $(\mu)$. 
At any time each controller can route one of the customers waiting in its own queue to the other station.
Each controller knows perfectly the queue length in its own station, and observes the exogenous arrivals in its own station as well as the arrivals of customers sent from the other station.
At the beginning, controller $C_1$ (resp. $C_2$) has a probability mass function (PMF) on the number of customers in station $Q_2$ (resp. $Q_1$).
These PMFs are common knowledge between the controllers.
At each time a holding cost is incurred at each station due to the customers waiting at that station.
The objective is to determine decentralized routing policies for the two controllers that minimize either
the total expected holding cost over a finite horizon or the average cost per unit time over an infinite horizon.
Preliminary versions of this paper appeared in \cite{ouyang2013} (for the finite horizon problem) and \cite{ouyang2014} (for the infinite horizon average cost per unit time problem).

In the above described routing problem, each controller has different information.
Furthermore, the control actions/routing decisions of one controller affect the information of the other controller.
Thus, the information structure of this decentralized routing problem is non-classical with control sharing (see \cite{mahajan2012optimal} for non-classical control sharing information structures).
Non-classical information structures result in challenging signaling problems (see \cite{ho1980team}).
Signaling occurs through the routing decisions of the controllers. 
Signaling is, in essence, a real-time encoding/communication problem within the context of a decision making problem.
By sending or not sending a customer from $Q_1$ (resp. $Q_2$) to $Q_2$ (resp. $Q_1$) controller $C_1$ (resp. $C_2$) communicates at each time instant a compressed version of its queue length to $C_2$ (resp. $C_1$). 
For example, by sending a customer from $Q_1$ to $Q_2$ at time $t$, $C_1$ may signal to $C_2$ that $Q_1$'s queue length is above a pre-specified threshold $l_t$. This information allows $C_2$ to have a better estimate of $Q_1$'s queue length and, therefore, make better routing decisions about the customers in its own queue; the same arguments hold for the signals send (through routing decisions) from $C_2$ to $C_1$.
Thus, signaling through routing decisions has a triple function: communication, estimation and control.

Within the context  of the problems described above, 
there is enormous number of signaling possibilities.
For example, there is an arbitrarily large number of choices of the sequences of pre-specified thresholds $l_1,l_2,\dots,l_t,\dots$ and these choices are only a small subset of all the possible sequences of binary partitions of the set of non-negative integers that describe all choices available to $C_1$ and $C_2$.
All these possibilities result in highly non-trivial decision making problems.
It is the presence of signaling that distinguishes the problem formulated in this paper from all other routing problems in parallel queues investigated so far. 

Some basic questions associated with the analysis of this problem are:

What is an information state (sufficient statistic) for each controller? 
How is signaling incorporated in the evolution/update of the information state?
Is there an explicit description of an optimal signaling strategy? 
We will answer these questions in Section \ref{sec:SR}-\ref{sec:infinite} and will discuss them further in Section \ref{sec:diss}.

\subsection*{Contribution of the paper}
The signaling feature of our problem distinguishes it from all previous routing problems in parallel queues.
In spite of the complexity of signaling, we show that an optimal decentralized strategy is described by a single threshold routing policy where the threshold depends on the common information between the two controllers.
We explicitly determine this threshold via simple computations.


\subsection*{Organization}
The rest of the paper is organized as follows.
In Section \ref{sec:model} we present the model for the queueing system and formulate the finite horizon and infinite horizon decentralized routing problems. 
In Section \ref{sec:SR} we present structural results for optimal policies.
In Section \ref{sec:prelim} we present a specific decentralized routing policy, which we call $\hat{g}$, and state some features associated with its performance.
In Section \ref{sec:finite}, we show that when the initial queue lengths in $Q_1$ and $Q_2$ are equal, $\hat{g}$ is an optimal policy for the finite horizon decentralized routing problem.
In Section \ref{sec:infinite}, we show that $\hat{g}$ is an optimal decentralized routing policy for the infinite horizon average cost per unit time problem.  We conclude in Section \ref{sec:diss}.

\subsection*{Notation}
Random variables (r.v.s) are denoted by upper case letters, their realization by the corresponding lower case letter.	
In general, subscripts are used as time index while superscripts are used to index service stations. 
For time indices $t_1\leq t_2$, $X_{t_1:t_2}$ is the short hand notation for $(X_{t_1},X_{t_1+1},...,X_{t_2})$.
For a policy $g$, we use $X^{g}$ to denote that the r.v. $X^{g}$ depends on the choice of policy $g$.
We use vectors in $\mathbb{R}^{\mathbb{Z}_+} $ to denote PMFs (Probability Mass Functions,) where $\mathbb{Z}_+$ denotes the set of non-negative integers.
We also use a constant in $\mathbb{Z}_+$ to denote the corner PMF that represents a constant r.v..
i.e. a constant $c \in \mathbb{Z}_+$ denotes the PMF whose entries are all zero except the $c$th.

\section{System Model and Problem Formulation}
\label{sec:model}
\subsection*{System Model}
The queueing/service system shown in Figure \ref{fig:system}, operates in discrete time.
\begin{figure}
\centering
\includegraphics[scale=0.4]{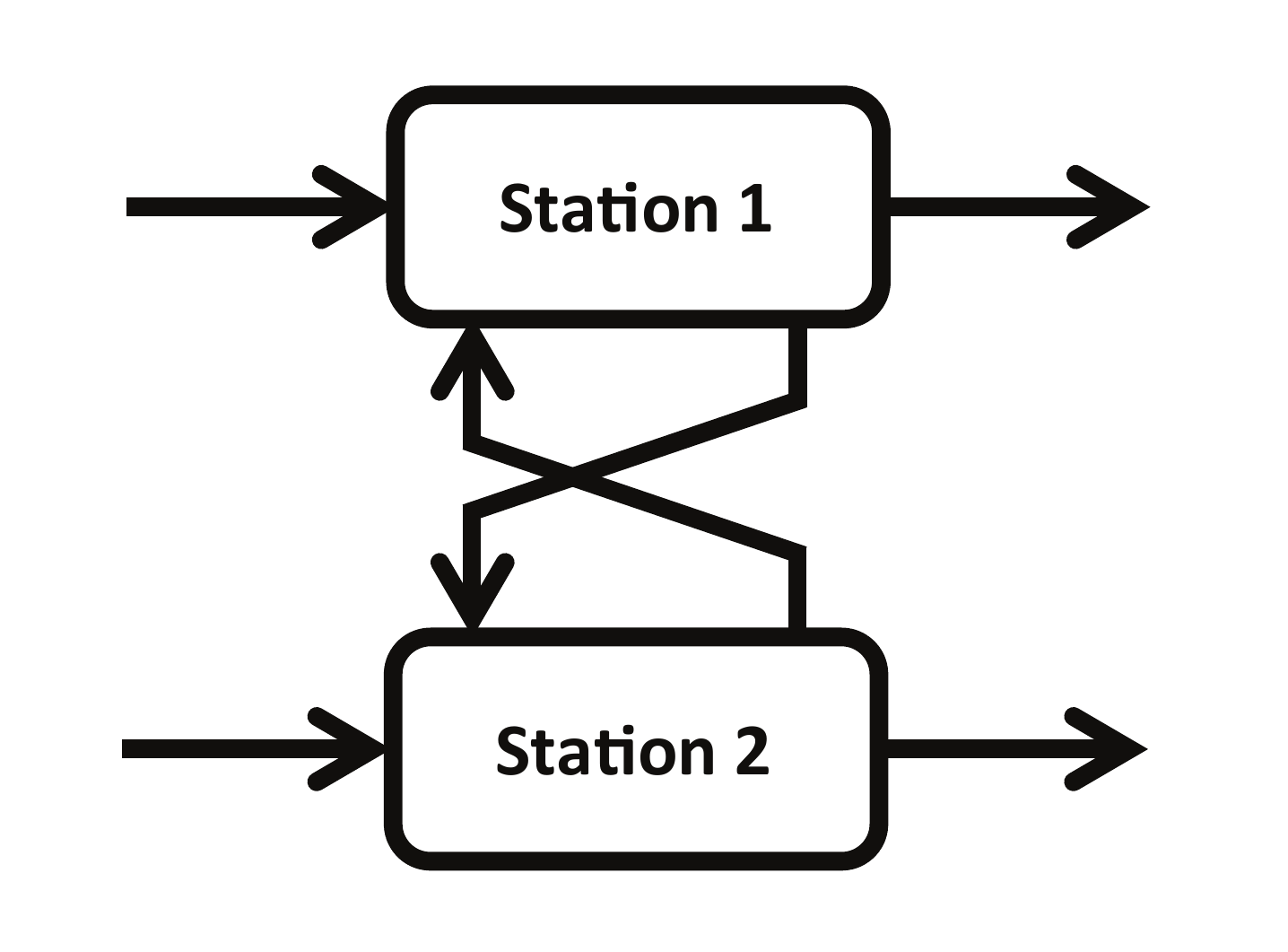}
\caption{The Queueing System }
\label{fig:system}
\end{figure}
The system consists of two service stations/queues, $Q_1$ and $Q_2$ with infinite size buffers.
Controllers $C_1$ and $C_2$ are affiliated with queues $Q_1$ and $Q_2$, respectively.
Let $X^i_t$ denote the number of customers waiting, or in service, in $Q_i,i=1,2$, at the beginning of time $t$.
Exogenous customer arrivals at $Q_i,i=1,2$, occur according to a Bernoulli process $\{A^i_t,t\in\mathbb{Z}_+\}$ with parameter $\lambda$. 
Service times of customers at $Q_i,i=1,2$ are described by geometric random variables with parameter $\mu$.
We define a Bernoulli process $\{D^i_t,t\in\mathbb{Z}_+\}$ with parameter $\mu$. Then $\{D^i_t1_{\{X^i_t\neq 0\}},t\in\mathbb{Z}_+\}$ 
describes the customer departure process from $Q_i,i=1,2$.
At any time $t$, a controller can route one of the customers in its own queue to the other queue.
Let $U^i_t$ denote the routing decision of controller $C_i$ at $t$ ($i=1,2$); if $U^i_t=1$ (resp. $0$) one customer (resp. no customer) is routed from $Q_i$ to $Q_j$ ($j\neq i$).
At any time $t$, each controller $C_i,i=1,2$, knows perfectly the number of customers $X^i_{0:t},i=1,2$, in its own queue; furthermore, it observes perfectly the arrival stream $A^i_{0:t}$ to its own queue, and the arrivals due to customers routed to its queue from the other service station up to time $t-1$, i.e. $U^j_{0:t-1}, j\neq i$.
The order of arrivals $A^i_t$, departures $D^i_t$ and controller decisions $U^i_t$ concerning the routing of customers from one queue to the other is shown in Figure \ref{fig:timeline}. The dynamic evolution of the number of customers 
$X^i_t,i=1,2$ is described by
\begin{align}
&X^1_{t+1} = \overline{X}^1_t-U^1_t+U^2_t, \label{Model:dynamic1}\\
&X^2_{t+1} = \overline{X}^2_t-U^2_t+U^1_t,\label{Model:dynamic2}
\end{align}
where for $i=1,2$,
\begin{align}
&\overline{X}^i_t = \left(X^i_t - D^i_t \right)^++A^i_t,\label{Model:dynamic3}
\end{align}
and $(x)^+ := \max(0,x)$.
We assume that the initial queue lengths $X^1_0,X^2_0$ and the processes $\{A^1_t,t \in \mathbb{Z}_+\}$, $\{A^2_t,t \in \mathbb{Z}_+\}$, $\{D^1_t,t \in \mathbb{Z}_+\}$, $\{D^2_t,t \in \mathbb{Z}_+\}$ are mutually independent
and their distributions are known by both controllers $C_1$ and $C_2$.
Let $\pi^1_0$ and $\pi^2_0$ be the PMFs on the initial queue lengths $X^1_0,X^2_0$, respectively.
At the beginning of time $t=0$, $C_1$ (resp. $C_2$) knows $X^1_0$ (resp. $X^2_0$).
Furthermore $C_1$'s (resp. $C_2$'s) knowledge of the queue length $X^2_0$ (resp. $X^1_0$) at the other station is described by the PMF $\pi^2_0$ (resp. $\pi^1_0$).
The information of controller $C_i,i=1,2$, at the moment it makes the decision $U^i_t, t=0,1,\dots$, is 
\begin{align}
I^i_t := \left\lbrace X^i_{0:t},A^i_{0:t},\overline{X}^i_{0:t},U^1_{0:t-1},U^2_{0:t-1},\pi^1_0,\pi^2_0\right\rbrace,i=1,2.
\end{align}
The controllers' routing decisions/control actions $U^i_t$ are generated according to 
\begin{align}
U^i_t = g^i_t\left(I^i_t\right), i=1,2, t\in\mathbb{Z}_+, \label{Model:defgi1}
\end{align}
where
\begin{align}
g^i_t:& (\mathbb{Z}_+)^{t+1}\times \{0,1\}^{t+1}\times (\mathbb{Z}_+)^{t+1}\times \{0,1\}^{t}\times \nonumber\\ 
&\times \{0,1\}^{t}\times\mathbb{R}^{\mathbb{Z}_+}\times\mathbb{R}^{\mathbb{Z}_+} \mapsto \mathcal{U}^i_t.\label{Model:defgi2}
\end{align}
The control action space $\mathcal{U}^i_t$ at time $t$ depends on $\overline{X}^i_t$. Specifically
\begin{align}
\mathcal{U}^i_t = \left\lbrace
\begin{array}{ll}
\{0\} & \text{when } \overline{X}^i_t=0,\\
\{0,1\}  & \text{otherwise}.
\end{array}
\right. 
\label{Model:defUspace}
\end{align}
Define $\mathcal{G}_d$ to be the set of feasible decentralized routing policies; that is
\begin{align}
\mathcal{G}_d =
\{(g^1,g^2):\quad & g^i = (g^i_0,g^i_1,\dots,g^i_t,\dots),i=1,2 \nonumber\\
		&\text{and } g^i_t \text{ is of form given by (\ref{Model:defgi1})-(\ref{Model:defgi2})}\}.
\label{Model:decpolicies}
\end{align}

\begin{figure}
\centering
\includegraphics[scale=0.5]{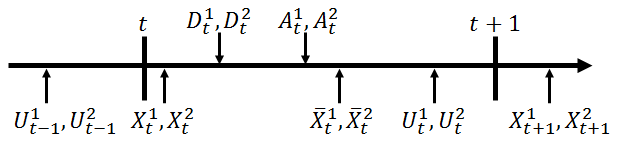}
\caption{The order of variables}
\label{fig:timeline}
\end{figure}

We study the operation of the system defined in this section, first over a finite horizon, then over an infinite horizon.

\subsection{The finite horizon problem} \label{sub:model:finite}
For the problem with a finite horizon $T$, we assume the holding cost incurred by the customers present in $Q_i$ at time $t=0,1,\dots,T-1$ is $c_t(X^i_t), i=1,2$, where $c_t(\cdot)$ is a convex and increasing function. 
Then, the objective is to determine decentralized routing policies $g \in \mathcal{G}_d$
so as to minimize
\begin{align}
J^g_T(\pi^1_0,\pi^2_0) := \mathbf{E}\left[\left.\sum_{t=0}^{T-1}\left(c_t\left(X^{1,g}_t\right)+c_t\left(X^{2,g}_t\right)\right)
\right |\pi^1_0,\pi^2_0\right]  \label{Model:finite:cost}
\footnote{1}
\end{align}
for any PMFs $\pi^1_0,\pi^2_0$ on the initial queue lengths.
\footnotetext[1]{The expectation in all equations appearing in this paper is with respect to the probability measure induced by the policy $g \in \mathcal{G}_d$.}

\subsection{The infinite horizon average cost per unit time problem}\label{sub:model:infinite}
For the infinite horizon average cost per unit time problem, we assume the holding cost incurred by the customers present in $Q_i$ at each time is 
a convex and increasing function $c_t(\cdot):=c(\cdot), i=1,2$. 
Then, the objective is to determine decentralized routing policies $g=(g^1,g^2)\in \mathcal{G}_d$
so as to minimize
\begin{align}
&J^g(\pi^1_0,\pi^2_0) \nonumber\\
:= &\limsup_{T \rightarrow \infty}
\frac{1}{T} J^g_T(\pi^1_0,\pi^2_0)
\nonumber\\
=& \limsup_{T \rightarrow \infty}
\frac{1}{T} \mathbf{E}\left[\left.\sum_{t=0}^{T-1}\left(c\left(X^{1,g}_t\right)+c\left(X^{2,g}_t\right)\right)\right|\pi^1_0,\pi^2_0\right] \label{Model:infinite:cost}
\end{align}
for any PMFs $\pi^1_0,\pi^2_0$ on the initial queue lengths.


\section{Qualitative Properties of Optimal Policies}
\label{sec:SR}
In this section we present a qualitative property of an optimal routing policy for both the finite horizon and the infinite horizon problem.
For that matter we first introduce the following notation.

We denote by $\Pi^1_t$ and $\Pi^2_t$ the PMFs on $X^1_t$ and $X^2_t$, respectively, conditional on all previous decisions $\{U^1_{0:t-1},U^2_{0:t-1}\}$. $\Pi^i_t, i=1,2$ is defined by
\begin{align}
\Pi^i_t(x) := \mathbf{P}\left( X^i_t = x |U^1_{0:t-1},U^2_{0:t-1} \right), x \in \mathbb{Z}_+.
\label{eq:Pit}
\end{align}
Similarly, we define the conditional PMFs $\overline{\Pi}^1_t$, $\overline{\Pi}^2_t$ on $\overline{X}^1_t$ and $\overline{X}^2_t$, respectively, as follows.
\begin{align}
&\overline{\Pi}^i_t(x) := \mathbf{P}\left( \overline{X}^i_t = x |U^1_{0:t-1},U^2_{0:t-1} \right),i = 1,2, x \in \mathbb{Z}_+.
\label{eq:Pitbar} 
\end{align}
Note that for any policy $g \in \mathcal{G}_d$ all the above defined PMFs are functions of $\{U^1_{0:t-1},U^2_{0:t-1}\}$.
Since both controllers $C_1$ and $C_2$ know $\{U^1_{0:t-1},U^2_{0:t-1}\}$ at time $t$, 
the PMFs defined by (\ref{eq:Pit})-(\ref{eq:Pitbar}) are common knowledge \cite{aumann1976agreeing} between the controllers.

We take $\overline{X}^i_t,i=1,2$, to be station $Q_i$'s state at time $t$.
Combining (\ref{Model:dynamic1})-(\ref{Model:dynamic3}) we obtain, for $i=1,2,$
\begin{align}
\overline{X}^i_{t+1}=&\left(\overline{X}^i_t-U^i_t+U^j_t -D^i_{t+1}\right)^++A^i_{t+1}\nonumber\\
	     :=&f^i_t\left(\overline{X}^i_t,U^i_t,U^j_t,W^i_t\right),
	     \label{SR:eqdynamic}
\end{align}
where the random variables $W^i_t:=(A^i_{t+1},D^i_{t+1}),i=1,2,t=0,1,\dots$ are mutually independent.
\\
The holding cost at time $t,t=0,1,\dots$ can be written as
\begin{align}
&\rho_t\left(\overline{X}^1_{t},\overline{X}^2_{t},U^1_{t},U^2_{t}\right) \nonumber\\
:=& c_{t+1}\left(\overline{X}^1_{t}-U^1_{t}+U^2_t\right)+c_{t+1}\left(\overline{X}^2_{t}-U^2_{t}+U^1_t\right)\nonumber\\
=& c_{t+1}\left(X^1_{t+1}\right)+c_{t+1}\left(X^2_{t+1}\right).\label{SR:cost}
\end{align}
Note that for any time horizon $T$ the total expected holding cost due to (\ref{SR:cost}) is equivalent to the total expected holding cost defined by (\ref{Model:finite:cost}) since for any policy $g \in \mathcal{G}_d$
\begin{align}
&J^g_T(\pi^1_0,\pi^2_0) \nonumber\\=
&\mathbf{E}\left[\sum_{t=0}^{T-1}\left(c_t\left(X^{1,g}_t\right)+c_t\left(X^{2,g}_t\right)\right)\right]\nonumber\\
=& \mathbf{E}\left[\sum_{t=0}^{T-2}\left(c_{t+1}\left(X^{1,g}_{t+1}\right)+c_{t+1}\left(X^{2,g}_{t+1}\right)\right)\right]\nonumber\\
&+\mathbf{E}\left[c_{0}\left(X^1_{0}\right)+c_{0}\left(X^2_{0}\right)\right] \nonumber\\
=& \mathbf{E}\left[\sum_{t=0}^{T-2}\rho_t\left(\overline{X}^{1,g}_{t},\overline{X}^{2,g}_{t},U^1_{t},U^2_{t}\right) \right]+\mathbf{E}\left[c_{0}\left(X^1_{0}\right)+c_{0}\left(X^2_{0}\right)\right] .
\end{align}
With the above notation and definition of system state and instantaneous holding cost, we have a dynamic team problem with non-classical information structure where the common information between the two controllers at any time $t$ is their decisions/control actions up to time $t-1$. This information structure is the control sharing information structure investigated in \cite{mahajan2012optimal}. 
Furthermore, the independent assumption we made about the exogenous arrivals and the service processes is the same as the assumptions made about the noise variables in \cite{mahajan2012optimal}. 
Therefore, the following Properties \ref{P:indep}-\ref{P:sep2} hold by the results in \cite{mahajan2012optimal}.
\begin{prop} 
\label{P:indep}
For each $t$, and any given $g^1_s(.),g^2_s(.),s\leq t$, we have 
\begin{align}
&\mathbf{P}\left(I^1_t=i^1_t,I^2_t=i^2_t | U^{1}_{0:t-1},U^{2}_{0:t-1}\right) \nonumber\\
=& \mathbf{P}\left(I^1_t=i^1_t| U^{1}_{0:t-1},U^{2}_{0:t-1}\right)  \mathbf{P}\left(I^2_t=i^2_t| U^{1}_{0:t-1},U^{2}_{0:t-1}\right).
\end{align}
\end{prop}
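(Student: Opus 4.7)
The plan is to prove the property by induction on $t$, leveraging the fact that the exogenous primitives are independent across the two stations and that the only coupling between the two controllers enters through their past control actions, which are precisely what we condition upon.

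\textbf{Base case} ($t=0$): The private information $I^i_0 = \{X^i_0, A^i_0, \overline{X}^i_0, \pi^1_0, \pi^2_0\}$ is a function of $X^i_0$ and the noise $(A^i_0, D^i_0)$ together with the common priors. By the model assumptions $X^1_0 \perp X^2_0$ and $(A^1_0, D^1_0) \perp (A^2_0, D^2_0)$; since there are no past controls at $t=0$, this yields unconditional (hence trivially conditional) independence of $I^1_0$ and $I^2_0$.

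\textbf{Inductive step}: Suppose the factorization holds at $t$. I would decompose $\mathbf{P}(I^1_{t+1}, I^2_{t+1} \mid U^1_{0:t}, U^2_{0:t}) = \mathbf{P}(I^1_t, I^2_t \mid U^1_{0:t}, U^2_{0:t}) \cdot \mathbf{P}(\Delta^1_{t+1}, \Delta^2_{t+1} \mid I^1_t, I^2_t, U^1_{0:t}, U^2_{0:t})$, where $\Delta^i_{t+1} = \{X^i_{t+1}, A^i_{t+1}, \overline{X}^i_{t+1}\}$ is the new information appearing in $I^i_{t+1}$ (the controls $U^1_t, U^2_t$ are already in the conditioning). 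From the dynamics (\ref{Model:dynamic1})--(\ref{SR:eqdynamic}), $\Delta^i_{t+1}$ is a deterministic function of $\overline{X}^i_t \in I^i_t$, $(U^i_t, U^j_t)$, and the fresh noise $(A^i_{t+1}, D^i_{t+1})$; since the pairs $(A^1_{t+1}, D^1_{t+1})$ and $(A^2_{t+1}, D^2_{t+1})$ are mutually independent and independent of the past, the second factor splits cleanly across $i = 1,2$.

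\textbf{Main obstacle}: the first factor. The inductive hypothesis only guarantees independence of $I^1_t$ and $I^2_t$ conditional on $U^1_{0:t-1}, U^2_{0:t-1}$, whereas here we also condition on $(U^1_t, U^2_t)$, which are nontrivial functions of $(I^1_t, I^2_t)$. The resolution exploits that $U^i_t = g^i_t(I^i_t)$ depends only on controller $i$'s own information: applying Bayes' rule and substituting the inductive hypothesis yields $\mathbf{P}(I^1_t, I^2_t \mid U^1_{0:t}, U^2_{0:t}) \propto \mathbf{P}(I^1_t \mid U^1_{0:t-1}, U^2_{0:t-1})\, \mathbf{1}_{\{g^1_t(I^1_t)=U^1_t\}} \cdot \mathbf{P}(I^2_t \mid U^1_{0:t-1}, U^2_{0:t-1})\, \mathbf{1}_{\{g^2_t(I^2_t)=U^2_t\}}$, a product of a function of $I^1_t$ alone and a function of $I^2_t$ alone. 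Combining the two factorizations and marginalizing out $(I^1_t, I^2_t)$ delivers the desired factorization at time $t+1$, closing the induction.
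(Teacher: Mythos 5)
Your proposal is correct and follows essentially the same route as the proof the paper invokes (Proposition 2 of the cited control-sharing paper): induction on $t$, splitting off the fresh noise $(A^i_{t+1},D^i_{t+1})$ which is independent across stations, and handling the extra conditioning on $(U^1_t,U^2_t)$ via Bayes' rule together with the indicators $\mathbf{1}_{\{g^i_t(I^i_t)=U^i_t\}}$, which preserve the product form because each control depends only on its own controller's information. The only cosmetic remark is that no marginalization over $(I^1_t,I^2_t)$ is needed at the end, since $I^i_t$ is a component of $I^i_{t+1}$; the product of your two factorizations is already the desired joint conditional law.
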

\begin{proof}
Same as that of Proposition 2 in \cite{mahajan2012optimal}.
\end{proof}
Property \ref{P:indep} says that the two subsystems are independent conditional on past control actions. 

Because of Property \ref{P:indep} and \eqref{SR:eqdynamic}, each controller $C_i, i=1,2$ can generate its decision at any time $t$ by using only its current local state $\overline{X}^i_t$ and past decisions of both controllers. This assertion is established by the following property.
\begin{prop} 
\label{P:sep1}
For the routing problems formulated in Section \ref{sec:model}, without loss of optimality we can restrict attention to routing policies of the form
\begin{align}
U^1_t =g^1_t\left(\overline{X}^1_t,U^{1}_{0:t-1},U^{2}_{0:t-1}\right),\\
U^2_t =g^2_t\left(\overline{X}^2_t,U^{1}_{0:t-1},U^{2}_{0:t-1}\right).
\end{align}
\end{prop}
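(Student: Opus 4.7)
My approach is to leverage the common information / control-sharing framework of \cite{mahajan2012optimal}, which is tailored exactly to the kind of structural result claimed here. The plan is to recast the two-controller routing problem as a coordinator's problem over the common information $C_t := \{U^1_{0:t-1}, U^2_{0:t-1}, \pi^1_0, \pi^2_0\}$ and to argue that, within this reformulation, the relevant private information of controller $C_i$ collapses to the single scalar $\overline{X}^i_t$.

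First, I would verify that our model satisfies the hypotheses used in \cite{mahajan2012optimal}. Three items need checking. (i) The local state $\overline{X}^i_t$ evolves as a controlled Markov process driven by the two current actions $(U^i_t,U^j_t)$ and an exogenous noise $W^i_t=(A^i_{t+1},D^i_{t+1})$; this is precisely \eqref{SR:eqdynamic}. (ii) The primitive variables $X^1_0,X^2_0,\{A^i_t\},\{D^i_t\}$ for $i=1,2$ are mutually independent, so the two noise processes are independent of each other and of the initial states. (iii) The past actions $U^1_{0:t-1},U^2_{0:t-1}$ are common knowledge, and Property \ref{P:indep} gives conditional independence of $I^1_t$ and $I^2_t$ given this common information. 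These are exactly the structural ingredients on which the sufficient-statistic result for control-sharing information structures in \cite{mahajan2012optimal} rests.

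Next, the coordinator's formulation: a fictitious coordinator observes only $C_t$ and selects prescriptions $\gamma^i_t$ mapping $C_i$'s private information to $\mathcal{U}^i_t$, with $U^i_t=\gamma^i_t(I^i_t)$ then computed locally. Because the one-step cost $\rho_t$ depends on the private histories only through $(\overline{X}^1_t,\overline{X}^2_t)$, and because the Markov recursion \eqref{SR:eqdynamic} for $\overline{X}^i_{t+1}$ depends on $C_i$'s private history only through $\overline{X}^i_t$ and the actions $(U^i_t,U^j_t)$, a standard dynamic-programming / sufficient-statistic argument inside the coordinator's MDP shows there is no loss of optimality in restricting $\gamma^i_t$ to depend on the private information only through $\overline{X}^i_t$. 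Translating back to the original problem yields the asserted form $U^i_t=g^i_t(\overline{X}^i_t,U^1_{0:t-1},U^2_{0:t-1})$, for both the finite-horizon and the infinite-horizon average-cost criterion (the latter by the usual vanishing-discount argument, again borrowed from \cite{mahajan2012optimal}).

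The main subtlety, and where care is needed, lies in the sufficiency step of the coordinator's DP: one must show that, conditional on common information, $\overline{X}^i_t$ is a sufficient statistic of $I^i_t$ for both the current cost and the future evolution of \emph{both} queues. This is precisely the content of Property \ref{P:indep} together with the noise independence in (ii): any additional component of $I^i_t$ beyond $\overline{X}^i_t$ is, given $C_t$, independent of $\overline{X}^j_t$ and of the future noise sequences $W^1_{t:},W^2_{t:}$, hence carries no predictive value for any quantity the coordinator needs. Once this is in place, the remaining manipulations are routine and follow verbatim the proofs in \cite{mahajan2012optimal}.
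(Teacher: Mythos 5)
Your argument is correct and follows essentially the same route as the paper, whose proof of this property consists entirely of invoking Proposition 1 of \cite{mahajan2012optimal}: you verify the control-sharing hypotheses (local states $\overline{X}^i_t$ evolving as controlled Markov chains driven by mutually independent noise, past actions common knowledge, conditional independence of the private informations as in Property \ref{P:indep}) and then defer to the sufficient-statistic machinery of that reference. The only cosmetic difference is that you phrase the reduction through the coordinator's problem with prescriptions, which is really the tool the paper reserves for the stronger Property \ref{P:sep2}; Proposition 1 of the cited work is usually obtained by a more elementary person-by-person argument, but your route reaches the same conclusion.
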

\begin{proof}
Same as that of Proposition 1 in \cite{mahajan2012optimal}.
\end{proof}
Using the common information approach in \cite{nayyar2013decentralized},
we can refine the result of Property \ref{P:sep1} as follows.
\begin{prop}
\label{P:sep2}
For the two routing problems formulated in Section \ref{sec:model}, without loss of optimality we can restrict attention to routing policies of the form
\begin{align}
U^1_t =g^1_t\left(\overline{X}^1_t,\overline{\Pi}^1_t,\overline{\Pi}^2_t\right),\label{P:sep2:equ1}\\
U^2_t =g^2_t\left(\overline{X}^1_t,\overline{\Pi}^1_t,\overline{\Pi}^2_t\right).\label{P:sep2:equ2}
\end{align}
\end{prop}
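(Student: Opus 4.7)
The plan is to invoke the common information approach of Nayyar et al. (2013) as a refinement of Property \ref{P:sep1}. Starting from the policy form provided by Property \ref{P:sep1}, namely $U^i_t=g^i_t(\overline{X}^i_t,U^1_{0:t-1},U^2_{0:t-1})$, I would introduce a virtual coordinator that observes only the common information $C_t:=\{U^1_{0:t-1},U^2_{0:t-1},\pi^1_0,\pi^2_0\}$ and, at each time $t$, selects prescriptions $\gamma^i_t:\mathbb{Z}_+\to\{0,1\}$ (one per controller) mapping the private state $\overline{X}^i_t$ to the action $U^i_t$. Any policy of the form given in Property \ref{P:sep1} is equivalent to some coordinator policy of this kind, so the original decentralized problem is equivalent to a centralized (single agent) sequential decision problem faced by the coordinator.

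Next, I would identify a sufficient statistic for the coordinator. From Property \ref{P:indep}, conditional on $C_t$ the two local states $\overline{X}^1_t$ and $\overline{X}^2_t$ are independent, hence their joint conditional PMF factorizes as $\overline{\Pi}^1_t\otimes\overline{\Pi}^2_t$. I would then verify two things. First, the instantaneous expected cost at time $t$, given the coordinator's prescription $(\gamma^1_t,\gamma^2_t)$, depends on the past only through $(\overline{\Pi}^1_t,\overline{\Pi}^2_t)$, because
\begin{align}
\mathbf{E}\!\left[\rho_t(\overline{X}^1_t,\overline{X}^2_t,U^1_t,U^2_t)\mid C_t,\gamma^1_t,\gamma^2_t\right]
=\sum_{x^1,x^2}\overline{\Pi}^1_t(x^1)\overline{\Pi}^2_t(x^2)\rho_t\!\bigl(x^1,x^2,\gamma^1_t(x^1),\gamma^2_t(x^2)\bigr).
\end{align}
Second, I would show that $(\overline{\Pi}^1_t,\overline{\Pi}^2_t)$ evolves as a controlled Markov chain under the coordinator's prescriptions. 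Using the dynamics \eqref{SR:eqdynamic} together with the conditional independence from Property \ref{P:indep} and the mutual independence of the primitive noises $W^i_t$, $(\overline{\Pi}^1_{t+1},\overline{\Pi}^2_{t+1})$ can be written as a deterministic function of $(\overline{\Pi}^1_t,\overline{\Pi}^2_t,\gamma^1_t,\gamma^2_t,U^1_t,U^2_t)$, via a Bayes-style update that (i) marginalizes over $\overline{X}^i_t$ consistent with the observed $U^i_t=\gamma^i_t(\overline{X}^i_t)$ and (ii) propagates through $f^i_t$.

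These two properties identify $(\overline{\Pi}^1_t,\overline{\Pi}^2_t)$ as an information state for the coordinator's MDP, so by standard dynamic programming arguments for POMDPs there exists an optimal coordinator policy of the form $(\gamma^1_t,\gamma^2_t)=\psi_t(\overline{\Pi}^1_t,\overline{\Pi}^2_t)$. Unpacking this back into individual controllers gives $U^i_t=\gamma^i_t(\overline{X}^i_t)=g^i_t(\overline{X}^i_t,\overline{\Pi}^1_t,\overline{\Pi}^2_t)$, which is the desired form \eqref{P:sep2:equ1}--\eqref{P:sep2:equ2}. For the infinite horizon average cost version, the same structure yields a stationary optimal prescription provided the underlying optimality equation admits a solution; this can be invoked from the infinite horizon treatment to be developed in Section \ref{sec:infinite} (or directly from Nayyar et al. (2013)).

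The main obstacle is the Bayesian update step: showing rigorously that $\overline{\Pi}^i_{t+1}$ is a deterministic function of $(\overline{\Pi}^i_t,\gamma^i_t,U^i_t,U^j_t)$. One has to be careful that the prescription $\gamma^i_t$ (not just the realized action $U^i_t$) enters the update, since the coordinator infers information about $\overline{X}^i_t$ from the event $\{\gamma^i_t(\overline{X}^i_t)=U^i_t\}$; the action $U^j_t$ from the other station also appears because of the cross-coupling in \eqref{Model:dynamic1}--\eqref{Model:dynamic2}. Once this update is written out carefully and combined with the conditional independence of Property \ref{P:indep}, the rest of the argument is a direct application of the common information framework.
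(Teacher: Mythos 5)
Your proposal is correct and is essentially the paper's own argument: the paper proves Property \ref{P:sep2} simply by citing Theorem 1 of \cite{mahajan2012optimal}, and the coordinator--prescription construction you describe (common information $\{U^1_{0:t-1},U^2_{0:t-1}\}$, prescriptions $\gamma^i_t$, conditional independence from Property \ref{P:indep}, and $(\overline{\Pi}^1_t,\overline{\Pi}^2_t)$ as the coordinator's information state with a prescription-dependent Bayes update) is precisely the proof underlying that citation. Your flagged subtlety --- that the prescription $\gamma^i_t$, not just the realized action, enters the belief update --- is indeed the key point of the control-sharing argument, so the proposal is sound.
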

\begin{proof}
Same as that of Theorem 1 in \cite{mahajan2012optimal}.
\end{proof}
The result of Property \ref{P:sep2} will play a central role in the analysis of the decentralized routing problems formulated in this paper.

\section{The Decentralized Policy $\hat{g}$ and Preliminary results}
\label{sec:prelim}
In this section, we specify a decentralized policy $\hat{g}$ and 
identify an information state for each controller.
Furthermore, we develop some preliminary results for both the finite horizon problem and the infinite horizon problem formulated in Section \ref{sec:model}.

To specify policy $\hat{g}$, we first define the upper bound and lower bound on the support of the PMF, $\Pi^i_t, i=1,2$ as
\begin{align}
& UB^i_t := \max(x:\Pi^i_t(x) \neq 0),\label{eq:UBi}\\
& LB^i_t := \min(x:\Pi^i_t(x) \neq 0).\label{eq:LBi}\\
& UB_t := \max(UB^1_t,UB^2_t),\label{eq:UB}\\
& LB_t := \min(LB^1_t,LB^2_t).\label{eq:LB}
\end{align}
Similarly, we define the bounds on the support of the PMF, $\overline{\Pi}^i_t, i=1,2$ as
\begin{align}
& \overline{UB}^i_t := \max(x:\overline{\Pi}^i_t(x) \neq 0),\label{eq:UBbari}\\
& \overline{LB}^i_t := \min(x:\overline{\Pi}^i_t(x) \neq 0),\label{eq:LBbari}\\
& \overline{UB}_t := \max(\overline{UB}^1_t,\overline{UB}^2_t),\label{eq:UBbar}\\
& \overline{LB}_t := \min(\overline{UB}^1_t,\overline{UB}^2_t).\label{eq:LBbar}
\end{align}

Using the above bounds, we specify the policy $\hat{g}:= (\hat{g}^1,\hat{g}^2)$ as follows: 
\begin{align}
&U^i_t = \hat{g}^i_t\left(\overline{X}^i_t,\overline{UB}_t,\overline{LB}_t\right) 
= \left\lbrace\begin{array}{l}
1, \text{ when } \overline{X}^i_t \geq  TH_t,\\
0, \text{ when } \overline{X}^i_t < TH_t,
\end{array}\right.
\label{eq:ghat}
\end{align}
where
\begin{align}
& TH_t=
\frac{1}{2}\left(\overline{UB}_t+\overline{LB}_t\right). \label{eq:th}
\end{align}
Under $\hat{g}$, each controller routes a customer to the other queue when
$\overline{X}^i_t,i=1,2$, the queue length of its own station at the time of decision, is greater than or equal to the threshold given by (\ref{eq:th}).

Note that this decentralized routing policy $\hat{g}$ is indeed of the form asserted by Property \ref{P:sep2}
since the upper and lower bounds $\overline{UB}_t$ and $\overline{LB}_t$ are both functions of 
the PMFs $\overline{\Pi}^1_t,\overline{\Pi}^2_t$.
Therefore, the threshold $TH_t$, as a function of $\overline{\Pi}^1_t,\overline{\Pi}^2_t$, is common knowledge between the controllers. 
Using the common information,
each controller can compute the threshold according to 
(\ref{eq:th}) individually, and $\hat{g}$ can be implemented in a decentralized manner.

Under policy $\hat{g}$, the evolution of the bounds defined by (\ref{eq:UB})-(\ref{eq:LBbar}) are determined by the following lemma.
\begin{lemma}
\label{lm:bounds}
At any time $t$ we have
\begin{align}
&\overline{UB}^{\hat{g}}_{t} = UB^{\hat{g}}_{t}+1 , \quad
\overline{LB}^{\hat{g}}_{t} = \left(LB^{\hat{g}}_{t}-1 \right)^+ .
\label{eq:UBLBbarupdate}
\end{align}
When $(U^{1,\hat{g}}_t,U^{2,\hat{g}}_t)=(0,0)$
\begin{align}
&UB^{\hat{g}}_{t+1} = \left \lceil TH_t \right \rceil -1 , \quad
LB^{\hat{g}}_{t+1} = \overline{LB}^{\hat{g}}_{t}
\label{eq:UBLBupdate00}
\end{align}
When $(U^{1,\hat{g}}_t,U^{2,\hat{g}}_t)=(1,1)$
\begin{align}
&UB^{\hat{g}}_{t+1} = \overline{UB}^{\hat{g}}_{t} , \quad
LB^{\hat{g}}_{t+1} = \left \lceil TH_t \right \rceil
\label{eq:UBLBupdate11}
\end{align}
When $(U^{i,\hat{g}}_t,U^{j,\hat{g}}_t)=(1,0), i =1,2, j\neq i$
\begin{align}
&UB^{\hat{g}}_{t+1} = \max\left(\overline{UB}^{i,\hat{g}}_{t} -1, 
\left \lceil TH_t \right \rceil \right)  \\
&LB^{\hat{g}}_{t+1} = \min\left( \overline{LB}^{j,\hat{g}}_{t}+1,
\left\lceil TH_t \right \rceil-1 \right)
\label{eq:UBLBupdate10}
\end{align}
where $\lfloor x \rfloor = \text{maximum integer}\leq x$, and $\lceil x \rceil = \text{minimum integer}\geq x$.
\begin{flushright}
$\square$
\end{flushright}
\end{lemma}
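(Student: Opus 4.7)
The plan is to prove Lemma \ref{lm:bounds} by direct case analysis, exploiting the threshold rule that defines $\hat{g}$ together with the deterministic update equations in (\ref{Model:dynamic3}) and (\ref{Model:dynamic1})--(\ref{Model:dynamic2}).

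First I would establish the within-period relations $\overline{UB}^{\hat{g}}_t = UB^{\hat{g}}_t + 1$ and $\overline{LB}^{\hat{g}}_t = (LB^{\hat{g}}_t - 1)^+$. By (\ref{Model:dynamic3}), for every $x$ in the support of $\Pi^{i,\hat{g}}_t$ the quantity $\overline{X}^i_t = (x - D^i_t)^+ + A^i_t$ attains each of $(x-1)^+$, $x$, $x+1$ with positive probability as $(A^i_t, D^i_t)$ ranges over $\{0,1\}^2$, and $(A^i_t, D^i_t)$ is independent of $X^i_t$ given past controls. Hence the extremes of the support of $\overline{\Pi}^{i,\hat{g}}_t$ are $UB^{i,\hat{g}}_t + 1$ and $(LB^{i,\hat{g}}_t - 1)^+$; taking the max/min over $i \in \{1,2\}$ and using monotonicity of $(\cdot)^+$ yields the stated formulas.

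For the post-decision update, the threshold rule (\ref{eq:ghat}) identifies $\{U^{i,\hat{g}}_t = 0\}$ with $\{\overline{X}^i_t < TH_t\}$ and $\{U^{i,\hat{g}}_t = 1\}$ with $\{\overline{X}^i_t \geq TH_t\}$. Because $\overline{X}^i_t \in \mathbb{Z}_+$ and $TH_t \in \tfrac{1}{2}\mathbb{Z}$, these become $\overline{X}^i_t \leq \lceil TH_t \rceil - 1$ and $\overline{X}^i_t \geq \lceil TH_t \rceil$, respectively. I would then split into the four sub-cases $(U^{1,\hat{g}}_t, U^{2,\hat{g}}_t) \in \{(0,0),(1,1),(1,0),(0,1)\}$; in each, the relation $X^i_{t+1} = \overline{X}^i_t - U^i_t + U^j_t$ is deterministic given $\overline{X}^i_t$ and the observed controls, so conditioning the support of $\overline{\Pi}^{i,\hat{g}}_t$ on the threshold constraint and translating by $-U^i_t + U^j_t$ yields the support of $\Pi^{i,\hat{g}}_{t+1}$. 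Reading off its extremes and aggregating over $i$ via (\ref{eq:UB})--(\ref{eq:LB}) produces the formulas in the lemma.

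The step I expect to require the most care is verifying that the stated bounds are actually attained (i.e., that the extremes of the conditional support match the formulas exactly rather than lying strictly inside them). This hinges on the inequalities $\overline{LB}^{\hat{g}}_t \leq TH_t \leq \overline{UB}^{\hat{g}}_t$, which are immediate from the definition $TH_t = \tfrac{1}{2}(\overline{UB}_t + \overline{LB}_t)$ in (\ref{eq:th}), together with an induction on $t$ showing that under $\hat{g}$ the supports of $\Pi^{i,\hat{g}}_t$ and $\overline{\Pi}^{i,\hat{g}}_t$ remain contiguous enough for the extreme values to be realized after conditioning on the observed controls. Once these attainment checks are dispatched in all four sub-cases, assembling the resulting $UB^{i,\hat{g}}_{t+1}$ and $LB^{i,\hat{g}}_{t+1}$ gives the claimed expressions.
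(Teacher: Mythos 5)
Your proposal follows essentially the same route as the paper's proof: the within-period relations come from the one-step arrival/departure shifting the support by at most one in each direction, and the post-decision bounds come from a case analysis on the four control pairs, converting the threshold rule into the support restrictions $\overline{X}^i_t \leq \lceil TH_t\rceil - 1$ or $\overline{X}^i_t \geq \lceil TH_t\rceil$ and then translating by $-U^i_t + U^j_t$. The attainment issue you flag is in fact glossed over in the paper's own proof (which asserts the equalities directly), and it is harmless downstream since Corollary \ref{cor:difdec} only uses the resulting inequalities.
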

\begin{proof}
See Appendix \ref{app:prelim}
\end{proof}

Corollary \ref{cor:difdec} below follows directly form (\ref{eq:UBLBbarupdate})-(\ref{eq:UBLBupdate10}) in Lemma \ref{lm:bounds}.
\begin{corollary}
\label{cor:difdec}
Under policy $\hat{g}$, 
\begin{align}
&UB^{\hat{g}}_{t+1}-LB^{\hat{g}}_{t+1} \nonumber\\
\leq & \left\lbrace
\begin{array}{ll}
\left\lceil\frac{1}{2}\left(UB^{\hat{g}}_t -LB^{\hat{g}}_t\right)\right \rceil & \text{ when }(U^{1,\hat{g}}_t,U^{2,\hat{g}}_t)=(0,0),\\
 UB^{\hat{g}}_{t}  -LB^{\hat{g}}_{t} &\text{ otherwise.}
\end{array}\right. 
\label{eq:UBLB}
\end{align}
Moreover, if $UB^{\hat{g}}_{t_0}-LB^{\hat{g}}_{t_0} \leq 1$ for some time $t_0$, then
\begin{align}
\left(UB^{\hat{g}}_t-LB^{\hat{g}}_t\right)  \leq 1 \text{ for all } t \geq t_0.
\label{eq:UBLBevenless1}
\end{align}
\begin{flushright}
$\square$
\end{flushright}
\end{corollary}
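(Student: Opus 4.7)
The plan is to derive both claims by a direct case analysis on the four possible values of $(U^{1,\hat{g}}_t,U^{2,\hat{g}}_t)$, substituting into the update formulas of Lemma~\ref{lm:bounds}. The arithmetic backbone will be that (\ref{eq:UBLBbarupdate}) gives $\overline{UB}^{\hat{g}}_t = UB^{\hat{g}}_t + 1$ and $\overline{LB}^{\hat{g}}_t \geq LB^{\hat{g}}_t - 1$, so
\[
\overline{UB}^{\hat{g}}_t - \overline{LB}^{\hat{g}}_t \;\leq\; \bigl(UB^{\hat{g}}_t - LB^{\hat{g}}_t\bigr) + 2,
\]
which I will use throughout alongside $TH_t = \tfrac{1}{2}(\overline{UB}^{\hat{g}}_t + \overline{LB}^{\hat{g}}_t)$.

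The $(0,0)$ case is the only one in which the gap is genuinely halved. Starting from (\ref{eq:UBLBupdate00}) and pulling the integer $\overline{LB}^{\hat{g}}_t$ inside the ceiling, I will obtain
\[
UB^{\hat{g}}_{t+1} - LB^{\hat{g}}_{t+1} \;=\; \lceil TH_t\rceil - 1 - \overline{LB}^{\hat{g}}_t \;=\; \left\lceil \tfrac{1}{2}\bigl(\overline{UB}^{\hat{g}}_t - \overline{LB}^{\hat{g}}_t\bigr)\right\rceil - 1,
\]
and the backbone inequality then collapses this to the claimed $\lceil \tfrac{1}{2}(UB^{\hat{g}}_t - LB^{\hat{g}}_t)\rceil$. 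The $(1,1)$ case is symmetric: from (\ref{eq:UBLBupdate11}) the gap equals $\overline{UB}^{\hat{g}}_t - \lceil TH_t\rceil = \lfloor \tfrac{1}{2}(\overline{UB}^{\hat{g}}_t - \overline{LB}^{\hat{g}}_t)\rfloor$, which the backbone bounds by $UB^{\hat{g}}_t - LB^{\hat{g}}_t$ in the relevant regime $UB^{\hat{g}}_t - LB^{\hat{g}}_t \geq 1$. For the mixed $(1,0)$ and $(0,1)$ cases I will substitute the $\max/\min$ updates of Lemma~\ref{lm:bounds} and verify branch by branch that the gap cannot exceed $UB^{\hat{g}}_t - LB^{\hat{g}}_t$, by comparing each branch against $\lceil TH_t\rceil$ and $\lceil TH_t\rceil - 1$.

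For the stability assertion (\ref{eq:UBLBevenless1}) I will induct on $t \geq t_0$: assuming $UB^{\hat{g}}_t - LB^{\hat{g}}_t \leq 1$, the $(0,0)$ bound above gives $\lceil 1/2 \rceil = 1$, and in each of the remaining three decision cases I will substitute $\overline{UB}^{\hat{g}}_t - \overline{LB}^{\hat{g}}_t \leq 3$ into Lemma~\ref{lm:bounds} and read off $UB^{\hat{g}}_{t+1} - LB^{\hat{g}}_{t+1} \leq 1$ directly.

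The hardest part will be the bookkeeping at the boundary $LB^{\hat{g}}_t = 0$, where $\overline{LB}^{\hat{g}}_t = 0$ instead of $LB^{\hat{g}}_t - 1$; this asymmetry is precisely what forces the ``$+2$'' slack in the backbone inequality, and it must be tracked through each case so that no ceiling rounds unfavorably, and so that the degenerate subcase $UB^{\hat{g}}_t = LB^{\hat{g}}_t$ (where part~(\ref{eq:UBLB}) becomes delicate but part~(\ref{eq:UBLBevenless1}) still needs verification) is separately checked. Beyond this, the proof is a routine floor/ceiling computation.
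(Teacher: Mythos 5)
Your plan is essentially the verification the paper omits: the paper gives no proof of Corollary~\ref{cor:difdec}, asserting only that it ``follows directly'' from Lemma~\ref{lm:bounds}, and your case-by-case substitution using the slack $\overline{UB}^{\hat g}_t-\overline{LB}^{\hat g}_t\le \bigl(UB^{\hat g}_t-LB^{\hat g}_t\bigr)+2$ is exactly the intended computation. The arithmetic checks out: in the $(0,0)$ case the gap equals $\bigl\lceil\tfrac12(\overline{UB}^{\hat g}_t-\overline{LB}^{\hat g}_t)\bigr\rceil-1\le\bigl\lceil\tfrac12(UB^{\hat g}_t-LB^{\hat g}_t)\bigr\rceil$, and in the remaining three cases the gap is at most $\bigl\lfloor\tfrac12(UB^{\hat g}_t-LB^{\hat g}_t)\bigr\rfloor+1$, which is $\le UB^{\hat g}_t-LB^{\hat g}_t$ exactly when $UB^{\hat g}_t-LB^{\hat g}_t\ge 1$; this also yields \eqref{eq:UBLBevenless1} since a gap of at most $1$ gives $\overline{UB}^{\hat g}_t-\overline{LB}^{\hat g}_t\le 3$ and hence a gap of at most $1$ at $t+1$ in every branch. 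The one point you call ``delicate'' is in fact a (harmless) defect of the statement rather than of your plan: when $UB^{\hat g}_t=LB^{\hat g}_t=x\ge1$ and $(U^{1,\hat g}_t,U^{2,\hat g}_t)=(1,1)$, Lemma~\ref{lm:bounds} gives $UB^{\hat g}_{t+1}=x+1$ and $LB^{\hat g}_{t+1}=x$, so the gap grows from $0$ to $1$ and the ``otherwise'' line of \eqref{eq:UBLB} is violated (similarly in the mixed cases). You cannot ``verify branch by branch that the gap cannot exceed $UB^{\hat g}_t-LB^{\hat g}_t$'' there; instead, restrict the first display to $UB^{\hat g}_t-LB^{\hat g}_t\ge1$ or weaken the ``otherwise'' bound to $\max\bigl(UB^{\hat g}_t-LB^{\hat g}_t,1\bigr)$. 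This does not affect \eqref{eq:UBLBevenless1}, which is the only part invoked later in the paper.
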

Corollary \ref{cor:difdec} shows that
the difference between the highest possible number of customers in $Q_1$ or $Q_2$ and the lowest possible number of customers in $Q_1$ or $Q_2$ is non-increasing under the policy $\hat{g}$. 
Furthermore, the difference is reduced by half when there is no customer routed from one queue to another one.



\section{The finite horizon problem}
\label{sec:finite}
In this section, we consider the finite horizon problem formulated in Section \ref{sub:model:finite},
under the additional condition $X^1_0 =X^2_0=x_0$, where $x_0$ is arbitrary but fixed, and is common knowledge between $C_1$ and $C_2$.
\subsection{Analysis}
\label{sub:finite:anal}

The main result of this section asserts that the policy $\hat{g}$ defined in Section \ref{sec:prelim} is optimal.
\begin{theorem}
\label{thm:opt}
When $X^1_0 =X^2_0=x_0$ and $x_0$ is common knowledge between $C_1$ and $C_2$, the policy $\hat{g}$ given by (\ref{eq:ghat})-(\ref{eq:th}) is optimal for the finite horizon decentralized routing problem formulated in Section \ref{sub:model:finite}, that is 
\begin{align}
J^{\hat{g}}_T(x_0,x_0) \leq J^{g}_T(x_0,x_0)
\end{align}
for any feasible policy $g \in \mathcal{G}_d $ and any initial queue length $x_0$.
\begin{flushright}
$\square$
\end{flushright}
\end{theorem}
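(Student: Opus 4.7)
The plan is to prove Theorem \ref{thm:opt} via a coupling-based sample-path comparison that exploits the structural rigidity of $\hat{g}$ afforded by Corollary \ref{cor:difdec}. Under $X^1_0 = X^2_0 = x_0$ we have $UB^{\hat{g}}_0 - LB^{\hat{g}}_0 = 0$, so Corollary \ref{cor:difdec} forces $UB^{\hat{g}}_t - LB^{\hat{g}}_t \leq 1$ for every $t \geq 0$. Since the realized queue lengths lie within the common support, this already implies $|X^{1,\hat{g}}_t - X^{2,\hat{g}}_t| \leq 1$ almost surely under $\hat{g}$. Because $c_t$ is convex and increasing, the function $(x^1,x^2) \mapsto c_t(x^1) + c_t(x^2)$ is Schur-convex, so a smaller spread combined with a no-larger sum produces a no-larger cost; this is the structural feature I intend to convert into a cost inequality against every competing policy.

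Concretely, for an arbitrary $g \in \mathcal{G}_d$ I would couple the exogenous processes $\{A^i_t\}$ and $\{D^i_t\}$ so that the same realizations drive both $\hat{g}$ and $g$, and then establish by induction on $t$ the two sample-path inequalities
\begin{align*}
X^{1,\hat{g}}_t + X^{2,\hat{g}}_t &\;\le\; X^{1,g}_t + X^{2,g}_t, \\
\bigl|X^{1,\hat{g}}_t - X^{2,\hat{g}}_t\bigr| &\;\le\; \bigl|X^{1,g}_t - X^{2,g}_t\bigr|.
\end{align*}
The base case holds at equality thanks to $X^1_0 = X^2_0 = x_0$. For the inductive step I would use Lemma \ref{lm:bounds} together with the threshold rule (\ref{eq:ghat})--(\ref{eq:th}) and analyze the four possible combinations of $(U^{1,\hat g}_t, U^{2,\hat g}_t)$ against $(U^{1,g}_t, U^{2,g}_t)$. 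The goal of each case is to show that $\hat{g}$ always moves the customer from the ``larger'' queue to the ``smaller'' one (a move made unambiguous by $UB^{\hat{g}}_t - LB^{\hat{g}}_t \le 1$), whereas any deviation of $g$ from this either pushes $(X^{1,g}_{t+1}, X^{2,g}_{t+1})$ further from balance or incurs the same imbalance while losing a service opportunity through the $(X^i_t - D^i_t)^+$ truncation.

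Once the two pathwise relations are secured, Schur-convexity and monotonicity of $c_t$ combine to give $c_t(X^{1,\hat{g}}_t) + c_t(X^{2,\hat{g}}_t) \le c_t(X^{1,g}_t) + c_t(X^{2,g}_t)$ on every sample path. Taking expectations and summing over $t = 0,\ldots,T-1$ yields $J^{\hat{g}}_T(x_0, x_0) \le J^g_T(x_0, x_0)$, as required. The principal obstacle is the sum-dominance leg of the induction: the two policies act on different information and therefore may route differently at times when the coupled service increment $D^i_t$ interacts with the $(\cdot)^+$ operation to empty a queue under one policy but not the other. Showing that every service slot ``wasted'' by $\hat{g}$ is matched by an equal or larger waste under $g$ requires careful bookkeeping, keyed on the invariant $|X^{1,\hat{g}}_t - X^{2,\hat{g}}_t| \leq 1$, to preclude pathological scenarios in which $\hat{g}$'s aggressive balancing inadvertently increases the total workload.
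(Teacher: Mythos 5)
Your high-level strategy (balancing of $\hat{g}$ via Corollary \ref{cor:difdec}, dominance of the total queue length, then convexity/monotonicity of $c_t$) matches the paper's, but the specific mechanism you propose --- pathwise inequalities under the identity coupling of $\{A^i_t, D^i_t\}$ --- does not work, and the difficulty you flag at the end is not a bookkeeping issue but a genuine failure. Concretely: start from $x_0=0$, let $A^1_0=1, A^2_0=0$; then $\hat{g}$ routes the customer so that $(X^{1,\hat{g}}_1,X^{2,\hat{g}}_1)=(0,1)$, while the do-nothing policy $g$ has $(X^{1,g}_1,X^{2,g}_1)=(1,0)$. Now take $D^1_1=1, D^2_1=0, A^1_1=A^2_1=0$: the post-service sum is $1$ under $\hat{g}$ but $0$ under $g$, so your first inductive invariant $X^{1,\hat{g}}_t+X^{2,\hat{g}}_t \le X^{1,g}_t+X^{2,g}_t$ is violated on this sample path. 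The correct statement is only stochastic dominance ($\le_{st}$), which is what the paper's Lemma \ref{lm:stoless} proves; the key step there is distributional, not pathwise --- it uses the exchangeability of $D^1_t$ and $D^2_t$ to reassign which departure variable hits which queue (and the paper's example following Lemma \ref{lm:inf:stobdd} makes exactly this point: the naive identity coupling fails, which is why the auxiliary processes $Y^i_t$ in the infinite-horizon analysis are built with a state-dependent permutation of the noise). So the missing idea is either that permutation coupling or the retreat to $\le_{st}$; neither is supplied in your outline.

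Your second invariant, $|X^{1,\hat{g}}_t-X^{2,\hat{g}}_t| \le |X^{1,g}_t-X^{2,g}_t|$, is also false in general (a policy $g$ can momentarily have a perfectly even split while $\hat{g}$ carries an odd total and hence a difference of $1$), but more importantly it is not needed. The paper sidesteps it: by convexity of $c_t$ one first lower-bounds $g$'s instantaneous cost by the cost of the balanced rearrangement $\bigl(\lfloor \tfrac{1}{2}(X^{1,g}_t+X^{2,g}_t)\rfloor, \lceil \tfrac{1}{2}(X^{1,g}_t+X^{2,g}_t)\rceil\bigr)$ of $g$'s \emph{own} sum, then uses the stochastic sum-dominance of Lemma \ref{lm:stoless} together with monotonicity of $c_t$, and finally identifies the result with $\hat{g}$'s actual cost via Corollary \ref{cor:bal}. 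Replacing your two pathwise claims with this decomposition (Schur-convexity applied only within each policy's own sum, plus $\le_{st}$ across policies) closes the gap.
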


Before proving Theorem \ref{thm:opt}, we note that when $X^1_0 =X^2_0=x_0$ Corollary \ref{cor:difdec} implies that
\begin{align}
UB^{\hat{g}}_t-LB^{\hat{g}}_t  \leq 1 \text{ for all }t \geq 0.
\label{eq:finite:dif1}
\end{align}
Equation (\ref{eq:finite:dif1}) says that the difference between the highest possible number of customers in $Q_1$ or $Q_2$ and the lowest possible number of customers in $Q_1$ or $Q_2$ is less than or equal to $1$ under policy $\hat{g}$. 
This property means that $\hat{g}$ controls the length of the joint support of the PMFs $\overline{\Pi}^1_t, \overline{\Pi}^2_t$ and balances the lengths of the two queues.
A direct consequence of (\ref{eq:finite:dif1}) is the following corollary.
\begin{corollary}
\label{cor:bal}
At any time $t$, we have
\begin{align}
&\left\lfloor \frac{1}{2}(X^{1,\hat{g}}_t+X^{2,\hat{g}}_t )\right\rfloor = \min\left(X^{1,\hat{g}}_t,X^{2,\hat{g}}_t\right),
\\
&\left\lceil \frac{1}{2}(X^{1,\hat{g}}_t+X^{2,\hat{g}}_t)\right\rceil = \max\left(X^{1,\hat{g}}_t,X^{2,\hat{g}}_t\right).
\end{align}
\begin{flushright}
$\square$
\end{flushright}
\end{corollary}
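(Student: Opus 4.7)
The plan is to derive Corollary \ref{cor:bal} as an almost immediate consequence of equation (\ref{eq:finite:dif1}), which was established just before the corollary. The key point is that (\ref{eq:finite:dif1}) constrains the joint support of the conditional PMFs $\Pi^1_t,\Pi^2_t$ to have range at most one, which in turn forces the two realized queue lengths to differ by at most one.

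First, I would observe that for each $i=1,2$, the random variable $X^{i,\hat{g}}_t$ lies (almost surely) in the support of its conditional PMF $\Pi^i_t$, so $LB^{i,\hat{g}}_t \leq X^{i,\hat{g}}_t \leq UB^{i,\hat{g}}_t$. Combining this with the definitions (\ref{eq:UB})--(\ref{eq:LB}) of $UB^{\hat{g}}_t$ and $LB^{\hat{g}}_t$, both $X^{1,\hat{g}}_t$ and $X^{2,\hat{g}}_t$ lie in the integer interval $[LB^{\hat{g}}_t,UB^{\hat{g}}_t]$. By (\ref{eq:finite:dif1}) this interval has length at most one, hence
\[
\bigl|X^{1,\hat{g}}_t-X^{2,\hat{g}}_t\bigr|\leq 1.
\]

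Next, I would split into two cases. If $X^{1,\hat{g}}_t=X^{2,\hat{g}}_t$, then $\frac{1}{2}(X^{1,\hat{g}}_t+X^{2,\hat{g}}_t)$ is an integer equal to $\min(X^{1,\hat{g}}_t,X^{2,\hat{g}}_t)=\max(X^{1,\hat{g}}_t,X^{2,\hat{g}}_t)$, so both floor/ceiling identities hold trivially. If $|X^{1,\hat{g}}_t-X^{2,\hat{g}}_t|=1$, let $m=\min(X^{1,\hat{g}}_t,X^{2,\hat{g}}_t)$ and $M=m+1=\max(X^{1,\hat{g}}_t,X^{2,\hat{g}}_t)$; then
\[
\tfrac{1}{2}(X^{1,\hat{g}}_t+X^{2,\hat{g}}_t)=m+\tfrac{1}{2},
\]
so the floor equals $m$ and the ceiling equals $M$, matching the stated identities.

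I do not anticipate any genuine obstacle: the corollary is purely a translation of the at-most-one gap (\ref{eq:finite:dif1}) into a statement about the realized queue lengths, and the only thing to be careful about is to justify $LB^{\hat{g}}_t\leq X^{i,\hat{g}}_t\leq UB^{\hat{g}}_t$ from the definition of the conditional PMF's support.
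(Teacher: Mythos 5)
Your proposal is correct and matches the paper's (implicit) argument: the paper simply states Corollary \ref{cor:bal} as a direct consequence of \eqref{eq:finite:dif1}, and your write-up supplies exactly the intended details — both realized queue lengths lie in $[LB^{\hat{g}}_t,UB^{\hat{g}}_t]$, hence differ by at most one, and the floor/ceiling identities follow by the two-case arithmetic.
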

As pointed out above, the policy $\hat{g}$ balances the lengths of the two queues. 
This balancing property suggests that the throughput of the system due to $\hat{g}$ is high and the total number of customers in the system is low. This is established by the following lemma.
\begin{lemma}
\label{lm:stoless}
Under the assumption $X^1_0=X^2_0=x_0$, where $x_0$ is common knowledge, for any policy $g$ of the form described by 
(\ref{P:sep2:equ1})-(\ref{P:sep2:equ2}), we have
\begin{align}
X^{1,\hat{g}}_t+X^{2,\hat{g}}_t \leq_{st} X^{1,g}_t+X^{2,g}_t,
\end{align}
where $Z_1\leq_{st}Z_2$ means that the r.v. $Z_1$ is stochastically smaller than 
the r.v. $Z_2$, that is, for any $a\in \mathbb{R}$, $\mathbf{P}(Z_1\geq a) \leq \mathbf{P}(Z_2 \geq a)$ (see \cite{marshall2010inequalities}).
\begin{flushright}
$\square$
\end{flushright}
\end{lemma}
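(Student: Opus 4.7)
My plan is to construct a coupling of the $\hat{g}$-system and an arbitrary $g$-system on a common probability space that forces the pathwise inequality
\[
N^{\hat{g}}_t := X^{1,\hat{g}}_t + X^{2,\hat{g}}_t \;\leq\; X^{1,g}_t + X^{2,g}_t =: N^g_t
\]
for every $t$; this immediately yields the claimed stochastic dominance. Since routing cancels in the sum, the dynamics of $N_t$ reduce to
\[
N_{t+1} - N_t \;=\; A^1_t + A^2_t - D^1_t 1_{\{X^1_t \geq 1\}} - D^2_t 1_{\{X^2_t \geq 1\}},
\]
so given the arrivals, the service signals, and the current queue lengths, $N_{t+1}$ is independent of the routing decision. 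The comparison therefore reduces to controlling the number of successful service completions, and the leverage I have is that under $\hat{g}$ both queues are nonempty whenever $N^{\hat{g}}_t \geq 2$ (by Corollary \ref{cor:bal}), which maximizes the instantaneous service rate.

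The coupling is as follows. Both systems share the same arrivals $A^1_t, A^2_t$ and a single pair of i.i.d.\ Bernoulli($\mu$) signals $D^1_t, D^2_t$ per time step. The $\hat{g}$-system uses $D^i_t$ at queue $i$, while the $g$-system uses $D^{\sigma_t(i)}_t$ at queue $i$, where $\sigma_t$ is a permutation of $\{1,2\}$ chosen as a measurable function of the joint state at time $t$. Because $(D^1_t, D^2_t)$ is exchangeable, this relabeling preserves the correct marginal law of the $g$-system. I take $\sigma_t$ to be the identity except when $N^{\hat{g}}_t = N^g_t = 1$ and the two systems have their unique customer sitting in different queues; in that single situation I set $\sigma_t = (1\,2)$ so that both singletons are tested against the same service signal.

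The argument then proceeds by induction on $t$; the base case is immediate from the common initial configuration $(x_0, x_0)$. For the inductive step, set $\Delta_t := N^g_t - N^{\hat{g}}_t \geq 0$ and let $M_t$ denote the service completions of the respective system, so that $\Delta_{t+1} = \Delta_t + M^{\hat{g}}_t - M^g_t$. Split on $\Delta_t$: if $\Delta_t \geq 2$, then $\Delta_{t+1} \geq 0$ automatically since $M_t \in \{0,1,2\}$; if $\Delta_t = 1$, a short case analysis based on whether $N^{\hat{g}}_t$ is $0$, $1$, or $\geq 2$ (the latter handled by Corollary \ref{cor:bal} giving $M^{\hat{g}}_t = D^1_t + D^2_t$) yields $M^g_t - M^{\hat{g}}_t \leq 1$; if $\Delta_t = 0$ and $N^{\hat{g}}_t \geq 2$, both $\hat{g}$-queues are nonempty so $M^{\hat{g}}_t = D^1_t + D^2_t \geq M^g_t$, while the residual cases $N^{\hat{g}}_t \in \{0,1\}$ are handled by the swap rule, which forces $M^g_t = M^{\hat{g}}_t$.

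The main obstacle is the case $\Delta_t = 0$ with $N^{\hat{g}}_t = N^g_t = 1$: under the naive identity coupling the $g$-system's lone customer could be served while the $\hat{g}$-system's lone customer is skipped, breaking the induction. The adaptive relabeling $\sigma_t$, valid thanks to the exchangeability of the two servers, is precisely what is needed to close this gap; once it is in place, the pathwise inequality $N^{\hat{g}}_t \leq N^g_t$ holds under the coupling and Lemma \ref{lm:stoless} follows.
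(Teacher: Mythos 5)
Your proposal is correct, and it takes a genuinely different route from the paper's proof of this lemma. The paper argues by induction on the \emph{distributional} ordering: it fixes a realization $(x^1,x^2)$ of the $g$-queues, shows pointwise that $(x^1-D^1_t)^++(x^2-D^2_t)^+$ stochastically dominates the same expression evaluated at the balanced pair $\bigl(\lfloor\tfrac{x^1+x^2}{2}\rfloor,\lceil\tfrac{x^1+x^2}{2}\rceil\bigr)$, then chains this with the induction hypothesis, Corollary \ref{cor:bal}, and an exchange of $D^1_t,D^2_t$ justified by their being i.i.d. You instead build an explicit pathwise coupling with an adaptive relabeling of the two service signals for the $g$-system, and prove $N^{\hat g}_t\le N^g_t$ almost surely under the coupling. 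Both arguments hinge on the same two facts --- Corollary \ref{cor:bal} forces both $\hat g$-queues nonempty whenever $N^{\hat g}_t\ge 2$ (so no service capacity is wasted), and exchangeability of the servers resolves the $N=1$ edge case (your swap $\sigma_t=(1\,2)$ plays exactly the role of the paper's replacement of $D^j_t$ by $D^2_t$ in the case $x^i=0,\ x^j=1$). Your case analysis for $\Delta_t\in\{0,1\}$ checks out: for $\Delta_t=1$ one always has $M^g_t-M^{\hat g}_t\le 1$ since $M^g_t\le D^1_t+D^2_t$ and $M^{\hat g}_t\ge D^i_t$ for the relevant nonempty queue, and the relabeling is legitimate because $\sigma_t$ is measurable with respect to the primitives up to $t-1$ while $(D^1_t,D^2_t)$ is independent of them and exchangeable, so the relabeled $g$-system retains the correct law. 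What your approach buys is a stronger almost-sure coupled inequality rather than only $\leq_{st}$ of the sums; this is precisely the flavor of construction the paper itself resorts to later, in Lemma \ref{lm:inf:stobdd} (Appendix \ref{app:infinite1}), where an adaptive case-1/case-2 reassignment of arrivals and departures is used to get a.s. bounds on both the sum and the max. The paper's distributional induction is somewhat lighter because it never has to verify that an adaptive relabeling preserves the law, but it yields only the weaker conclusion. Your write-up is a plan rather than a full proof --- the $\Delta_t=1$ subcases and the law-preservation of $\sigma_t$ are asserted rather than spelled out --- but every asserted step is correct and routine to complete.
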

\begin{proof}
See Appendix \ref{app:finite}
\end{proof}
Using Lemma \ref{lm:stoless}, we now prove Theorem \ref{thm:opt}.
\begin{proof}[Proof of Theorem \ref{thm:opt}]
For any feasible policy $g$, since the functions $c_t,t=0,1,...,T$, are convex, we have at any time $t$
\begin{align}
&\mathbf{E}\left[c_t\left(X^{1,g}_t\right)+c_t\left(X^{2,g}_t\right)\right]\nonumber\\
\geq & \mathbf{E}\left[c_t\left(\left\lfloor \frac{1}{2}(X^{1,g}_t+X^{2,g}_t)\right\rfloor\right)+c_t\left(\left\lceil \frac{1}{2}(X^{1,g}_t+X^{2,g}_t)\right\rceil\right)\right].
\label{pfthmopt:eq1}
\end{align}
Furthermore, using Lemma \ref{lm:stoless} and the fact that $c_t(\cdot)$ is increasing, we get
\begin{align}
&\mathbf{E}\left[c_t\left(\left\lfloor \frac{1}{2}(X^{1,g}_t+X^{2,g}_t)\right\rfloor\right)+c_t\left(\left\lceil \frac{1}{2}(X^{1,g}_t+X^{2,g}_t)\right\rceil\right)\right] \nonumber\\
\geq & \mathbf{E}\left[c_t\left(\left\lfloor \frac{1}{2}(X^{1,\hat{g}}_t+X^{2,\hat{g}}_t)\right\rfloor\right)
+c_t\left(\left\lceil \frac{1}{2}(X^{1,\hat{g}}_t+X^{2,\hat{g}}_t)\right\rceil\right)\right] \nonumber\\
= & \mathbf{E}\left[c_t\left(\min(X^{1,\hat{g}}_t,X^{2,\hat{g}}_t)\right)+c_t\left(\max (X^{1,\hat{g}}_t,X^{2,\hat{g}}_t)\right)\right] \nonumber\\
= & \mathbf{E}\left[c_t\left(X^{1,\hat{g}}_t\right)+c_t\left(X^{2,\hat{g}}_t\right)\right]. \label{pfthmopt:eq2}
\end{align} 
The inequality in (\ref{pfthmopt:eq2}) is true because 
 $X^{1,g}_t+X^{2,g}_t\leq_{st} X^{1,\hat{g}}_t+X^{2,\hat{g}}_t$ (Lemma \ref{lm:stoless}) and $c_t(\cdot)$ is increasing.
The first equality in (\ref{pfthmopt:eq2}) follows from Corollary \ref{cor:bal}.\\
Combining (\ref{pfthmopt:eq1}) and (\ref{pfthmopt:eq2}) we obtain, for any $t$, 
\begin{align}
&\mathbf{E}\left[c_t\left(X^{1,g}_t\right)+c_t\left(X^{2,g}_t\right)\right] \geq  \mathbf{E}\left[c_t\left(X^{1,\hat{g}}_t\right)+c_t\left(X^{2,\hat{g}}_t\right)\right].
\label{pfthm:instopt}
\end{align}
The optimality of policy $\hat{g}$ follows from (\ref{Model:finite:cost}) and (\ref{pfthm:instopt}).
\end{proof}

\subsection{Comparison to the performance under centralized information}
\label{sub:finite:comp}
We compare now the performance of the optimal decentralized policy $\hat{g}$ to the performance of the queueing system under centralized information.
The results of this comparison will be useful when we study the infinite horizon problem in Section \ref{sec:infinite}.

Consider a centralized controller who has all the information $I^1_t$ and $I^2_t$ at each time $t$.
Then, the set $\mathcal{G}_c$ of feasible routing policies of the centralized controller is 
\begin{align}
\mathcal{G}_c :=
\{(g^1,g^2):\quad & g^i = (g^i_0,g^i_1,\dots,g^i_t,\dots),i=1,2 \nonumber\\
		&\text{and } U^i_t=g^i_t(I^1_t,I^2_t)\}.
\label{Model:cenpolicies}
\end{align}
By the definition, $\mathcal{G}_d \subset \mathcal{G}_c$.
This means that the centralized controller can simulate any decentralized policy $g \in \mathcal{G}_d$ adopted by controllers $C_1$ and $C_2$.
Therefore, for any initial PMFs $\pi^1_0,\pi^2_0$
\begin{align}
&\inf_{g\in \mathcal{G}_c} J^g_T(\pi^1_0,\pi^2_0) \leq \inf_{g\in \mathcal{G}_d} J^g_T(\pi^1_0,\pi^2_0) 
\label{comp:eqJ1}
\\
&\inf_{g\in \mathcal{G}_c} J^g(\pi^1_0,\pi^2_0) \leq \inf_{g\in \mathcal{G}_d} J^g(\pi^1_0,\pi^2_0) .
\label{comp:eqJinf}
\end{align}
When $X^1_0=X^2_0=x_0$, Lemma \ref{lm:stoless} and Theorem \ref{thm:opt} 
show that the cost given by $\hat{g}$ is smaller than the cost given by any policy $g \in \mathcal{G}_d$.
Furthermore we have:
\begin{lemma}
\label{lm:comp}
Under the assumption $X^1_0=X^2_0=x_0$, where $x_0$ is common knowledge, we have
\begin{align}
X^{1,\hat{g}}_t+X^{2,\hat{g}}_t \leq_{st} X^{1,g}_t+X^{2,g}_t,
 \label{comp:eqsto}
\end{align}
for any $g \in \mathcal{G}_c$, and 
\begin{align}
 J^{\hat{g}}_T(x_0,x_0)  \leq \inf_{g\in \mathcal{G}_c} J^g_T(x_0,x_0).
 \label{comp:eqJ2}
\end{align}
for any $g \in \mathcal{G}_c$.
\begin{flushright}
$\square$
\end{flushright}
\end{lemma}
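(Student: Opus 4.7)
The plan is to establish the stochastic dominance (\ref{comp:eqsto}) first, and then deduce the cost inequality (\ref{comp:eqJ2}) by replaying the convexity/balance/monotonicity chain already used in the proof of Theorem \ref{thm:opt}. The structural observation that drives both parts is that routing conserves the total number of customers at each time instant: for any policy $g$, centralized or not, with $S^g_t := X^{1,g}_t + X^{2,g}_t$,
\begin{align*}
S^g_{t+1} = S^g_t + A^1_t + A^2_t - D^1_t 1_{\{X^{1,g}_t \geq 1\}} - D^2_t 1_{\{X^{2,g}_t \geq 1\}}.
\end{align*}
Routing decisions affect $S^g_{t+1}$ only through the indicators that count the number of non-empty queues. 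Since $\hat g$ maintains $|X^{1,\hat g}_t - X^{2,\hat g}_t| \leq 1$ for every $t$ under $X^1_0=X^2_0=x_0$ (Corollary \ref{cor:bal}), it realizes $1_{\{X^{1,\hat g}_t\geq 1\}} + 1_{\{X^{2,\hat g}_t \geq 1\}} = \min(S^{\hat g}_t, 2)$, which is the largest value this quantity can possibly take given the sum $S^{\hat g}_t$.

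To prove (\ref{comp:eqsto}) I would couple the $\hat g$-system and the $g$-system with common exogenous sequences $\{A^i_t\}$, $\{D^i_t\}$ and induct on $t$. The base case is immediate: $S^{\hat g}_0 = S^g_0 = 2x_0$. For the inductive step, assuming $S^{\hat g}_t \leq_{st} S^g_t$, I combine the one-step recursion above with the throughput-maximization property of $\hat g$ to obtain $S^{\hat g}_{t+1} \leq_{st} S^g_{t+1}$. The point is that because the one-step recursion depends on the controllers' actions only through the number of non-empty queues, the enlarged information set available to a centralized $g$ (namely the full histories $I^1_t, I^2_t$ instead of just the common information) gives $g$ no additional leverage in this comparison. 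The coupling/induction argument from Lemma \ref{lm:stoless} therefore carries over essentially unchanged, with $\mathcal{G}_d$ replaced by $\mathcal{G}_c$ throughout.

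Once (\ref{comp:eqsto}) is established, (\ref{comp:eqJ2}) follows exactly as in the proof of Theorem \ref{thm:opt}. For every $t$ and every $g \in \mathcal{G}_c$,
\begin{align*}
\mathbf{E}\bigl[c_t(X^{1,g}_t) + c_t(X^{2,g}_t)\bigr]
&\geq \mathbf{E}\Bigl[c_t\bigl(\lfloor S^g_t/2 \rfloor\bigr) + c_t\bigl(\lceil S^g_t/2 \rceil\bigr)\Bigr]\\
&\geq \mathbf{E}\Bigl[c_t\bigl(\lfloor S^{\hat g}_t/2 \rfloor\bigr) + c_t\bigl(\lceil S^{\hat g}_t/2 \rceil\bigr)\Bigr]\\
&= \mathbf{E}\bigl[c_t(X^{1,\hat g}_t) + c_t(X^{2,\hat g}_t)\bigr],
\end{align*}
where the first inequality uses convexity of $c_t$ (the balanced split minimizes a convex sum under fixed total), the second uses (\ref{comp:eqsto}) together with the fact that $s \mapsto c_t(\lfloor s/2 \rfloor) + c_t(\lceil s/2 \rceil)$ is non-decreasing since $c_t$ is increasing, and the equality uses Corollary \ref{cor:bal}. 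Summing over $t \in \{0,\dots,T-1\}$ and then taking $\inf_{g\in \mathcal{G}_c}$ yields (\ref{comp:eqJ2}).

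The hard part will be the inductive step for (\ref{comp:eqsto}). A naive sample-path coupling with identical $\{A^i_t\}$, $\{D^i_t\}$ can fail when $S^{\hat g}_t = S^g_t$ is small but the non-empty queues are labeled differently in the two systems, so that a given $D^i_t$ lands on an active queue in one system and on an empty queue in the other. Circumventing this obstacle requires either a label-permuting coupling that exploits the symmetry of the laws of $(A^1_t, A^2_t)$ and $(D^1_t, D^2_t)$, or the introduction of an auxiliary balanced centralized policy that dominates $g$ sample-pathwise and matches $\hat g$ in sum distribution. This is the same device that underlies the proof of Lemma \ref{lm:stoless}, and I expect it to transfer to $\mathcal{G}_c$ with only notational changes.
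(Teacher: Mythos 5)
Your proposal is correct and matches the paper's approach: the paper's proof of this lemma simply observes that the argument of Lemma \ref{lm:stoless} never uses the decentralized form of $g$ (only the dynamics of the sum and the balancing property of $\hat{g}$), so it applies verbatim to $g\in\mathcal{G}_c$, and that \eqref{comp:eqJ2} then follows by the convexity/monotonicity chain of Theorem \ref{thm:opt}. Your additional remark that the sum recursion depends on the actions only through the indicators of non-empty queues is exactly the right reason why centralized information confers no advantage here.
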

\begin{proof}
The proof of \eqref{comp:eqsto} is the same as the proof of Lemma \ref{lm:stoless}, and the proof of \eqref{comp:eqJ2} is the same as the proof of Theorem \ref{thm:opt}.
\end{proof}

Since $\hat{g}$ is a decentralized policy, \eqref{comp:eqJ1} and Lemma \ref{lm:comp} imply that
\begin{align}
 J^{\hat{g}}_T(x_0,x_0)  
 = \inf_{g\in \mathcal{G}_d} J^g_T(x_0,x_0)=\inf_{g\in \mathcal{G}_c} J^g_T(x_0,x_0).
\label{comp:eqJeq}
\end{align}
Equation (\ref{comp:eqJeq}) shows that when $X^1_0=X^2_0=x_0$ and $x_0$ is common knowledge between $C_1$and $C_2$, policy $\hat{g}$ achieves the same performance as any centralized optimal policy.
\subsection{The Case of Different Initial Queue Lengths}\label{sub:finite:diss}
%

When $X^1_0\neq X^2_0$, the policy $\hat{g}$ is not necessarily optimal for the finite horizon problem.

Consider an example where the horizon $T=1$ (two-step horizon), $\lambda = 0.1, \mu =0.5$ and
\begin{align}
&\mathbf{P}\left(X^1_0 = 3\right) = 1,\\
&\mathbf{P}\left(X^2_0=1\right)=0.9, \quad \mathbf{P}\left(X^2_0=5\right)=0.1, \quad
\end{align}
that is,
\begin{align}
\pi^1_0 =& (0,0,0,1,0,0,0,\dots),\\
\pi^2_0 =& (0,0.9,0,0,0,0.1,0,\dots),
\end{align}
where $\pi^1_0,\pi^2_0$ denote the initial PMFs on the lengths of the queues. \\
Then, $\overline{\Pi}^1_0,\overline{\Pi}^2_0$ and the threshold $TH_0$ are
\begin{align}
&\overline{\Pi}^1_0 = (0,0,0.5,0.4,0.1,0,0,\dots), \label{remark:ex:eqpi1}\\
&\overline{\Pi}^2_0 = (0.45,0.36,0.09,0,0.05,0.04,0.01,\dots),\label{remark:ex:eqpi2}\\
&TH_0= \frac{1}{2}(6+0)=3.\label{remark:ex:eqth}
\end{align}
Consider the cost functions $c_0(x)=0$ and $c_1(x)=x^2$. Then, we have
\begin{align}
&J^g(\pi^1_0,\pi^2_0) \nonumber\\
= & \mathbf{E}\left[\left(X^{1,g}_1\right)^2+\left(X^{2,g}_1\right)^2\right] \nonumber\\
    = & \mathbf{E}\left[\left(\overline{X}^1_0-U^{1,g}_0+U^{2,g}_0\right)^2
    	+\left(\overline{X}^2_0-U^{2,g}_0+U^{1,g}_0\right)^2\right].
\end{align}
Using (\ref{remark:ex:eqpi1})-(\ref{remark:ex:eqth}) and the specification of the policy $\hat{g}$,
we can compute the expected cost due to $\hat{g}$. It is
\begin{align}
J^{\hat{g}}(\pi^1_0,\pi^2_0)  = & 8.48.
\end{align}
Consider now another policy $\tilde{g}$ described below. For $i=1,2,i\neq j$,
\begin{align}
&U^{i,\tilde{g}}_t = \tilde{g}_t\left(\overline{X}^i_t,\overline{\Pi}^1_t,\overline{\Pi}^2_t\right) 
= \left\lbrace\begin{array}{l}
1, \text{ when } \overline{X}^i_t \geq  \mathbf{E}\left[\overline{X}^j_t|\overline{\Pi}^j_t\right],\\
0, \text{ when } \overline{X}^i_t < \mathbf{E}\left[\overline{X}^j_t|\overline{\Pi}^j_t\right],
\end{array}\right.
\label{eq:gt}
\end{align}
Then, from \eqref{remark:ex:eqpi1}-\eqref{remark:ex:eqpi2} and \eqref{eq:gt} we get
\begin{align}
&U^{1,\tilde{g}}_0 = \left\lbrace\begin{array}{l}
1, \text{ when } \overline{X}^1_0 \geq  1,\\
0, \text{ when } \overline{X}^1_0 < 1,
\end{array}\right.\\
&U^{2,\tilde{g}}_0 = \left\lbrace\begin{array}{l}
1, \text{ when } \overline{X}^2_0 \geq  2.6,\\
0, \text{ when } \overline{X}^2_0 < 2.6,
\end{array}\right.
\end{align}
Therefore, the expected cost due to the policy $\tilde{g}$ is given by
\begin{align}
J^{\tilde{g}}(\pi^1_0,\pi^2_0)  = & 8.28
\end{align}
Since $J^{\tilde{g}}(\pi^1_0,\pi^2_0)=8.28<8.48=J^{\hat{g}}(\pi^1_0,\pi^2_0)$, policy $\hat{g}$ is not optimal.

In this example, each controller has only one decision to make, the decision at time $0$.
As a result, signaling does not provide any advantages to the controllers, and that is why the policy $\hat{g}$ is not the best policy.

\section{Infinite horizon}
\label{sec:infinite}
We consider the infinite horizon decentralized routing problem formulated in Section \ref{sub:model:infinite}, and make the following additional assumptions.
\begin{assumption}
\label{assum:mulambda}
$\mu>\lambda$.
\end{assumption}
\begin{assumption}
\label{assum:initial}
The initial PMFs
$\pi^1_0, \pi^2_0$ are finitely supported and common knowledge between controllers $C_1$ and $C_2$. 
i.e. there exists $M<\infty$ such that $\pi^1_0(x)=\pi^2_0(x)=0$ for all $x>M$.
\end{assumption}

Let $g_0$ denote the open-loop policy that does not do any routing, that is, at any time $t$
\begin{align}
U^{1,g_0}_t = U^{2,g_0}_t = 0.
\end{align}

\begin{assumption}
\label{assum:cost}
\begin{align}
&\lim_{T \rightarrow \infty}
\frac{1}{T} J^{g_0}_T(\pi^1_0,\pi^2_0) :=J^{g_0} < \infty  \quad a.s.,
\end{align}
where $J^{g_0}$ is a constant that denotes the infinite horizon average cost per unit time due to policy $g_0$.
\end{assumption}

\begin{remark}
Due to policy $g_0$, the queue length $\{X^{g_0,i}_t, t\in \mathbb{Z}_+ \}, i=1,2$ is a positive recurrent birth and death chain with arrival rate $\lambda$ and departure rate $\mu 1_{\{X^{g_0,i}_t \neq 0 \}}$. Therefore, as $T\rightarrow \infty$, the average cost per unit time converges to a constant a.s. if the expected cost under the stationary distribution of the process is finite (see \cite[chap. 3]{bremaud1999markov}).
Assumption \ref{assum:cost} is equivalent to the assumption that the expected cost is finite under the stationary distribution of the controlled queue lengths.
\end{remark}

We proceed to analyze the infinite horizon average cost per unit time for the model of Section \ref{sec:model} under Assumptions \ref{assum:mulambda}-\ref{assum:cost}.

\subsection{Analysis} \label{sub:infinite:anal}
When $X^1_0\neq X^2_0$, the policy $\hat{g}$, defined in Section \ref{sec:prelim}, is not necessarily optimal for the finite horizon problem (see the example in Section \ref{sub:finite:diss}).
Nevertheless, the policy $\hat{g}$ still attempts to balance the queues. 
Given enough time, policy $\hat{g}$ may be able to balance the queue lengths even if they are not initially balanced.
In this section we show that this is indeed the case. 

Specifically, we prove the optimality of policy $\hat{g}$ for the infinite horizon average cost per unit time problem, as stated in the following theorem which is the main result of this section.

\begin{theorem}
\label{thm:inf}
Under Assumptions \ref{assum:mulambda}-\ref{assum:cost}, the policy $\hat{g}$, described by (\ref{eq:ghat})-(\ref{eq:th}), is optimal for the infinite horizon average cost per unit time problem formulated in Section \ref{sub:model:infinite}.
\begin{flushright}
$\square$
\end{flushright}
\end{theorem}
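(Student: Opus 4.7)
The plan is to reduce the infinite-horizon problem with arbitrary initial PMFs to the balanced setting of Theorem~\ref{thm:opt} by exploiting the contraction property in Corollary~\ref{cor:difdec}. Define the coupling time
\begin{align}
\tau := \inf\bigl\{t\geq 0 : UB^{\hat{g}}_t - LB^{\hat{g}}_t \leq 1\bigr\}.
\end{align}
Once $t\geq \tau$, Corollary~\ref{cor:difdec} guarantees that the joint support of the beliefs has width at most one for every subsequent time, and Corollary~\ref{cor:bal} applies. The strategy is thus to show (i) $\tau$ is finite in expectation and the cost accrued on $[0,\tau]$ has finite expectation, and (ii) the tail comparison from Theorem~\ref{thm:opt} and Lemma~\ref{lm:comp} can be invoked starting at time $\tau$.

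For (i), Assumption~\ref{assum:initial} bounds the initial gap by $M$, so only $\lceil \log_2 M \rceil$ halvings are needed. Using Assumption~\ref{assum:mulambda} and the fact that $\lambda,\mu\in(0,1)$, I would establish that, uniformly over the current state, there is a strictly positive probability that a fixed-length block of realizations forces both $\overline{X}^{i,\hat{g}}_t$ below the current threshold $TH_t$, producing a $(0,0)$ decision and therefore a further halving via Lemma~\ref{lm:bounds}. Iterating gives a geometric bound and hence $\mathbf{E}[\tau]<\infty$. Coupling $\hat{g}$ to $g_0$ through shared arrival and service sequences yields that the individual queue lengths under $\hat{g}$ are, in the convex-cost sense, no larger than those under $g_0$, so Assumption~\ref{assum:cost} bounds $\mathbf{E}\bigl[\sum_{t=0}^{\tau-1}\rho_t\bigr]$.

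For (ii), conditional on the common information at time $\tau$, the system is in the balanced regime where Corollary~\ref{cor:bal} holds for all $t\geq\tau$. Running the convexity-and-stochastic-dominance argument used in the proofs of Theorem~\ref{thm:opt} and Lemma~\ref{lm:comp} with $\tau$ as the new time origin yields, for every centralized policy $g\in\mathcal{G}_c$,
\begin{align}
\mathbf{E}\bigl[c(X^{1,\hat{g}}_t)+c(X^{2,\hat{g}}_t)\bigr] \leq \mathbf{E}\bigl[c(X^{1,g}_t)+c(X^{2,g}_t)\bigr], \quad t\geq \tau.
\end{align}
Splitting $J^{\hat{g}}_T(\pi^1_0,\pi^2_0)$ at $\tau\wedge T$, dividing by $T$, and letting $T\to\infty$, the prefix contribution vanishes by Step~(i) and the tail yields $J^{\hat{g}}(\pi^1_0,\pi^2_0)\leq \inf_{g\in\mathcal{G}_c} J^g(\pi^1_0,\pi^2_0)$. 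Combined with \eqref{comp:eqJinf} this forces equality and, since $\hat{g}\in\mathcal{G}_d\subset\mathcal{G}_c$, establishes optimality of $\hat{g}$ within $\mathcal{G}_d$.

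The most delicate point is the bound on $\mathbf{E}[\tau]$: because the threshold $TH_t$ adapts to past decisions, the event that both post-arrival queues fall below $TH_t$ is not a stationary function of the queue dynamics, and a careful comparison with a stable reference chain (whose existence is guaranteed by $\mu>\lambda$) is needed to extract the uniform lower bound on the halving probability used above.
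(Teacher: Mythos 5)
Your step~(ii) contains a genuine gap. The stochastic-dominance argument of Lemma~\ref{lm:stoless} and Lemma~\ref{lm:comp} is an induction whose base case is that the $\hat{g}$-system and the comparison system $g$ start from the \emph{same} state ($X^1_0=X^2_0=x_0$, common knowledge), so that $X^{1,\hat{g}}_0+X^{2,\hat{g}}_0 = X^{1,g}_0+X^{2,g}_0$. If you restart the comparison at your coupling time $\tau$, this base case fails: at time $\tau$ the $\hat{g}$-queues are balanced, but the $g$-queues are in whatever state $g$ has driven them to, and in particular $X^{1,g}_\tau+X^{2,g}_\tau$ may well be stochastically \emph{smaller} than $X^{1,\hat{g}}_\tau+X^{2,\hat{g}}_\tau$ — a centralized $g$ can balance unbalanced initial PMFs better than $\hat{g}$ on $[0,\tau]$ and thereby waste fewer service opportunities (the example in Section~\ref{sub:finite:diss} shows $\hat{g}$ is strictly suboptimal for unbalanced initial conditions). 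Hence the claimed inequality $\mathbf{E}[c(X^{1,\hat{g}}_t)+c(X^{2,\hat{g}}_t)] \leq \mathbf{E}[c(X^{1,g}_t)+c(X^{2,g}_t)]$ for $t\geq\tau$ does not follow from the cited results; there is also the additional awkwardness that $\tau$ is a stopping time of the $\hat{g}$-process and has no meaning for the $g$-process. The paper avoids this entirely by a different decomposition: it shows (via the coupled sum chain $S_t$ of Lemma~\ref{lm:inf:MC} and the ergodic theorem) that $J^{\hat{g}}(\pi^1_0,\pi^2_0)$ equals a stationary average that is \emph{independent of the initial PMFs}, hence equals $J^{\hat{g}}(0,0)$; and separately (Lemma~\ref{lm:cen}, a pathwise coupling of any $g\in\mathcal{G}_c$ started from $(0,0)$ against the same $g$ started from arbitrary initial lengths) that $\inf_{g\in\mathcal{G}_c}J^g(0,0)\leq\inf_{g\in\mathcal{G}_d}J^g(\pi^1_0,\pi^2_0)$. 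Your proposal has no substitute for either of these two ingredients, and without them the tail comparison cannot be closed.

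Two smaller points. First, you aim to prove $\mathbf{E}[\tau]<\infty$ via a uniform lower bound on the per-block halving probability; this is harder than what is needed and you correctly flag it as delicate (the width does not shrink on $(1,0)$ or $(0,1)$ decisions, so the geometric-trials picture is not immediate). The paper only proves $\mathbf{P}(\tau<\infty)=1$, by contradiction with the positive recurrence of $S_t$ (Lemma~\ref{lm:inf:00io}), and that suffices because the prefix contribution then vanishes a.s. after dividing by $T$. Second, to upgrade the a.s. limit of the time-averaged cost to convergence of expectations (which is what $J^{\hat{g}}$ requires), you need the uniform integrability supplied by the coupling with the uncontrolled system (Lemmas~\ref{lm:inf:stobdd}--\ref{lm:inf:WTconv} and Corollary~\ref{cor:infcostfinite}); your appeal to Assumption~\ref{assum:cost} gestures at this but the dominating process must be constructed carefully (the naive pathwise comparison with $X^{i,g_0}_t$ fails, as the paper's example following Lemma~\ref{lm:inf:stobdd} shows).
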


To establish the assertion of Theorem \ref{thm:inf} we proceed in four steps. 
In the first step we show that the infinite horizon average cost per unit time due to policy $\hat{g}$ is bounded above by the cost of the uncontrolled queues (i.e. the cost due to policy $g_0$). 
In the second step we show that under policy $\hat{g}$ the queues are eventually balanced, i.e. the queue lengths can differ by at most one.
In the third step we derive a result that connects the performance of policy $\hat{g}$ under the initial PMFs $(0,0)$ to the performance of the optimal policy under any arbitrary initial PMFs $\pi^1_0,\pi^2_0$ on queues $Q_1$ and $Q_2$.
In the forth step we establish the optimality of policy $\hat{g}$ based on the results of steps one, two and three.

\subsubsection*{Step 1}
We prove that $J^{\hat{g}}(\pi^1_0,\pi^2_0) \leq J^{g_0}$. To do this, we first establish some preliminary results that appear in 
Lemmas \ref{lm:inf:stobdd} and \ref{lm:inf:bddunctrl}.
\begin{lemma}
\label{lm:inf:stobdd}
There exists processes $\{Y^{1}_t, t\in \mathbb{Z}_+\}$ and $\{Y^{2}_t, t\in \mathbb{Z}_+\}$ such that
\begin{align}
&\{Y^{i}_t, t\in \mathbb{Z}_+\}
\text{ has the same distribution as } 
\{X^{i,g_0}_t, t\in \mathbb{Z}_+\}
\label{eq:stoeq}
\end{align}
for $i=1,2$, and for all times $t$
\begin{align}
& X^{1,\hat{g}}_t+X^{2,\hat{g}}_t \leq Y^{1}_t+Y^{2}_t \quad a.s. , \label{eq:stosumbdd}\\
& \max_i\left(X^{i,\hat{g}}_t\right)  \leq \max_i\left(Y^{i}_t\right)\quad a.s. \label{eq:stomaxbdd}
\end{align}
\begin{flushright}
$\square$
\end{flushright}
\end{lemma}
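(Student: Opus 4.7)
The plan is a coupling argument combined with induction on $t$: I would construct $\{Y^1_t\},\{Y^2_t\}$ on the same probability space as $\{X^{1,\hat g}_t\},\{X^{2,\hat g}_t\}$ so that each $\{Y^i_t\}_t$ has the correct marginal distribution and the two sample-path dominance inequalities can be established inductively. Specifically, I would set $Y^i_0 := X^i_0$ and $Y^i_{t+1} := (Y^i_t - \tilde D^i_t)^+ + A^i_t$, sharing the exogenous arrival processes $\{A^i_t\}$ with the $\hat g$-system. For the service processes of the $Y$-system I would use $\tilde D^i_t := D^{\sigma_t(i)}_t$, where $\sigma_t$ is an $\mathcal F_{t-1}$-measurable permutation of $\{1,2\}$. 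Since $(D^1_t, D^2_t)$ are conditionally i.i.d. Bernoulli$(\mu)$ given $\mathcal F_{t-1}$, every predictable $\sigma_t$ preserves the marginal law of each $\{\tilde D^i_t\}_t$; hence each $\{Y^i_t\}_t$ is a birth-death chain with rates $(\lambda,\mu)$ starting from $X^i_0$, and therefore matches the distribution of $\{X^{i,g_0}_t\}_t$ asserted in \eqref{eq:stoeq}.

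The inductive step decomposes the transition into a free step and a routing step. Writing $\overline X^i_t := (X^{i,\hat g}_t - D^i_t)^+ + A^i_t$ and $\overline Y^i_t := (Y^i_t - \tilde D^i_t)^+ + A^i_t = Y^i_{t+1}$, we have $X^{i,\hat g}_{t+1} = \overline X^i_t - U^i_t + U^j_t$. The routing step preserves the sum exactly, and does not increase the max under $\hat g$: the threshold rule only routes customers from queues at or above $TH_t$ to queues strictly below it, and a short case analysis over the three subcases (both above, both below, split) yields $\max_i X^{i,\hat g}_{t+1} \le \max_i \overline X^i_t$. Hence it suffices to establish both dominance bounds at the $(\overline X_t, \overline Y_t)$ level. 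For this, I would choose $\sigma_t$ using only the state at time $t$ (for example, swap precisely when $\arg\max_i Y^i_t$ differs from $\arg\max_i X^{i,\hat g}_t$), and then verify by a finite case analysis on the emptiness patterns of the two systems that both free-step inequalities hold for every realization of $(D^1_t, D^2_t, A^1_t, A^2_t)$.

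The main obstacle is exhibiting a \emph{single} predictable permutation $\sigma_t$ that simultaneously preserves both the sum and the max inequalities across all realizations of the current-step noise. The tightest cases are ``label-flipped'' configurations, in which the $\hat g$-system and the $Y$-system concentrate their masses on opposite queues; in such cases one must invoke the sum and the max induction hypotheses jointly to absorb the at-most-one-unit discrepancy in ``services used'' between the two systems. The remaining bookkeeping (checking that the chosen $\sigma_t$ indeed works in every emptiness pattern, and that the slack $S^Y_t - S^X_t$ covers any step in which $Y$ uses an extra service) is routine case-checking, and the bounds transfer from $(\overline X_t, \overline Y_t)$ to $(X^{\hat g}_{t+1}, Y_{t+1})$ via the routing-step analysis described above.
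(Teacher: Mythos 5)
Your overall architecture is exactly the paper's (Appendix C): couple the uncontrolled system to the controlled one pathwise, split each transition into a free step and a routing step, observe that routing under $\hat{g}$ preserves the sum and cannot increase the max, and close the induction at the level of $(\overline{X}_t, Y_{t+1})$. The gap is in the coupling construction itself, at precisely the point you flag as ``the main obstacle'' and then defer to routine case-checking. Two concrete problems. First, feeding $Y^i$ the arrival $A^i_t$ of the \emph{same index} already fails in label-flipped states: with $X^{1,\hat{g}}_t=m$, $X^{2,\hat{g}}_t=0$, $Y^1_t=0$, $Y^2_t=m$ (consistent with both induction hypotheses) and the realization $A^1_t=1$, $A^2_t=D^1_t=D^2_t=0$, one gets $\max_i \overline{X}^{i,\hat{g}}_t=m+1$ while $\max_i Y^i_{t+1}=m$ under \emph{any} departure permutation; the arrivals must be re-associated by rank (longest-with-longest), as the paper does via $\tilde{A}^{M_y}_t=A^{M_x}_t$ in \eqref{eq:ADtilde}. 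Second, and more fundamentally, no permutation that is measurable with respect to the pre-noise information (your predictable $\sigma_t$) suffices. Take $X^{1,\hat{g}}_t=X^{2,\hat{g}}_t=k\geq 1$, $Y^1_t=k+1$, $Y^2_t=k-1$, which satisfies both induction hypotheses with equality in the sum. Your argmax-alignment rule gives $\sigma_t=\mathrm{id}$, and the realization $(A^1_t,D^1_t,A^2_t,D^2_t)=(0,1,1,0)$ yields $\overline{X}^{2,\hat{g}}_t=k+1$ but $Y^1_{t+1}=Y^2_{t+1}=k$, so the max inequality fails at exactly the level where your reduction needs it (and the sum inequality has no slack left to absorb it).

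This is why the paper's association \eqref{eq:ADtilde} has two cases: the departure relabeling is made to depend on the \emph{realized current-step noise}, swapping the departures precisely on the event $\{Y^{M_y}_t-1=X^{M_x,\hat{g}}_t=X^{m_x,\hat{g}}_t\}$ intersected with $\{(A^{M_x}_t,D^{M_x}_t,A^{m_x}_t,D^{m_x}_t)\in\{(0,1,1,0),(0,0,1,1)\}\}$; the remark at the end of the paper's proof is devoted to showing that the always-rank-aligned, state-only association breaks \eqref{eq:xbarymax} in exactly the configuration above. Two consequences for your write-up: (i) the crux is not routine bookkeeping but the design of this noise-dependent correction, and (ii) once the relabeling depends on the current noise, your one-line justification of \eqref{eq:stoeq} (``every predictable $\sigma_t$ preserves the marginal law'') no longer applies --- one must verify, as the paper does, that the swap set is invariant under the swap and that $D^1_t,D^2_t$ are i.i.d.\ and independent of the conditioning event, so that the relabeled noise still has the law of $(A^1_t,D^1_t,A^2_t,D^2_t)$.
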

\begin{proof}
See Appendix \ref{app:infinite1}
\end{proof}

Lemma \ref{lm:inf:stobdd} means that the uncontrolled queue lengths are longer than the queue lengths under policy $\hat{g}$ in a stochastic sense. Note that (\ref{eq:stosumbdd}) and (\ref{eq:stomaxbdd}) are not true if $Y^i_t, i=1,2$, is replaced by $X^{i,g_0}_t, i=1,2$,
as the following example shows.
\\
\underline{Example}
\\
When $X^{1,g_0}_t=4,X^{2,g_0}_t=6$ and $X^{1,\hat{g}}_t=X^{2,\hat{g}}_t=5$, the analogues of (\ref{eq:stosumbdd}) and (\ref{eq:stomaxbdd}) 
where $Y^i_t$ are replaced by $X^{i,g_0}_t, i=1,2$ are
\begin{align}
& X^{1,\hat{g}}_t+X^{2,\hat{g}}_t = X^{1,g_0}_t+X^{2,g_0}_t = 10 ,\\
& \max_i\left(X^{i,\hat{g}}_t\right) =5  \leq 6 = \max_i\left(X^{i,g_0}_t\right).
\end{align}
However, if $A^1_{t+1}=1,A^2_{t+1}=0$ and $D^1_{t+1}=0,D^2_{t+1}=1$ we get
$X^{1,g_0}_{t+1}=X^{2,g_0}_{t+1}=5$ and $X^{1,\hat{g}}_{t+1}=6, X^{2,\hat{g}}_{t+1}=4$, then
\begin{align}
& \max_i\left(X^{i,\hat{g}}_{t+1}\right) =6 > 5=  \max_i\left(X^{i,g_0}_{t+1}\right),
\end{align}
and the analogue of (\ref{eq:stomaxbdd}), when $Y^i_t$ is replaced by $X^{i,g_0}_t, i=1,2$, does not hold.

The stochastic dominance relation asserted by Lemma \ref{lm:inf:stobdd} implies that the instantaneous cost under policy $\hat{g}$ is almost surely no greater than the instantaneous cost due to policy $g_0$. This implication is made precise by the following lemma.

\begin{lemma}
\label{lm:inf:bddunctrl}
The processes $\{Y^{1}_t, t\in \mathbb{Z}_+\}$ and $\{Y^{2}_t, t\in \mathbb{Z}_+\}$ defined in Lemma \ref{lm:inf:stobdd}
are such that at any time $t$ 
\begin{align}
c\left(X^{1,\hat{g}}_t\right)+c\left(X^{2,\hat{g}}_t\right) \leq 
c\left(Y^{1}_t\right)+c\left(Y^{2}_t\right) \quad a.s.
\end{align}
\begin{flushright}
$\square$
\end{flushright}
\end{lemma}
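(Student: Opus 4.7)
The plan is to reduce the claim to a pointwise deterministic inequality that can then be applied sample path by sample path using the almost-sure bounds already established in Lemma \ref{lm:inf:stobdd}. The key deterministic fact to prove is: for nonnegative integers $a_1, a_2, b_1, b_2$ with $a_1 + a_2 \leq b_1 + b_2$ and $\max(a_1, a_2) \leq \max(b_1, b_2)$, any convex and increasing function $c$ satisfies $c(a_1) + c(a_2) \leq c(b_1) + c(b_2)$.

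To prove this deterministic lemma I would relabel so that $a_1 \leq a_2$ and $b_1 \leq b_2$; the max hypothesis then reads $a_2 \leq b_2$. I would split into two cases. If $a_1 \leq b_1$, then monotonicity of $c$ gives $c(a_1) \leq c(b_1)$ and $c(a_2) \leq c(b_2)$, and adding yields the claim. If instead $a_1 > b_1$, then $b_1 < a_1 \leq a_2 \leq b_2$, so each $a_i$ lies in $[b_1,b_2]$ and may be written as a convex combination $a_i = (1-\lambda_i)\, b_1 + \lambda_i\, b_2$ with $\lambda_i = (a_i - b_1)/(b_2 - b_1) \in [0,1]$. Convexity of $c$ yields $c(a_i) \leq (1-\lambda_i)\, c(b_1) + \lambda_i\, c(b_2)$, hence $c(a_1) + c(a_2) \leq (2 - \lambda_1 - \lambda_2)\, c(b_1) + (\lambda_1 + \lambda_2)\, c(b_2)$. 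Subtracting $c(b_1)+c(b_2)$ from both sides reduces the claim to $(\lambda_1 + \lambda_2 - 1)(c(b_2) - c(b_1)) \leq 0$, which holds because $c(b_2) \geq c(b_1)$ by monotonicity and $\lambda_1 + \lambda_2 = (a_1 + a_2 - 2b_1)/(b_2 - b_1) \leq 1$ is precisely the sum hypothesis $a_1 + a_2 \leq b_1 + b_2$.

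With the deterministic lemma in hand, I would apply it pointwise on the probability space with $a_i = X^{i,\hat{g}}_t(\omega)$ and $b_i = Y^i_t(\omega)$. Its hypotheses hold almost surely by (\ref{eq:stosumbdd}) and (\ref{eq:stomaxbdd}) of Lemma \ref{lm:inf:stobdd}, so the desired inequality $c(X^{1,\hat{g}}_t) + c(X^{2,\hat{g}}_t) \leq c(Y^1_t) + c(Y^2_t)$ follows almost surely.

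The main subtlety is recognizing the correct form of majorization to feed into convexity. The example immediately following Lemma \ref{lm:inf:stobdd} shows that $(X^{1,\hat{g}}_t, X^{2,\hat{g}}_t)$ cannot be coupled with $(Y^1_t, Y^2_t)$ componentwise, so a term-by-term monotonicity argument is unavailable. The joint sum-and-max bound is exactly the strength needed: it controls both the total magnitude and the largest component, and this is what makes the nontrivial \emph{spread} case $a_1 > b_1$ succumb to convexity. Everything else is a routine monotonicity step.
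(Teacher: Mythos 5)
Your proposal is correct and follows essentially the same route as the paper: both split into the case where componentwise domination holds (handled by monotonicity) and the nested case $b_1 < a_1 \leq a_2 \leq b_2$ (handled by convexity together with the sum bound), the only cosmetic difference being that you express convexity via convex combinations while the paper uses the equivalent chord-slope inequality.
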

\begin{proof}
See Appendix \ref{app:infinite1}
\end{proof}
In order to apply the result of Step 1 as the time horizon goes to infinity, we need the following result on the convergence of the cost due to $\{Y^{1}_t, t\in \mathbb{Z}_+\}$ and $\{Y^{2}_t, t\in \mathbb{Z}_+\}$.
\begin{lemma}
\label{lm:inf:WTconv}
Let $\{Y^{1}_t, t\in \mathbb{Z}_+\}$ and $\{Y^{2}_t, t\in \mathbb{Z}_+\}$ be the processes defined in Lemma \ref{lm:inf:stobdd}.
Let $W_T$ denote 
\begin{align}
W_T := \frac{1}{T}\sum_{t=0}^{T-1}\left(c(Y^1_t)+c(Y^2_t)\right).
\end{align}
Under Assumptions \ref{assum:initial} and \ref{assum:cost},
\begin{align}
\lim_{T \rightarrow \infty} W_T = J^{g_0} \quad a.s.
\end{align}
Moreover, $\{W_T, T=1,2,\dots\}$ is uniformly integrable, so it also converges in expectation.
\begin{flushright}
$\square$
\end{flushright}
\end{lemma}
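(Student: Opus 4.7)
The plan is to establish the two claims of the lemma in a modular way: first the almost-sure convergence of $W_T$, then the convergence of its expectation, and finally use these together with Scheff\'e's lemma to deduce uniform integrability.

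For the almost-sure convergence, I would invoke the ergodic theorem for positive-recurrent Markov chains. Under the open-loop policy $g_0$, the two queue-length processes $\{X^{1,g_0}_t\}$ and $\{X^{2,g_0}_t\}$ evolve as independent, irreducible, aperiodic birth--death chains on $\mathbb{Z}_+$, which by Assumption \ref{assum:mulambda} ($\mu>\lambda$) are positive recurrent with well-defined stationary distributions $\pi^i_\infty$, $i=1,2$. Since the process $\{Y^i_t\}$ constructed in Lemma \ref{lm:inf:stobdd} has the same distribution as $\{X^{i,g_0}_t\}$, it is a Markov chain with the same transition kernel, so the ergodic theorem applied to each marginal yields
\[
\frac{1}{T}\sum_{t=0}^{T-1} c(Y^i_t) \; \xrightarrow{\ \mathrm{a.s.}\ } \; \mathbf{E}_{\pi^i_\infty}\bigl[c(X^i)\bigr],\qquad i=1,2,
\]
where finiteness of the right-hand side follows from Assumption \ref{assum:cost} together with the remark following it (which identifies that assumption with integrability of $c$ under the stationary distribution). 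Summing the two marginal a.s.\ limits gives $W_T\to J^{g_0}$ almost surely.

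Next, for the convergence of the mean, I would exploit the equality in distribution of $\{Y^i_t\}$ and $\{X^{i,g_0}_t\}$ at the level of one-dimensional marginals, which, by linearity of expectation, gives
\[
\mathbf{E}[W_T] \;=\; \frac{1}{T}\sum_{t=0}^{T-1}\bigl(\mathbf{E}[c(X^{1,g_0}_t)]+\mathbf{E}[c(X^{2,g_0}_t)]\bigr) \;=\; \frac{1}{T}\, J^{g_0}_T(\pi^1_0,\pi^2_0),
\]
and the right-hand side tends to $J^{g_0}$ by Assumption \ref{assum:cost}.

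Finally, for uniform integrability, I would observe that $W_T\geq 0$, $W_T\to J^{g_0}$ almost surely, and $\mathbf{E}[W_T]\to J^{g_0}=\mathbf{E}[J^{g_0}]<\infty$. Scheff\'e's lemma (applied to nonnegative integrands whose integrals converge to the integral of the a.s.\ limit) then upgrades the a.s.\ convergence to $L^1$-convergence, i.e.\ $W_T\to J^{g_0}$ in $L^1$. Since $L^1$-convergence of a sequence is equivalent to convergence in probability together with uniform integrability, this yields the desired uniform integrability of $\{W_T, T=1,2,\dots\}$ and, in particular, convergence in expectation. The main technical point is ensuring that the ergodic limit obtained from the marginal birth--death chains indeed coincides with the constant $J^{g_0}$ fixed in Assumption \ref{assum:cost}; once this identification is in place, Scheff\'e's lemma cleanly closes the loop between almost-sure convergence and $L^1$-convergence, and no further moment bounds on $W_T$ need to be established by hand.
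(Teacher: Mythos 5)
Your high-level route (marginal ergodic theorem for the a.s.\ limit, then Scheff\'e's lemma to upgrade to $L^1$-convergence and hence uniform integrability) is a legitimate alternative to the paper's argument, which instead proves uniform integrability directly by a change-of-measure estimate. But as written there is a genuine gap at exactly the point you flag as ``the main technical point'' and then do not prove: the identification of the almost-sure ergodic limit $L:=2\sum_x\pi^{g_0}(x)c(x)$ with the constant $J^{g_0}$, equivalently the claim that $\mathbf{E}[W_T]=\tfrac{1}{T}J^{g_0}_T(\pi^1_0,\pi^2_0)$ converges to $L$. Scheff\'e requires $\mathbf{E}[W_T]\to\mathbf{E}[L]$ where $L$ is the a.s.\ limit; if you read Assumption \ref{assum:cost} pathwise (as the ``a.s.'' and the Remark suggest), then $J^{g_0}=L$ by definition but the convergence $\mathbf{E}[W_T]\to J^{g_0}$ is precisely the interchange of limit and expectation that uniform integrability is supposed to deliver, so the argument is circular; if you read it as convergence of the deterministic sequence $\tfrac{1}{T}J^{g_0}_T$, then you still must show that this limit equals $L$, i.e.\ that $\mathbf{E}_{\pi^i_0}[c(X^{g_0}_t)]$ converges (in Ces\`aro mean) to $\sum_x\pi^{g_0}(x)c(x)$ when the chain is started from $\pi^i_0$ rather than from stationarity. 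Neither is free.

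The missing ingredient is where Assumption \ref{assum:initial} enters, and it is telling that your proof never uses it. Because $\pi^i_0$ is finitely supported and $\pi^{g_0}$ has full support, one has $\pi^i_0(x)\leq R\,\pi^{g_0}(x)$ with $R:=\max_{x\leq M}\pi^i_0(x)/\pi^{g_0}(x)<\infty$; this domination is what the paper uses to bound $\mathbf{E}\bigl[W^i_T 1_{\{W^i_T>N\}}\bigr]$ by $R$ times the corresponding quantity for the stationary chain, for which Birkhoff's theorem gives $L^1$-convergence and hence uniform integrability. The same domination (together with aperiodicity and dominated convergence, since $\mathbf{P}_{\pi^i_0}(X_t=x)\leq R\,\pi^{g_0}(x)$) is what you would need to close your gap and justify $\mathbf{E}[W_T]\to L$. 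So your Scheff\'e route can be made to work, but only after importing essentially the same estimate the paper proves directly; as it stands the proof assumes the hard part.
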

\begin{proof}
See Appendix \ref{app:infinite1}
\end{proof}
A direct consequence of Lemmas \ref{lm:inf:stobdd}, \ref{lm:inf:bddunctrl} and \ref{lm:inf:WTconv} is the following.

\begin{corollary}
\label{cor:infcostfinite}
If $\lim_{T \rightarrow \infty} \frac{1}{T} \sum_{t=0}^{T-1}\left(c\left(X^{1,\hat{g}}\right)+c\left(X^{2,\hat{g}}\right)\right)$ converges a.s., then,
\begin{align}
\frac{1}{T} \sum_{t=0}^{T-1}\left(c\left(X^{1,\hat{g}}\right)+c\left(X^{2,\hat{g}}\right)\right)
\longrightarrow J^{\hat{g}}(\pi^1_0,\pi^2_0) 
\end{align}
in expectation and a.s. as $T\rightarrow \infty$. Furthermore,
\begin{align}
J^{\hat{g}}(\pi^1_0,\pi^2_0) \leq J^{g_0}<\infty.
\end{align}
\begin{flushright}
$\square$
\end{flushright}
\end{corollary}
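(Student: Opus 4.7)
The plan is to chain together the three lemmas via a straightforward a.s.\ domination argument, upgrading pointwise inequalities to limit statements, and then invoke uniform integrability to pass to expectations. Let me set $Z_T := \frac{1}{T}\sum_{t=0}^{T-1}\bigl(c(X^{1,\hat{g}}_t)+c(X^{2,\hat{g}}_t)\bigr)$ and recall from Lemma~\ref{lm:inf:WTconv} the notation $W_T := \frac{1}{T}\sum_{t=0}^{T-1}\bigl(c(Y^1_t)+c(Y^2_t)\bigr)$, where $\{Y^1_t\},\{Y^2_t\}$ are the processes constructed in Lemma~\ref{lm:inf:stobdd}.

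First I would apply Lemma~\ref{lm:inf:bddunctrl} term-by-term to get the a.s.\ domination $Z_T \leq W_T$ for every $T$. Combined with the a.s.\ convergence of $W_T$ to $J^{g_0}$ asserted by Lemma~\ref{lm:inf:WTconv} and the hypothesis that $Z_T$ converges a.s.\ to some limit $Z_\infty$, this yields
\begin{align}
Z_\infty \;=\; \lim_{T\to\infty} Z_T \;\leq\; \lim_{T\to\infty} W_T \;=\; J^{g_0} \quad \text{a.s.}
\end{align}
This already establishes the a.s.\ half of the claim and, taking expectations of the pointwise limit, gives $\mathbf{E}[Z_\infty]\leq J^{g_0}<\infty$.

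Next I would upgrade a.s.\ convergence of $Z_T$ to convergence in $L^1$. Since $0\leq Z_T \leq W_T$ a.s.\ and $\{W_T\}$ is uniformly integrable (Lemma~\ref{lm:inf:WTconv}), the family $\{Z_T\}$ is also uniformly integrable. Combined with a.s.\ convergence $Z_T \to Z_\infty$, this gives convergence in expectation, i.e.\ $\mathbf{E}[Z_T]\to \mathbf{E}[Z_\infty]$. Therefore the $\limsup$ in the definition of $J^{\hat{g}}(\pi^1_0,\pi^2_0)$ is in fact a limit, and
\begin{align}
J^{\hat{g}}(\pi^1_0,\pi^2_0) \;=\; \lim_{T\to\infty}\mathbf{E}[Z_T] \;=\; \mathbf{E}[Z_\infty] \;\leq\; J^{g_0},
\end{align}
completing the proof.

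I do not expect any step to be a serious obstacle, since the three lemmas have essentially been engineered to slot together here. The only point that requires mild care is keeping the distinction between the coupled processes $\{Y^i_t\}$ (for which a.s.\ pointwise domination holds) and the original uncontrolled processes $\{X^{i,g_0}_t\}$ (which only agree in distribution), so that the a.s.\ inequality $Z_T\leq W_T$ is legitimate; once this is in hand the rest is a standard uniform-integrability argument.
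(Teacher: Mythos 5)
Your argument is correct and follows essentially the same route as the paper's proof: the a.s.\ domination $Z_T \le W_T$ from Lemma~\ref{lm:inf:bddunctrl}, uniform integrability of $\{Z_T\}$ inherited from $\{W_T\}$ via Lemma~\ref{lm:inf:WTconv}, and the resulting convergence in expectation together with $\mathbf{E}[Z_T]\le\mathbf{E}[W_T]$ to conclude $J^{\hat g}(\pi^1_0,\pi^2_0)\le J^{g_0}$. Your cautionary remark about distinguishing the coupled processes $\{Y^i_t\}$ from $\{X^{i,g_0}_t\}$ is exactly the point the paper is careful about as well.
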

\begin{proof}
See Appendix \ref{app:infinite1}
\end{proof}

\subsubsection*{Step 2}

We prove that under policy $\hat{g}$ the queues are eventually balanced. For this matter we first establish some preliminary results that appear in 
Lemmas \ref{lm:inf:MC} and \ref{lm:inf:00io}.
\begin{lemma}
\label{lm:inf:MC}
Let $T_0$ be a stopping time with respect to the process $\{X^{1,\hat{g}}_t,X^{2,\hat{g}}_t,t\in \mathbb{Z}_+\}$.
Define the process $\{S_t=S^{\hat{g}}_t,t\geq T_0+1\}$ as follows.
\begin{align}
S_{T_0+1} :=& X^{1,\hat{g}}_{T_0+1}+X^{2,\hat{g}}_{T_0+1} \label{pflm:inf:MC:eqS1}\\
S_{t+1} :=& S_t-D^1_t -D^2_t +A^1_t+A^2_t \nonumber\\
         &+ 1_{\{S_t=1\}}\left(1_{\left\{X^{1,\hat{g}}_t=0\right\}}(D^1_t- D^2_t)+D^2_t \right)\nonumber\\
         &+ 1_{\{S_t=0\}}\left(D^1_t+D^2_t \right)\label{pflm:inf:MC:eqS2}
\end{align}
If $\mu> \lambda>0$, then $\{S_t,t\geq T_0+1\}$ is an irreducible positive recurrent Markov chain. 
\begin{flushright}
$\square$
\end{flushright}
\end{lemma}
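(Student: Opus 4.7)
The plan is to analyze $\{S_t\}_{t\geq T_0+1}$ as a birth-and-death-type random walk on $\mathbb{Z}_+$ with uniformly bounded jumps and strictly negative drift outside a finite exception set, and then to invoke Foster's criterion (together with irreducibility) to conclude positive recurrence.

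First I would verify that $\{S_t\}_{t\geq T_0+1}$ is a time-homogeneous Markov chain. Inspecting (\ref{pflm:inf:MC:eqS2}), for $S_t\geq 2$ the indicator corrections vanish and $S_{t+1}=S_t-D^1_t-D^2_t+A^1_t+A^2_t$, whose distribution depends only on $S_t$ and the innovations $(A^i_t,D^i_t)_{i=1,2}$, which are independent of the history through time $t$. For $S_t=0$, the correction cancels both departures, yielding $S_{t+1}=A^1_t+A^2_t$. The delicate case is $S_t=1$, where the defining equation uses the potentially history-dependent indicator $1_{\{X^{1,\hat{g}}_t=0\}}$: if $X^{1,\hat{g}}_t=0$ then $S_{t+1}=1-D^2_t+A^1_t+A^2_t$, while otherwise $S_{t+1}=1-D^1_t+A^1_t+A^2_t$. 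In both cases $S_{t+1}$ has the form $1-D+A^1_t+A^2_t$ with $D\sim\text{Bernoulli}(\mu)$ independent of $(A^1_t,A^2_t)$ and of the history up to time $t$. Hence the conditional law of $S_{t+1}$ given $S_t=1$ is the same regardless of the realized value of $X^{1,\hat{g}}_t$, and integrating out $X^{1,\hat{g}}_t$ against its conditional distribution yields the Markov property.

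Next I would compute the one-step drift using the Lyapunov function $V(s):=s$. For $s\geq 2$ the drift equals $2\lambda-2\mu=-2(\mu-\lambda)$, which is strictly negative by Assumption \ref{assum:mulambda}; for $s\in\{0,1\}$ the expected increment is bounded (in fact at most $2\lambda$). Jumps are uniformly bounded by $2$ in absolute value, so integrability is trivial. For irreducibility, from any $s\geq 2$ the chain can reach $\{0,1\}$ through consecutive ``two departures, no arrivals'' steps, each of positive probability $\mu^2(1-\lambda)^2$; from $\{0,1\}$ it reaches $0$ in one step with positive probability; and from $0$ it reaches any $s\geq 1$ in finitely many steps via suitable arrival sequences (since from state $0$ one has $S_{t+1}=2$ with probability $\lambda^2>0$, and from $s\geq 2$ one can take $S_{t+1}=s+2$ with probability $(1-\mu)^2\lambda^2>0$). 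Thus all states in $\mathbb{Z}_+$ communicate. Foster's criterion with the exception set $F=\{0,1\}$ then delivers positive recurrence.

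The main obstacle is the Markov-property verification outlined above: the defining equation for $S_{t+1}$ explicitly invokes the history-dependent quantity $X^{1,\hat{g}}_t$, and one must argue that this appearance is only ``cosmetic'' in the sense that the conditional law of $S_{t+1}$ given $S_t$ is unaffected by swapping which of $D^1_t$ or $D^2_t$ gets subtracted. This relies on the i.i.d.\ Bernoulli$(\mu)$ structure of the service processes and their independence from the filtration up to time $t$. Once this point is settled, the drift, irreducibility, and Foster-theorem arguments are all standard.
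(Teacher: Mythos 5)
Your proposal is correct and follows essentially the same route as the paper: the same case analysis ($S_t\geq 2$, $S_t=1$, $S_t=0$) for the Markov property, with the key observation for $S_t=1$ that swapping $D^1_t$ for $D^2_t$ is harmless because the two are i.i.d.\ Bernoulli$(\mu)$ and independent of $X^{1,\hat g}_t$ and of the past (this is exactly the paper's step of replacing $D^2_t$ by $D^1_t$), followed by irreducibility from $\lambda,\mu>0$ and Foster's theorem with the negative drift $2\lambda-2\mu$ on $\{s\geq 2\}$. Your treatment of irreducibility is in fact slightly more explicit than the paper's one-line assertion.
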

\begin{proof}
See Appendix \ref{app:infinite2}
\end{proof}
Lemma \ref{lm:inf:MC} holds for arbitrary stopping time $T_0$ with respect to $\{X^{1,\hat{g}}_t,X^{2,\hat{g}}_t,t\in \mathbb{Z}_+\}$. 
By appropriately selecting $T_0$ we will show later that $S_t$ is coupled with $X^{1,\hat{g}}_t+X^{2,\hat{g}}_t$, i.e.
for all $t>T_0$, $S_t=X^{1,\hat{g}}_t+X^{2,\hat{g}}_t$.
This result along with the fact that the process $\{S_t,t\geq T_0+1\}$ is an irreducible positive recurrent Markov chain will allow us to analyze the cost due to policy $\hat{g}$.

\begin{lemma}

\label{lm:inf:00io}
Under policy $\hat{g}$, 
\begin{align}
\mathbf{P}\left(\left(U^{1,\hat{g}}_t,U^{2,\hat{g}}_t\right)=(0,0) \quad i.o.\right)  = 1.
\end{align}
\begin{flushright}
$\square$
\end{flushright}
\end{lemma}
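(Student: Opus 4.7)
The plan is to combine the stochastic dominance of Lemma \ref{lm:inf:stobdd} with a conditional Borel--Cantelli argument. The key observation is that if $X^{1,\hat{g}}_t = X^{2,\hat{g}}_t = 0$ and $A^1_t = A^2_t = 0$, then $\overline{X}^{i,\hat{g}}_t = (0-D^i_t)^+ + 0 = 0$ for both $i$, and the action constraint \eqref{Model:defUspace} forces $(U^{1,\hat{g}}_t, U^{2,\hat{g}}_t) = (0,0)$. Hence it suffices to prove that the event
\[
E_t := \{X^{1,\hat{g}}_t = X^{2,\hat{g}}_t = 0,\ A^1_t = A^2_t = 0\}
\]
occurs infinitely often almost surely.

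To handle the first half of $E_t$ I would invoke Lemma \ref{lm:inf:stobdd} to obtain the a.s.\ bound $X^{1,\hat{g}}_t + X^{2,\hat{g}}_t \leq Y^1_t + Y^2_t$, with each $Y^i$ distributed as the uncontrolled queue $X^{i,g_0}$. Since the exogenous streams $\{A^i_t,D^i_t\}_{t\geq 0}$ are mutually independent across $i=1,2$ by the system model, the construction underlying Lemma \ref{lm:inf:stobdd} can be taken so that $Y^1$ and $Y^2$ are themselves independent. Under Assumption \ref{assum:mulambda}, each $Y^i$ is then a positive recurrent, irreducible birth--death chain on $\mathbb{Z}_+$, and the product chain $(Y^1_t,Y^2_t)$ is a positive recurrent, irreducible Markov chain on $\mathbb{Z}_+^2$ with product stationary distribution. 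It therefore returns to $(0,0)$ infinitely often a.s., and by the sum bound this forces $X^{1,\hat{g}}_t = X^{2,\hat{g}}_t = 0$ i.o.\ a.s.

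For the second half, let $\mathcal{F}_t$ denote the sigma-algebra generated by the history of the system up to (but not including) the arrivals and departures at time $t$, so that $X^{i,\hat{g}}_t$ is $\mathcal{F}_t$-measurable while $(A^1_t,A^2_t)$ is independent of $\mathcal{F}_t$. Then
\[
\mathbf{P}(E_t \mid \mathcal{F}_t) = (1-\lambda)^2 \cdot \mathbf{1}_{\{X^{1,\hat{g}}_t = X^{2,\hat{g}}_t = 0\}},
\]
so $\sum_t \mathbf{P}(E_t\mid \mathcal{F}_t)=\infty$ a.s.\ by the previous step, and the conditional (Levy's) Borel--Cantelli lemma delivers $E_t$ i.o.\ a.s., completing the proof. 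The main obstacle is the joint-recurrence step: Lemma \ref{lm:inf:stobdd} only specifies the marginal laws of $Y^1$ and $Y^2$, so one must either check that its appendix construction produces independent copies (built from the two independent exogenous streams), or otherwise verify that $(Y^1,Y^2)$ is a positive recurrent Markov chain on $\mathbb{Z}_+^2$ returning to $(0,0)$; everything else in the argument is routine.
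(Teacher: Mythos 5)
Your reduction is fine as far as it goes: if $X^{1,\hat{g}}_t=X^{2,\hat{g}}_t=0$ and $A^1_t=A^2_t=0$ then $\overline{X}^{i,\hat{g}}_t=0$ and \eqref{Model:defUspace} forces $(U^{1,\hat{g}}_t,U^{2,\hat{g}}_t)=(0,0)$, and the conditional Borel--Cantelli step correctly converts ``both queues empty i.o.'' into ``$E_t$ i.o.'' because $(A^1_t,A^2_t)$ is independent of the history that determines $X^{i,\hat{g}}_t$. The genuine gap is exactly the step you flag as the ``main obstacle,'' and the fix you propose for it does not work. Lemma \ref{lm:inf:stobdd} asserts only that each $\{Y^i_t\}$ \emph{marginally} has the law of the uncontrolled queue $\{X^{i,g_0}_t\}$; it says nothing about the joint law of $(Y^1,Y^2)$. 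Inspecting the construction in Appendix \ref{app:infinite1}, the pair is emphatically \emph{not} independent: the drivers $(\tilde{A}^{M_y}_t,\tilde{D}^{M_y}_t,\tilde{A}^{m_y}_t,\tilde{D}^{m_y}_t)$ are obtained by permuting the physical arrival/departure variables according to which queue is currently longer in \emph{both} the $Y$-system and the $\hat{g}$-controlled system, and in ``case 1'' the departure variables are swapped between the two coordinates. As a result $(Y^1_t,Y^2_t)$ is not even a Markov chain on its own (its transition depends on the $X^{\hat{g}}$ process), and joint positive recurrence at $(0,0)$ does not follow from the two marginal recurrences --- two marginally recurrent chains need not be simultaneously at the origin infinitely often. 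So the claim ``$(Y^1_t,Y^2_t)=(0,0)$ i.o.'' is unsupported, and with it the claim ``$X^{1,\hat{g}}_t+X^{2,\hat{g}}_t=0$ i.o.'' Note also that this last claim is essentially as strong as the lemma itself, so you have not reduced the core difficulty.

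The paper avoids this entirely by arguing by contradiction. If $(0,0)$ fails to occur i.o.\ with positive probability, there is a deterministic time $t_0$ with $\mathbf{P}(E_{t_0})>0$, where $E_{t_0}$ is the event that at least one controller routes at every $t\ge t_0$. On $E_{t_0}$ one shows inductively that the actual sum $X^{1,\hat{g}}_t+X^{2,\hat{g}}_t$ coincides with the auxiliary process $S_t$ of Lemma \ref{lm:inf:MC} and, moreover, never returns to $0$ (routing at $t-1$ forces $\overline{X}^{1,\hat{g}}_{t-1}+\overline{X}^{2,\hat{g}}_{t-1}\ge 1$). Since Lemma \ref{lm:inf:MC} makes $\{S_t\}$ an irreducible positive recurrent Markov chain, staying away from $0$ forever on a set of positive probability is a contradiction. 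If you want to keep your forward (non-contradiction) strategy, you would need to establish recurrence of the origin for the \emph{controlled} joint process directly --- for example via a Foster--Lyapunov argument on the Markov chain $\bigl(X^{1,\hat{g}}_t,X^{2,\hat{g}}_t,\overline{UB}_t,\overline{LB}_t\bigr)$ --- which is substantially more work than citing Lemma \ref{lm:inf:stobdd}.
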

\begin{proof}
See Appendix \ref{app:infinite2}
\end{proof}
Lemma \ref{lm:inf:00io} means that the event $\{ $ there exists $t_0<\infty$ such that at least one of the queue lengths is above the threshold defined by \eqref{eq:th} for all $t>t_0$  $\}$ can not happen. 
The idea of Lemma \ref{lm:inf:00io} is the following. 
If one of the queues, say $Q_1$, has length above the threshold, hence above the lower bound $LB^{\hat{g}}_t$, 
then, the length of $Q_2$ does not decrease, because under policy $\hat{g}$ $Q_2$ receives one customer from $Q_1$
and has at most one departure at this time.
Therefore, both queue lengths at the next time are bounded below by the current lower bound $LB^{\hat{g}}_t$.
When at least one of the queue lengths is above the threshold for all $t>t_0$, the queue lengths are bounded below by 
$LB^{\hat{g}}_{t_0}$ for all $t>t_0$. This kind of lower bound can not exist if the total arrival rate $2\lambda$ to the system is less than the total departure rate $2\mu$ from the system.

Lemma \ref{lm:inf:00io} and Corollary \ref{cor:difdec} in Section \ref{sec:prelim}
can be used to establish that under policy $\hat{g}$ the queues are eventually balanced.
This is shown in the corollary below.

\begin{corollary}
\label{cor:inf:dif1}
Let
\begin{align}
T_0 := \inf\{t : UB^{\hat{g}}_{t}-LB^{\hat{g}}_{t} \leq 1 \}.
\end{align}
Then
\begin{align}
\mathbf{P}(T_0<\infty) =1
\end{align}
and
\begin{align}
\left(UB^{\hat{g}}_t-LB^{\hat{g}}_t\right)  \leq 1 \text{ for all } t \geq T_0.
\label{eq:UBLBevenless1}
\end{align}

\begin{flushright}
$\square$
\end{flushright}
\end{corollary}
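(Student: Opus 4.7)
The second assertion is immediate: by construction $T_0$ is the first time the difference $UB^{\hat{g}}_t-LB^{\hat{g}}_t$ drops to at most $1$, so applying \eqref{eq:UBLBevenless1} of Corollary \ref{cor:difdec} with $t_0=T_0$ yields $UB^{\hat{g}}_t-LB^{\hat{g}}_t\le 1$ for all $t\ge T_0$. So the substantive task is to show $\mathbf{P}(T_0<\infty)=1$.

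The plan for the first assertion is to combine the ``halving on $(0,0)$'' dynamics from Corollary \ref{cor:difdec} with the infinite-occurrence guarantee from Lemma \ref{lm:inf:00io}. By Assumption \ref{assum:initial}, the initial PMFs $\pi^1_0,\pi^2_0$ are supported in $\{0,1,\dots,M\}$ for some finite $M$, so $UB^{\hat{g}}_0-LB^{\hat{g}}_0\le M$. By Corollary \ref{cor:difdec}, the sequence $\{UB^{\hat{g}}_t-LB^{\hat{g}}_t\}_{t\ge 0}$ is non-increasing, and at each time $t$ at which the joint decision is $(U^{1,\hat{g}}_t,U^{2,\hat{g}}_t)=(0,0)$, the difference is reduced to at most $\lceil(UB^{\hat{g}}_t-LB^{\hat{g}}_t)/2\rceil$. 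Iterating this halving bound, after $k$ occurrences of the $(0,0)$ event the difference is bounded above by the $k$-fold ceiling-halving iterate of $M$, which reaches $1$ after at most $k^\ast:=\lceil\log_2 M\rceil$ occurrences.

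Let $\tau_k$ denote the time of the $k$-th occurrence of the event $\{(U^{1,\hat{g}}_t,U^{2,\hat{g}}_t)=(0,0)\}$. Lemma \ref{lm:inf:00io} asserts that this event occurs infinitely often almost surely, so on an event of probability one we have $\tau_{k^\ast}<\infty$. On this event the halving argument gives $UB^{\hat{g}}_{\tau_{k^\ast}+1}-LB^{\hat{g}}_{\tau_{k^\ast}+1}\le 1$, hence $T_0\le \tau_{k^\ast}+1<\infty$. This yields $\mathbf{P}(T_0<\infty)=1$, completing the proof.

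I do not expect any real obstacle here: the dynamics bound from Corollary \ref{cor:difdec} and the i.o.\ statement from Lemma \ref{lm:inf:00io} plug together cleanly, and the only care needed is to track that the ceiling-halving iteration of a finite integer terminates at $1$ in a finite (deterministic) number of steps once $M$ is finite, which is precisely what Assumption \ref{assum:initial} supplies.
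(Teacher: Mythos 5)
Your proposal is correct and is exactly the argument the paper intends: the paper states the corollary as an immediate consequence of Lemma \ref{lm:inf:00io} (the $(0,0)$ event occurs infinitely often a.s.) and Corollary \ref{cor:difdec} (monotonicity plus ceiling-halving of $UB^{\hat{g}}_t-LB^{\hat{g}}_t$ at each $(0,0)$ time), together with the finite initial support from Assumption \ref{assum:initial}. Your write-up simply makes the finite iteration count $\lceil\log_2 M\rceil$ and the stopping-time bound $T_0\le\tau_{k^\ast}+1$ explicit, which is a faithful filling-in of the paper's omitted details.
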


\subsubsection*{Step 3}
We compare the finite horizon cost $J^{\hat{g}}_T(0,0)$ (respectively, the infinite horizon cost $J^{\hat{g}}(0,0)$) due to policy $\hat{g}$ under initial PMFs $(0,0)$
to the minimum finite horizon cost $\inf_{g\in \mathcal{G}_d} J^g_T (\pi^1_0,\pi^2_0)$ (respectively, the minimum infinite horizon cost $\inf_{g\in \mathcal{G}_d} J^g (\pi^1_0,\pi^2_0)$) under arbitrary initial PMFs $(\pi^1_0,\pi^2_0)$.
\begin{lemma}
\label{lm:cen}
For any finite time $T$ and any initial PMFs $\pi^1_0,\pi^2_0$.
\begin{align}
J^{\hat{g}}_T(0,0)=
\inf_{g\in \mathcal{G}_c} J^g_T(0,0) \leq \inf_{g\in \mathcal{G}_c} J^g_T (\pi^1_0,\pi^2_0) \leq \inf_{g\in \mathcal{G}_d} J^g_T (\pi^1_0,\pi^2_0),
\end{align}
and
\begin{align}
J^{\hat{g}}(0,0)=
\inf_{g\in \mathcal{G}_c} J^g(0,0) \leq \inf_{g\in \mathcal{G}_c} J^g (\pi^1_0,\pi^2_0) \leq \inf_{g\in \mathcal{G}_d} J^g (\pi^1_0,\pi^2_0).
\end{align}
\begin{flushright}
$\square$
\end{flushright}
\end{lemma}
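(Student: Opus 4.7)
The display decomposes into three inequalities (both for the finite and infinite horizons), and I would derive the infinite horizon statement from the finite horizon one by dividing by $T$ and taking $\limsup_{T\to\infty}$. The first equality $J^{\hat g}_T(0,0)=\inf_{g\in\mathcal G_c} J^g_T(0,0)$ is the $x_0=0$ instance of (\ref{comp:eqJeq}) in Section \ref{sub:finite:comp}: since $X^1_0=X^2_0=0$ is common knowledge, $\hat g$ is simultaneously an optimal decentralized and optimal centralized policy. The last inequality $\inf_{g\in\mathcal G_c}J^g_T(\pi^1_0,\pi^2_0)\le \inf_{g\in\mathcal G_d}J^g_T(\pi^1_0,\pi^2_0)$ is immediate from $\mathcal G_d\subset\mathcal G_c$, as noted in Section \ref{sub:finite:comp}.

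The substantive step is the middle inequality. I would reduce it to monotonicity of the centralized value function $V^*_T(x,y):=\inf_{g\in\mathcal G_c}J^g_T(x,y)$ in the initial state: if $V^*_T$ is nondecreasing in each coordinate, then
\begin{align*}
\inf_{g\in\mathcal G_c}J^g_T(\pi^1_0,\pi^2_0)
= \mathbf E_{(\pi^1_0,\pi^2_0)}\bigl[V^*_T(X^1_0,X^2_0)\bigr]
\ge V^*_T(0,0)
= \inf_{g\in\mathcal G_c}J^g_T(0,0).
\end{align*}
To prove this monotonicity I would use a pathwise coupling: fix $(x,y)\le(x',y')$ componentwise, let $g^*$ be optimal from $(x',y')$, and run two coupled systems driven by common arrival/service sequences $\{A^i_t,D^i_t\}$. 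In the smaller system, apply a shadow policy $\tilde g$ that copies $g^*$'s decisions whenever feasible under (\ref{Model:defUspace}), defaulting to $\tilde U^i_t=0$ when $g^*$ would route from a currently empty queue. A short induction on $t$ using (\ref{Model:dynamic1})--(\ref{Model:dynamic3}), with a case analysis over the four combinations of forced deviations in queue $1$ and/or queue $2$, yields $X^{i,\tilde g}_t\le X^{i,g^*}_t$ almost surely for $i=1,2$. Since $c_t$ is increasing, this pathwise ordering implies $V^*_T(x,y)\le J^{\tilde g}_T(x,y)\le J^{g^*}_T(x',y')=V^*_T(x',y')$.

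For the infinite horizon, dividing each of the three finite-horizon (in)equalities by $T$ and taking $\limsup_{T\to\infty}$ preserves them because all the quantities involved are nonnegative; the identity $J^{\hat g}(0,0)=\inf_{g\in\mathcal G_c}J^g(0,0)$ follows from the finite-$T$ identity combined with $\hat g\in\mathcal G_c$. The main obstacle is the induction step in the coupling: the four-way case check must verify that whenever $\tilde g$ is forced to \emph{miss} a routing ($\tilde U^i_t=0$ while $U^{i,g^*}_t=1$), the inequality $\overline X^{i,g^*}_t\ge 1$ simultaneously absorbs the missing outflow from queue $i$ and the missing inflow to queue $j$ in the shadow system, so that the componentwise ordering is preserved in both coordinates at once. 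Once this case analysis is carried out, monotonicity of $c_t$ converts the pathwise ordering into a cost ordering and the infima inherit the inequality.
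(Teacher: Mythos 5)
Your proposal is correct and follows essentially the same route as the paper: the first equality and last inequality are handled identically, and your middle step---a shadow policy on the smaller initial state that copies the centralized policy's routing decisions except when the source queue is empty, together with an inductive pathwise domination $X^{i,\tilde g}_t\le X^{i,g}_t$ and monotonicity of $c_t$---is exactly the coupling construction in the paper's Appendix proof. The only cosmetic differences are that you phrase the comparison as monotonicity of the value function over arbitrary componentwise-ordered initial states and invoke an optimal $g^*$ (the paper instead compares against every fixed $g\in\mathcal{G}_c$, which sidesteps attainment of the infimum), neither of which changes the substance.
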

\begin{proof}
See Appendix \ref{app:lmcen}.
\end{proof}

Lemma \ref{lm:cen} states that the minimum cost achieved when the queues are initially empty is smaller than the minimum cost obtained when the system's initial condition is given by arbitrary PMFs on the lengths of queues $Q_1$ and $Q_2$.
This result is established through the use of the corresponding centralized information system that is discussed in Section \ref{sub:finite:comp}.

\subsubsection*{Step 4}
Based on the results of Steps 1, 2 and 3 we now establish the optimality of policy $\hat{g}$ for the infinite horizon average cost per unit time problem formulated in Section \ref{sub:model:infinite}.
First, we outline the key ideas in the proof of Theorem \ref{thm:inf}, then we present its proof.
Step 2 ensures that policy $\hat{g}$ eventually (in finite time) balances the queues.
Step 1 ensures that the cost $J^{\hat{g}}(\pi^1_0,\pi^2_0)$ is finite.
These two results together imply that the cost due to policy $\hat{g}$ is the same as the cost incurred after the queues are balanced.
Furthermore, we show that the cost of policy $\hat{g}$ is independent of the initial PMFs on the queue lengths.
Then, the result of Step 3 together with the results on the finite horizon problem establish the optimality of policy $\hat{g}$.


\begin{proof}[Proof of Theorem \ref{thm:inf}]
Define $T_0$ to be the first time when the length of the joint support of PMFs 
$\Pi^{1,\hat{g}}_t,\Pi^{2,\hat{g}}_t$ is no more than $1$.
That is
\begin{align}
T_0 = \inf\{t : UB^{\hat{g}}_{t}-LB^{\hat{g}}_{t} \leq 1 \}.
\end{align}
The random variable $T_0$ is a stopping time with respect to the process $\{X^{1,\hat{g}}_t,X^{2,\hat{g}}_t, t\in\mathbb{Z}_+\}$.
From Corollary \ref{cor:inf:dif1} we have
\begin{align}
&\mathbf{P}(T_0 < \infty )=1, \\
& UB^{\hat{g}}_{t}-LB^{\hat{g}}_{t} \leq 1 \text{ for all } t \geq T_0.
\end{align}
Furthermore, for all $t \geq T_0$
\begin{align}
&\left| X^{1,\hat{g}}_{t}-X^{2,\hat{g}}_{t}\right|
 \leq UB^{\hat{g}}_{t}-LB^{\hat{g}}_{t} \leq 1.
 \label{eq:pfinf:UBLBless1}
\end{align}
Consider the process $\{S_t,t\geq T_0+1\}$ defined by \eqref{pflm:inf:MC:eqS1} and \eqref{pflm:inf:MC:eqS2} (in Lemma \ref{lm:inf:MC}).
We claim that for all $t \geq T_0+1$
\begin{align}
X^{1,\hat{g}}_{t}+X^{2,\hat{g}}_{t}= S_t.
\label{eq:pfinf:claim}
\end{align}
We prove the claim in Appendix \ref{app:claiminfthminf}.
Suppose the claim is true.
Since $\left| X^{1,\hat{g}}_{t}-X^{2,\hat{g}}_{t}\right| \leq 1$ for all $t \geq T_0+1$,
the instantaneous cost at time $t \geq T_0+1$ is equal to 
\begin{align}
&c\left(X^{1,\hat{g}}_t\right)+c\left(X^{2,\hat{g}}_t\right)\nonumber\\
= & c\left(\left\lfloor \frac{1}{2}(X^{1,\hat{g}}_t+X^{2,\hat{g}}_t)\right\rfloor\right)
+c\left(\left\lceil \frac{1}{2}(X^{1,\hat{g}}_t+X^{2,\hat{g}}_t)\right\rceil\right)\nonumber\\
= & c\left(\left\lfloor \frac{1}{2}S^{\hat{g}}_t\right\rfloor\right)
+c\left(\left\lceil \frac{1}{2}S^{\hat{g}}_t\right\rceil\right).
\end{align}
Then, the average cost per unit time due to policy $\hat{g}$ is given by
\begin{align}
 &\frac{1}{T}\sum_{t=0}^{T-1}
\left(c\left(X^{1,\hat{g}}_t\right)+c\left(X^{2,\hat{g}}_t\right) \right)\nonumber\\
= &\frac{1}{T}
\sum_{t=0}^{T_0}
\left(c\left(X^{1,\hat{g}}_t\right)+c\left(X^{2,\hat{g}}_t\right)\right)\nonumber\\
&+\frac{1}{T}\sum_{t=T_0+1}^{T-1}
\left(c\left(\left\lfloor \frac{1}{2}S^{\hat{g}}_t\right\rfloor\right)
+c\left(\left\lceil \frac{1}{2}S^{\hat{g}}_t\right\rceil\right)\right).
\end{align}
Since $T_0<\infty \, a.s.$, we obtain
\begin{align}
&\lim_{T\rightarrow \infty}\frac{1}{T}\sum_{t=0}^{T-1}
\left(c\left(X^{1,\hat{g}}_t\right)+c\left(X^{2,\hat{g}}_t\right)\right) \nonumber\\
= &\lim_{T\rightarrow \infty}\frac{1}{T}\sum_{t=0}^{T_0}
\left(c\left(X^{1,\hat{g}}_t\right)
+c\left(X^{2,\hat{g}}_t\right) \right)\nonumber\\
&+\lim_{T\rightarrow \infty}\frac{1}{T}\sum_{t=T_0(+1}^{T-1}
\left(c\left(\left\lfloor \frac{1}{2}S^{\hat{g}}_t\right\rfloor\right)
+c\left(\left\lceil \frac{1}{2}S^{\hat{g}}_t\right\rceil\right) \right)\nonumber\\
= &\lim_{t\rightarrow \infty}\frac{1}{T}\sum_{t=T_0+1}^{T-1}
\left(c\left(\left\lfloor \frac{1}{2}S^{\hat{g}}_t\right\rfloor\right)
+c\left(\left\lceil \frac{1}{2}S^{\hat{g}}_t\right\rceil\right) \right)\nonumber\\
= & \sum_{s=0}^{\infty}\pi^{\hat{g}}(s) \left(c\left(\left\lfloor \frac{1}{2}s\right\rfloor\right)
+c\left(\left\lceil \frac{1}{2}s\right\rceil\right) \right) a.s.
\label{eq:inf:lim}
\end{align}
where $\pi^{\hat{g}}(s)$ is the stationary distribution of $\{S_t=S^{\hat{g}}_t, t\geq T_0+1\}$. The second equality in (\ref{eq:inf:lim}) holds because $T_0<\infty \, a.s.$;
the last equality in (\ref{eq:inf:lim}) follows by the Ergodic theorem for irreducible positive recurrent Markov chains \cite[chap. 3]{bremaud1999markov}.
\\
Since the sum $\frac{1}{T}\sum_{t=0}^{T-1}
\left(c\left(X^{1,\hat{g}}_t\right)+c\left(X^{2,\hat{g}}_t\right)\right)$ converges $a.s.$, 
from Corollary \ref{cor:infcostfinite} we have
\begin{align}
J^{\hat{g}}(\pi^1_0,\pi^2_0) 
= &\lim_{T\rightarrow\infty} \frac{1}{T} \sum_{t=0}^{T-1}\left(c\left(X^{1,\hat{g}}\right)+c\left(X^{2,\hat{g}}\right)\right) \nonumber\\
= & \sum_{s=0}^{\infty}\pi^{\hat{g}}(s) \left(c\left(\left\lfloor \frac{1}{2}s\right\rfloor\right)
+c\left(\left\lceil \frac{1}{2}s\right\rceil\right) \right).
\label{eq:inf:limE}
\end{align}
Since the right hand side of equation \eqref{eq:inf:limE} does not depend on the initial PMFs $\pi^1_0,\pi^2_0$, we obtain
\begin{align}
J^{\hat{g}}(\pi^1_0,\pi^2_0) = J^{\hat{g}}(0,0).
\label{eq:inf:costis00}
\end{align}
Combining (\ref{eq:inf:costis00}) and Lemma \ref{lm:cen} we get
\begin{align}
J^{\hat{g}}(\pi^1_0,\pi^2_0) = J^{\hat{g}}(0,0) \leq \inf_{g \in \mathcal{G}_d}J^{g}(\pi^1_0,\pi^2_0).
\end{align}
Thus, $\hat{g}$ is an optimal routing policy for the infinite horizon problem.

\end{proof}

\section{Discussion and Conclusion}
\label{sec:diss}

Based on the results established in Sections \ref{sec:SR}-\ref{sec:infinite}, we now discuss and answer the questions posed in Section \ref{sec:intro}.

Controllers $C_1$ and $C_2$ communicate with one another through their control actions; thus, each controller's information depends on the decision rule/routing policy of the other controller. Therefore, the queueing system considered in this paper has non-classical information structure \cite{witsenhausen1971separation}. A key feature of the system's information structure is that at each time instant each controller's information consists of one component that is common knowledge between $C_1$ and $C_2$ and another component that is its own private information. 
The presence of common information allows us to use the common information approach, developed in \cite{nayyar2013decentralized}, along with specific features of our model to identify an information state/sufficient statistic for the finite and infinite horizon optimization problem. The identification/discovery of an appropriate information state proceeds in two steps: 
In the first step we use the common information approach (in particular \cite{mahajan2012optimal}) to identify the general form of an information state (namely $\left(\overline{X}^i_t,\overline{\Pi}^1_t,\overline{\Pi}^2_t\right)$) for controller $C_i,i=1,2$. (and the corresponding structure of an optimal policy, Properties \ref{P:sep2}).
In the second step we take advantage of the features of our system to further refine/simplify the information state; we discover a simpler form of information state, namely, 
$\left( \overline{X}^i_t, \;\left\{\overline{UB}^{j}_t,\overline{LB}^{j}_t\right\}_{j=1,2} \right)$ for controller $C_i,i=1,2$.
The component $\left\{\overline{UB}^{j}_t,\overline{LB}^{j}_t\right\}_{j=1,2}$ of the above information state describes the common information between controllers $C_1$ and $C_2$ at time $t, t=1,2,\dots$.

Using this common information we established an optimal signaling strategy that is described by the threshold policy $\hat{g}$ specified in Section \ref{sec:prelim}.

The update of $\left\{\overline{UB}^{j}_t,\overline{LB}^{j}_t\right\}_{j=1,2}$ is described by \eqref{eq:UBLBupdate00}-\eqref{eq:UBLBupdate10} and explicitly depends on the signaling policy $\hat{g}$.
Specifically, if a customer is sent from $Q_i$ to $Q_j\,(i \neq j)$ at time $t$ 
the lower bound on the queue length of $Q_i$ increases because
both controllers know that 
the length of $Q_i$ is above the threshold $TH_t$ at the time of routing;
if no customer is sent from $Q_i$ to $Q_j$ at time $t$, 
the upper bound on the length of $Q_i$ decreases because
both controllers know that 
the length of $Q_i$ is below the threshold $TH_t$ at the time of routing.
The update of common information incorporates the information about a controller's private information transmitted to the other controller through signaling.

The signaling policy $\hat{g}$ communicates information in such a way that eventually the difference between the upper bound and the lower bound on the queue lengths is no more than one. Thus, signaling through $\hat{g}$ results in a balanced queueing system.

\appendix
\section{Proofs of the Results in Section \ref{sec:prelim}}
\label{app:prelim}

\begin{proof}[Proof of Lemma \ref{lm:bounds}]
Since there is one possible arrival to any queue and one possible departure from any queue at each time instant, 
(\ref{eq:UBLBbarupdate}) holds.

When $(U^{1,\hat{g}}_t,U^{2,\hat{g}}_t)=(0,0)$, both $\overline{X}^{1,\hat{g}}_t$ and $\overline{X}^{2,\hat{g}}_t$ are below the threshold and no customers are routed form any queue. 
Therefore, the upper bound of the queue lengths at $t+1$ is
\begin{align}
UB^{\hat{g}}_{t+1} 
=& \left\lceil
TH_t
\right \rceil-1.
\label{eq:UB00}
\end{align}
Moreover, the lower bound of the queue lengths at $t+1$ is the same as the lower bound of $\overline{X}^{1,\hat{g}}_t,\overline{X}^{2,\hat{g}}_t$. That is,
\begin{align}
LB^{\hat{g}}_{t+1} =&\overline{LB}^{\hat{g}}_t.
\label{eq:LB00}
\end{align}
When $(U^{1,\hat{g}}_t,U^{2,\hat{g}}_t)=(1,1)$, both $\overline{X}^{1,\hat{g}}_t$ and $\overline{X}^{2,\hat{g}}_t$ are greater than or equal to the threshold. 
Since the routing only exchanges two customers between the two queues, the queue lengths remain the same as the queue lengths before routing.
As a result, the upper bound and lower bound of the queue lengths at $t+1$ are given by
\begin{align}
UB^{\hat{g}}_{t+1} =& \overline{UB}^{\hat{g}}_t.\label{eq:UB11}\\
LB^{\hat{g}}_{t+1} =&\left\lceil TH_t \right\rceil. \label{eq:LB11}
\end{align}
When $(U^{i,\hat{g}}_t,U^{j,\hat{g}}_t)=(1,0), i\neq j$, $\overline{X}^{i,\hat{g}}_t$ is greater than or equal to the threshold; 
$\overline{X}^{j,\hat{g}}_t$ is below the threshold.
Since one customer is routed from $Q_i$ to $Q_j$,
\begin{align}
&X^{i,\hat{g}}_{t+1} = \overline{X}^{i,\hat{g}}_t-1,\\
&X^{j,\hat{g}}_{t+1} = \overline{X}^{j,\hat{g}}_t+1.
\end{align} 
Therefore, the upper bound of the queue lengths at $t+1$ becomes
\begin{align}
UB^{\hat{g}}_{t+1} 
=& \max \left\lbrace
\overline{UB}^{i,\hat{g}}_t-1
,\left\lceil
TH_t
\right \rceil-1 + 1\right\rbrace\nonumber\\
=& \max \left\lbrace
\overline{UB}^{i,\hat{g}}_t-1
,\left\lceil
TH_t
\right \rceil\right\rbrace, \label{eq:UB10}
\end{align}
and lower bound of the queue lengths at $t+1$ is given by
\begin{align}
LB^{\hat{g}}_{t+1} 
=& \min \left\lbrace
\left\lceil
TH_t
\right \rceil -1 
,\overline{LB}^{j,\hat{g}}_t+1
\right\rbrace.
 \label{eq:LB10}
\end{align}
\end{proof}

\section{Proofs of the Results in Section \ref{sec:finite}}
\label{app:finite}

\begin{proof}[Proof of Lemma \ref{lm:stoless}]
The proof is done by induction.\\
At time $t=0$, $X^{1,\hat{g}}_0+X^{2,\hat{g}}_0 = X^{1,g}_0+X^{2,g}_0=x_0$.\\
Suppose the lemma is true at time $t$.\\
At time $t+1$, from the system dynamics (\ref{Model:dynamic1})-(\ref{Model:dynamic3}) we get, for any $g$,
\begin{align}
&X^{1,g}_{t+1}+X^{2,g}_{t+1}\nonumber\\
=& \left(X^{1,g}_t - D^1_t \right)^+ +\left(X^{2,g}_t - D^2_t \right)^+ +A^1_t+A^2_t.
\label{pflmstoless:eq1}
\end{align}
Therefore, it suffices to show that
\begin{align}
\left(X^{1,\hat{g}}_t - D^1_t \right)^+ +\left(X^{2,\hat{g}}_t - D^2_t \right)^+ 
\leq_{st} & \left(X^{1,g}_t - D^1_t \right)^+ +\left(X^{2,g}_t - D^2_t \right)^+.
\label{pflmstoless:eqnoA}
\end{align}
Consider any realization $(X^{1,g}_t,X^{2,g}_t) = (x^1,x^2)$.
\\
If $x^1,x^2 >0$, then $\left\lfloor \frac{1}{2}(x^1+x^2 )\right\rfloor,\left\lceil \frac{1}{2}(x^1+x^2)\right\rceil >0$.
Therefore,
\begin{align}
&\left(X^{1,g}_t - D^1_t \right)^+ +\left(X^{2,g}_t - D^2_t \right)^+\nonumber\\
=&x^1+x^2- D^1_t- D^2_t\nonumber\\
=&\left(\left\lfloor \frac{1}{2}(x^1+x^2 )\right\rfloor- D^1_t\right)^+
 +\left(\left\lceil \frac{1}{2}(x^1+x^2)\right\rceil- D^2_t\right)^+.
 \label{pflmstoless:eq1}
\end{align}
If $x^i=0$ and $x^j \geq 2$ ($i\neq j$), then $\left\lfloor \frac{1}{2}(x^1+x^2 )\right\rfloor>0$ and $\left\lceil \frac{1}{2}(x^1+x^2)\right\rceil >0$.
Therefore, 
\begin{align}
&\left(X^{1,g}_t - D^1_t \right)^+ +\left(X^{2,g}_t - D^2_t \right)^+\nonumber\\
=&x^j- D^j_t\nonumber\\
\geq &x^1+x^2- D^1_t- D^2_t\nonumber\\
=&\left(\left\lfloor \frac{1}{2}(x^1+x^2 )\right\rfloor- D^1_t\right)^+
 +\left(\left\lceil \frac{1}{2}(x^1+x^2)\right\rceil- D^2_t\right)^+.
 \label{pflmstoless:eq2}
\end{align}
If $x^i=0$ and $x^j =1 $ ($i\neq j$), then $\left\lfloor \frac{1}{2}(x^1+x^2 )\right\rfloor=0$ and $\left\lceil \frac{1}{2}(x^1+x^2)\right\rceil =1$.
Therefore, 
\begin{align}
&\left(X^{1,g}_t - D^1_t \right)^+ +\left(X^{2,g}_t - D^2_t \right)^+\nonumber\\
=&1- D^j_t\nonumber\\
\geq_{st} &1-D^2_t\nonumber\\
=&\left(\left\lfloor \frac{1}{2}(x^1+x^2 )\right\rfloor- D^1_t\right)^+
 +\left(\left\lceil \frac{1}{2}(x^1+x^2)\right\rceil- D^2_t\right)^+,
  \label{pflmstoless:eq3}
\end{align}
If $x^1,x^2 =0$, then $\left\lfloor \frac{1}{2}(x^1+x^2 )\right\rfloor,\left\lceil \frac{1}{2}(x^1+x^2)\right\rceil =0$.
Therefore, 
\begin{align}
&\left(X^{1,g}_t - D^1_t \right)^+ +\left(X^{2,g}_t - D^2_t \right)^+\nonumber\\
=&0\nonumber\\
=&\left(\left\lfloor \frac{1}{2}(x^1+x^2 )\right\rfloor- D^1_t\right)^+
 +\left(\left\lceil \frac{1}{2}(x^1+x^2)\right\rceil- D^2_t\right)^+.
  \label{pflmstoless:eq4}
\end{align}
As a result of (\ref{pflmstoless:eq1})-(\ref{pflmstoless:eq4}), we obtain
\begin{align}
&\left(X^{1,g}_t - D^1_t \right)^+ +\left(X^{2,g}_t - D^2_t \right)^+\nonumber\\
\geq_{st} & 
\left(\left\lfloor \frac{1}{2}(X^{1,g}_t+X^{2,g}_t )\right\rfloor- D^1_t\right)^+
+\left(\left\lceil \frac{1}{2}(X^{1,g}_t+X^{2,g}_t)\right\rceil- D^2_t\right)^+.
\label{pflmstoless:eq2}
\end{align}
Then, from (\ref{pflmstoless:eq2}), the induction hypothesis and Corollary \ref{cor:bal} we obtain
\begin{align}
&\left(X^{1,g}_t - D^1_t \right)^+ +\left(X^{2,g}_t - D^2_t \right)^+\nonumber\\
\geq_{st} 
&\left(\left\lfloor \frac{1}{2}(X^{1,g}_t+X^{2,g}_t )\right\rfloor- D^1_t\right)^+
+\left(\left\lceil \frac{1}{2}(X^{1,g}_t+X^{2,g}_t)\right\rceil- D^2_t\right)^+\nonumber\\
\geq_{st} 
&\left(\left\lfloor \frac{1}{2}(X^{1,\hat{g}}_t+X^{2,\hat{g}}_t )\right\rfloor- D^1_t\right)^+
+\left(\left\lceil \frac{1}{2}(X^{1,\hat{g}}_t+X^{2,\hat{g}}_t)\right\rceil- D^2_t\right)^+ \nonumber\\
= & \left(\min(X^{1,\hat{g}}_t,X^{2,\hat{g}}_t)- D^1_t\right)^++\left(\max(X^{1,\hat{g}}_t,X^{2,\hat{g}}_t)- D^2_t\right)^+\nonumber\\
\geq_{st} & \left(X^{1,\hat{g}}_t- D^1_t\right)^+ +\left(X^{2,\hat{g}}_t- D^2_t\right)^+.
\label{pflmstoless:eq3}
\end{align}
The first and second stochastic inequalities in \eqref{pflmstoless:eq3} follow from 
 \eqref{pflmstoless:eq2} and the induction hypothesis, respectively.
The equality in \eqref{pflmstoless:eq3} follows from Corollary \ref{cor:bal}.
The last stochastic inequality in \eqref{pflmstoless:eq3} is true because $D^1_t, D^2_t$ are i.i.d. and independent of $X^{1,\hat{g}}_t, X^{2,\hat{g}}_t$.
\\
Thus, inequality (\ref{pflmstoless:eqnoA}) is true, and the proof of the lemma is complete.
\end{proof}

\section{Proofs of the Results Associated with Step 1 of the Proof of Theorem \ref{thm:inf}}
\label{app:infinite1}

\begin{proof}[Proof of Lemma \ref{lm:inf:stobdd}]
The proof is done by induction.
At $t=0$, (\ref{eq:stoeq}), (\ref{eq:stosumbdd}) and (\ref{eq:stomaxbdd}) hold if we let 
$Y^{i}_0 = X^{i,g_0}_0$  for $i=1,2$. \\
Assume the assertion of this lemma is true at time $t$; we want to show that the assertion is also true at time $t+1$.
\\
For that matter we claim the following.
\\
\underline{Claim 1}
\begin{align}
X^{1,\hat{g}}_{t+1}+X^{2,\hat{g}}_{t+1} = \overline{X}^{1,\hat{g}}_t+\overline{X}^{2,\hat{g}}_t \quad a.s. ,
\label{eq:xghatsum}
\\
\max_i\left(X^{i,\hat{g}}_{t+1}\right) \leq \max_i\left(\overline{X}^{i,\hat{g}}_t\right) \quad a.s.
\label{eq:xghatmax}
\end{align}
\underline{Claim 2}
\\
There exists $Y^{i}_{t+1}, i=1,2$ such that
\begin{align}
 \mathbf{P}\left(Y^{i}_{t+1} = y_{t+1} | Y^{i}_{0:t}=y_{0:t}\right) 
= &\mathbf{P}\left( X^{i,g0}_{t+1} =y_{t+1}| X^{i,g_0}_{0:t}=y_{0:t}\right) 
\text{ for all }y_{0:t},
\label{eq:yxg0eq}
\\
\overline{X}^{1,\hat{g}}_t+\overline{X}^{2,\hat{g}}_t 
\leq&  Y^{1}_{t+1}+Y^{2}_{t+1} \quad a.s.,  \label{eq:xbarysum}
\\
\max_i\left(\overline{X}^{i,\hat{g}}_t\right) 
\leq &\max_i\left(Y^{i}_{t+1}\right) \quad a.s.\label{eq:xbarymax}
\end{align}
We assume the above claims to be true and prove them after the completion of the proof of the induction step.
\\
For all $y_{0:t+1}$, from (\ref{eq:yxg0eq}) and the induction hypothesis for (\ref{eq:stoeq}) we get for $i=1,2$
\begin{align}
& \mathbf{P}\left(Y^{i}_{0:t+1} = y_{0:t+1}\right) \nonumber\\
=& \mathbf{P}\left(Y^{i}_{t+1} = y_{t+1}| Y^{i}_{0:t}=y_{0:t}\right) 
	 \mathbf{P}\left(Y^{i}_{t}=y_t,\dots, Y^{i}_{0}=y_0\right) \nonumber\\
= &\mathbf{P}\left( X^{i,g0}_{t+1} =y_{t+1}| X^{i,g_0}_{0:t}=y_{0:t}\right)  	
	\mathbf{P}\left( X^{i,g_0}_{0:t}=y_{0:t}\right)  \nonumber\\
=&\mathbf{P}\left( X^{i,g0}_{0:t+1} =y_{0:t+1}\right).
\label{eq:indeq}
\end{align}
From (\ref{eq:xghatsum}) and (\ref{eq:xbarysum}) we obtain
\begin{align}
X^{1,\hat{g}}_{t+1}+X^{2,\hat{g}}_{t+1}
= &  \overline{X}^{1,\hat{g}}_t+\overline{X}^{2,\hat{g}}_t \nonumber\\
\leq &Y^{1}_{t+1}+Y^{2}_{t+1}  \quad a.s. \label{eq:indsum}
\end{align}
Furthermore, combination of (\ref{eq:xghatmax}) and (\ref{eq:xbarymax}) gives
\begin{align}
\max_i\left(X^{i,\hat{g}}_{t+1}\right) 
\leq \max_i\left(\overline{X}^{i,\hat{g}}_t\right)
= \max_i\left(Y^{i}_{t+1}\right)  \quad a.s.\label{eq:indmax}
\end{align}
Therefore, the assertions (\ref{eq:stoeq}), (\ref{eq:stosumbdd}) and (\ref{eq:stomaxbdd}) of the lemma are true at $t+1$
 by (\ref{eq:indeq}), (\ref{eq:indsum}) and (\ref{eq:indmax}), respectively.
\\
We now prove claims 1 and 2.
\\
\underline{Proof of Claim 1}\\
From the system dynamics \eqref{Model:dynamic1}-\eqref{Model:dynamic2}
\begin{align}
X^{1,\hat{g}}_{t+1} = \overline{X}^{i,\hat{g}}_t-U^{i,\hat{g}}_t+U^{j,\hat{g}}_t,
\label{eq:pflmstobdd:dyn1}
\\
X^{2,\hat{g}}_{t+1} = \overline{X}^{i,\hat{g}}_t-U^{i,\hat{g}}_t+U^{j,\hat{g}}_t.
\label{eq:pflmstobdd:dyn2}
\end{align}
Therefore, (\ref{eq:xghatsum}) follows by summing (\ref{eq:pflmstobdd:dyn1}) and (\ref{eq:pflmstobdd:dyn2}).
\\
For (\ref{eq:xghatmax}), consider $X^{1,\hat{g}}_{t+1}$ ( the case of $X^{2,\hat{g}}_{t+1}$ follows from similar arguments).
\\
When $U^{2,\hat{g}}_t = 0$, 
\begin{align}
X^{1,\hat{g}}_{t+1} = \overline{X}^{1,\hat{g}}_t-U^{1,\hat{g}}_t \leq \max_i\left(\overline{X}^{i,\hat{g}}_t\right).
\label{eq:xghatmax1}
\end{align}
When $U^{1,\hat{g}}_t =U^{2,\hat{g}}_t = 1$, 
\begin{align}
X^{1,\hat{g}}_{t+1} = \overline{X}^{1,\hat{g}}_t \leq \max_i\left(\overline{X}^{i,\hat{g}}_t\right).\label{eq:xghatmax2}
\end{align}
When $U^{1,\hat{g}}_t = 0, U^{2,\hat{g}}_t = 1$, $\overline{X}^{1,\hat{g}}_t$ is less than the threshold and 
$\overline{X}^{2,\hat{g}}_t$ is greater than or equal to the threshold. Therefore, by \eqref{eq:pflmstobdd:dyn1},
\begin{align}
X^{1,\hat{g}}_{t+1} = \overline{X}^{1,\hat{g}}_t +1 
\leq & \left\lceil
TH_t
\right \rceil \nonumber\\
\leq & \overline{X}^{2,\hat{g}}_t
\leq  \max_i\left(\overline{X}^{i,\hat{g}}_t\right). \label{eq:xghatmax3}
\end{align}
Therefore, (\ref{eq:xghatmax}) follows from (\ref{eq:xghatmax1})-(\ref{eq:xghatmax3}).
\\
\underline{Proof of Claim 2}\\
We set
\begin{align}
Y^i_{t+1} := \left( Y^{i}_t -\tilde{D}^i_t\right)^+ +\tilde{A}^i_t
\label{eq:ytp1}
\end{align}
where $Y^i_t$ satisfy the induction hypothesis, and $\tilde{A}^i_t, \tilde{D}^i_t, i=1,2$ are 
specified as follows.
Let
\begin{align}
M_x =& \text{argmax}_i\{X^{i,\hat{g}}_t\}, \quad 
m_x = \text{argmin}_i\{X^{i,\hat{g}}_t\}\\
M_y =& \text{argmax}_i\{Y^i_t\}, \quad 
m_y = \text{argmin}_i\{Y^i_t\},
\end{align}
where $M_x=1, m_x = 2$ (resp. $M_y=1, m_y = 2$) when $\{X^{1,\hat{g}}_t=X^{2,\hat{g}}_t\}$ (resp. $\{Y^1_t=Y^2_t\}$); define
\begin{align}
\left(\tilde{A}^{M_y}_t,\tilde{D}^{M_y}_t,\tilde{A}^{m_y}_t,\tilde{D}^{m_y}_t\right)
:=
\left\lbrace
\begin{array}{ll}
\left(A^{M_x}_t,D^{m_x}_t,A^{m_x}_t,D^{M_x}_t\right)  &\text{ in case 1},
\\
\left(A^{M_x}_t,D^{M_x}_t,A^{m_x}_t,D^{m_x}_t\right)  &\text{ in case 2},
\end{array}
\right.
\label{eq:ADtilde}
\end{align}
where the two cases are :\\
Case 1: 
$\{Y^{M_y}_{t}-1 = X^{M_x,\hat{g}}_{t}=X^{m_x,\hat{g}}_{t}$ and
$\left(A^{M_x}_t,D^{M_x}_t,A^{m_x}_t,D^{m_x}_t\right) = (0,1,1,0) \text{ or } (0,0,1,1) \}$.
\\
Case 2: All other instances.
\\

\underline{Assertion}: The random variables $Y^1_{t+1},Y^2_{t+1}$, defined by 
\eqref{eq:ytp1}-\eqref{eq:ADtilde} satisfy \eqref{eq:yxg0eq}-\eqref{eq:xbarymax}.

As the proof of this assertion is long, we first provide a sketch of its proof and then we provide a full proof.
\\
\underline{Sketch of the proof of the assertion}
\begin{itemize}
\item Equation \eqref{eq:ADtilde} implies the following:
In case 2 we associate the arrival to and the departure from the longer queue $M_x$ to those of the longer queue $M_y$,
i.e. we set $\tilde{A}^{M_y}_t = A^{M_x}_t, \tilde{D}^{M_y}_t = D^{M_x}_t$.
We do the same for the shorter queue $m_x, m_y$, i.e. $\tilde{A}^{m_y}_t = A^{m_x}_t, \tilde{D}^{m_y}_t = D^{m_x}_t$.
\\
In case 1, we have the same association for the arrivals as in case 2, that is $\tilde{A}^{M_y}_t = A^{M_x}_t, \tilde{A}^{m_y}_t = A^{m_x}_t$, but we reverse the association of the departures, that is $\tilde{D}^{M_y}_t = D^{m_x}_t, \tilde{D}^{m_y}_t = D^{M_x}_t$.
Therefore the arrivals $\tilde{A}^{i}_t$, and departures $\tilde{D}^{i}_t$, have the same distribution as the original $A^{i}_t, D^{i}_t$, respectively, $i=1,2$.
Then (\ref{eq:yxg0eq}) follows from \eqref{eq:ytp1}.

\item To establish \eqref{eq:xbarysum}, we note that,
because of \eqref{eq:ADtilde}, the sum of arrivals to (respectively, departures from)
queues $M_y$ and $m_y$ equals to the sum of arrivals to (respectively, departures from) queues $M_x$ and $m_x$.
\\
When $X^{i,\hat{g}}_t,Y^i_t \neq 0$, $i=1,2$, the function $(x-d)^+ + a$ is linear $x$, as $(x-d)^+ + a = x-d+a$. 
Then from \eqref{eq:ytp1}, \eqref{eq:ADtilde} and the induction hypothesis we obtain
\begin{align}
& Y^1_{t+1}+Y^2_{t+1} - \overline{X}^{1,\hat{g}}_t-\overline{X}^{2,\hat{g}}_t \nonumber\\
= &Y^1_t+Y^2_t - X^{1,\hat{g}}_t-X^{2,\hat{g}}_t \geq 0
\end{align}
and this establish (\ref{eq:xbarysum}) when $X^{i,\hat{g}}_t,Y^i_t \neq 0$, $i=1,2$.
In the full proof of the assertion, we show that show that (\ref{eq:xbarysum}) is also true when $X^{i,\hat{g}}_t,Y^i_t$ are not all non-zero.
\item To establish (\ref{eq:xbarymax}) we consider the maximum of the queue lengths.
In case 2, we show that \eqref{eq:ytp1}-\eqref{eq:ADtilde} ensure that
\begin{align}
&Y^{M_y}_{t+1} \geq \overline{X}^{M_x,\hat{g}}_t, 
\label{eq:MygeqMx2}
\\
&\max\left( Y^{M_y}_{t+1},Y^{m_y}_{t+1}\right) \geq \overline{X}^{m_x,\hat{g}}_t;
\label{eq:Mymygeqmx2}
\end{align}
then \eqref{eq:xbarymax} follows from \eqref{eq:MygeqMx2}-\eqref{eq:Mymygeqmx2}.
\\
In case 1 \eqref{eq:xbarymax} is verified by direct computation in the full proof.

\end{itemize}
\underline{Proof of the assertion}\\
For all $y_{0:t}$, we denote by $E_{y_{0:t}}$ the event $\{Y^{i}_{0:t}=y_{0:t} \}$.
\\
Let $\tilde{Z}_t=\left(\tilde{A}^{M_y}_t,\tilde{D}^{M_y}_t,\tilde{A}^{m_y}_t,\tilde{D}^{m_y}_t\right)$,
then for any realization $z_t \in \{0,1\}^4 $ of $\tilde{Z}_t$ we have
\begin{align}
& \mathbf{P}\left(\tilde{Z}_t =z_t | E_{y_{0:t}}\right) \nonumber\\
=& \mathbf{P}\left(\tilde{Z}_t =z_t, \text{case 1}| E_{y_{0:t}}\right)+ \mathbf{P}\left(\tilde{Z}_t =z_t, \text{case 2}| E_{y_{0:t}}\right).
\label{eq:ZtgivenEy}
\end{align}
When $z_t \neq (0,1,1,0)$ or $(0,0,1,1)$, we get 
\begin{align}
\mathbf{P}\left(\tilde{Z}_t =z_t, \text{case 1}| E_{y_{0:t}}\right)=0,
\label{eq:Ztneqcase1}
\end{align} 
and
\begin{align}
 & \mathbf{P}\left(\tilde{Z}_t =z_t, \text{case 2}| E_{y_{0:t}}\right)\nonumber\\
= & \mathbf{P}\left(\left(A^{M_x}_t,D^{M_x}_t,A^{m_x}_t,D^{m_x}_t\right)  =z_t | E_{y_{0:t}}\right)\nonumber\\
= & \mathbf{P}\left(\left(A^{1}_t,D^{1}_t,A^{2}_t,D^{2}_t\right)  =z_t\right),
\label{eq:Ztneqcase2}
\end{align}
where the last equality in (\ref{eq:Ztneqcase2}) holds because the random variables $A^{M_x}_t,D^{M_x}_t,A^{m_x}_t,D^{m_x}_t$ are independent of $Y_0,Y_1,\dots,Y_t$ and have the same distribution as $A^{1}_t,D^{1}_t,A^{2}_t,D^{2}_t$.
\\
Therefore, combining (\ref{eq:Ztneqcase1}) and (\ref{eq:Ztneqcase2}) we obtain for $z_t \neq (0,1,1,0)$ or $(0,0,1,1)$
\begin{align}
& \mathbf{P}\left(\tilde{Z}_t =z_t | E_{y_{0:t}}\right) = \mathbf{P}\left(\left(A^{1}_t,D^{1}_t,A^{2}_t,D^{2}_t\right)  =z_t\right)
\label{eq:Ztneq}
\end{align}
When $z_t = (0,1,1,0)$ or $(0,0,1,1)$, let $E$ denote the event $\{Y^{M_y}_{t}-1 = X^{M_x,\hat{g}}_{t}=X^{m_x,\hat{g}}_{t}\}$;
then we obtain 
\begin{align}
& \mathbf{P}\left(\tilde{Z}_t =z_t, \text{case 1}| E_{y_{0:t}}\right) \nonumber\\
= & \mathbf{P}\left(\left(A^{M_x}_t,D^{m_x}_t,A^{m_x}_t,D^{M_x}_t\right)  =z_t, E | E_{y_{0:t}}\right)\nonumber\\
= & \mathbf{P}\left(\left(A^{1}_t,D^{2}_t,A^{2}_t,D^{1}_t\right)  =z_t\right) \mathbf{P}\left(E | E_{y_{0:t}}\right),
\label{eq:Zteqcase1}
\end{align} 
and
\begin{align}
 & \mathbf{P}\left(\tilde{Z}_t =z_t, \text{case 2}| E_{y_{0:t}}\right)\nonumber\\
= & \mathbf{P}\left(\left(A^{M_x}_t,D^{M_x}_t,A^{m_x}_t,D^{m_x}_t\right)  =z_t, E^c | E_{y_{0:t}}\right)\nonumber\\
= & \mathbf{P}\left(\left(A^{1}_t,D^{1}_t,A^{2}_t,D^{2}_t\right)  =z_t\right) \mathbf{P}\left(E^c | E_{y_{0:t}}\right),
\label{eq:Zteqcase2}
\end{align}
where the last equality in (\ref{eq:Zteqcase1}) and (\ref{eq:Zteqcase2}) follow by the fact that the random variables
$A^{M_x}_t,D^{M_x}_t,A^{m_x}_t,D^{m_x}_t$ are independent of $Y_0,Y_1,\dots,Y_t$ (hence, the event $E$ which is generated by $Y_0,Y_1,\dots,Y_t$) and have the same distribution as $A^{1}_t,D^{1}_t,A^{2}_t,D^{2}_t$.
\\
Therefore, combining (\ref{eq:Zteqcase1}) and (\ref{eq:Zteqcase2}) we obtain for $z_t = (0,1,1,0)$ or $(0,0,1,1)$
\begin{align}
& \mathbf{P}\left(\tilde{Z}_t =z_t | E_{y_{0:t}}\right) \nonumber\\
= &\mathbf{P}\left(\left(A^{1}_t,D^{2}_t,A^{2}_t,D^{1}_t\right)  =z_t\right) \mathbf{P}\left(E | E_{y_{0:t}}\right)\nonumber\\
 & +\mathbf{P}\left(\left(A^{1}_t,D^{1}_t,A^{2}_t,D^{2}_t\right)  =z_t\right) \mathbf{P}\left(E^c | E_{y_{0:t}}\right)\nonumber\\
= & \mathbf{P}\left(\left(A^{1}_t,D^{1}_t,A^{2}_t,D^{2}_t\right)  =z_t\right),
\label{eq:Zteq}
\end{align}
where the last equality in (\ref{eq:Zteq}) is true because $A^{1}_t,D^{1}_t,A^{2}_t,D^{2}_t$ are independent and $D^1_t$ has the same distribution as $D^{2}_t$.
\\
As a result of (\ref{eq:Ztneq}) and (\ref{eq:Zteq}), for any $z_t \in \{0,1\}^4 $ we have
\begin{align}
& \mathbf{P}\left(\tilde{Z}_t =z_t | E_{y_{0:t}}\right) =\mathbf{P}\left(\left(A^{1}_t,D^{1}_t,A^{2}_t,D^{2}_t\right)  =z_t\right).
\label{eq:ZteqE}
\end{align}
Now consider any $y_{0:t+1}$. By \eqref{eq:ZteqE} we have for $i=M_y$ or $m_y$
\begin{align}
 &\mathbf{P}\left(Y^{i}_{t+1} = y_{t+1} | E_{y_{0:t}}\right) \nonumber\\
= &\mathbf{P}\left(\left(y^{i}_{t} - \tilde{D}^i_t \right)^+  +\tilde{A}^i_t = y_{t+1} | E_{y_{0:t}}\right) \nonumber\\
= &\mathbf{P}\left(\left(y^{i}_{t} - D^i_t \right)^+  +A^i_t = y_{t+1}\right) \nonumber\\
= &\mathbf{P}\left( X^{i,g0}_{t+1} =y_{t+1}| X^{i,g_0}_{0:t}=y_{0:t}\right) .
\end{align}
which is (\ref{eq:yxg0eq}).\\
Now consider the sum $Y^1_{t+1}+Y^2_{t+1}$.
\\
From \eqref{eq:ADtilde}, we know that 
\begin{align}
&\tilde{A}^{M_y}_t+\tilde{A}^{m_y}_t= A^{M_x}_t+A^{m_x}_t \quad a.s., \label{eq:AYX}\\
&\tilde{D}^{M_y}_t+\tilde{D}^{m_y}_t= D^{M_x}_t+D^{m_x}_t \quad a.s. \label{eq:DYX}
\end{align}
Therefore, (\ref{eq:AYX}) implies
\begin{align}
&Y^1_{t+1}+Y^2_{t+1} - \overline{X}^{1,\hat{g}}_{t+1}-\overline{X}^{1,\hat{g}}_{t+1} \nonumber\\
=&
\left(Y^{M_y}_{t} - \tilde{D}^{M_y}_t \right)^++\left(Y^{m_y}_{t} - \tilde{D}^{m_y}_t \right)^+ \nonumber\\
&- \left(X^{M_x,\hat{g}}_{t} - D^{M_x}_t \right)^+ -\left(X^{m_x,\hat{g}}_{t} - D^{m_x}_t \right)^+.
\label{eq:sumYXbar}
\end{align}
We proceed to show that the right hand side of (\ref{eq:sumYXbar}) is positive.
From the induction hypothesis for \eqref{eq:stomaxbdd}-\eqref{eq:stosumbdd} we have
\begin{align}
&Y^{m_y}_{t}+Y^{M_y}_{t} \geq X^{m_x,\hat{g}}_{t}+X^{M_x,\hat{g}}_{t}\quad a.s. ,
\label{eq:MyMxsum}
\\
&Y^{M_y}_{t} \geq X^{M_x,\hat{g}}_{t}\quad a.s.
\label{eq:MyMx}
\end{align}
There are three possibilities: $\{Y^{M_y}_{t} = X^{M_x,\hat{g}}_{t}\}$, 
 $\{Y^{M_y}_{t} > X^{M_x,\hat{g}}_{t}, X^{m_x,\hat{g}}_{t} = 0\}$ and
 $\{Y^{M_y}_{t} > X^{M_x,\hat{g}}_{t}, X^{m_x}_{t} > 0\}$.
\\
First consider $\{Y^{M_y}_{t} = X^{M_x,\hat{g}}_{t}\}$. 
By \eqref{eq:MyMxsum} we have
\begin{align}
Y^{m_y}_{t} \geq X^{m_x,\hat{g}}_{t} \quad a.s.
\label{eq:mymx}
\end{align}
Note that $\{Y^{M_y}_{t} = X^{M_x,\hat{g}}_{t}\}$ belongs to case 2 in \eqref{eq:ADtilde}.
From case 2 of \eqref{eq:ADtilde} we also know that
\begin{align}
D^{M_x}_t=\tilde{D}^{M_y}_t, \quad D^{m_x}_t=\tilde{D}^{m_y}_t.
\label{eq:DMDm}
\end{align}
Then, because of \eqref{eq:MyMx}-\eqref{eq:DMDm} we get
\begin{align}
&\left(X^{M_x,\hat{g}}_{t} - D^{M_x}_t \right)^+ +\left(X^{m_x,\hat{g}}_{t} - D^{m_x}_t \right)^+ \nonumber\\
\leq 
&\left(Y^{M_y}_{t} - D^{M_x}_t \right)^+ +\left(Y^{m_y}_{t} - D^{m_x}_t \right)^+ \nonumber\\
= &\left(Y^{M_y}_{t} - \tilde{D}^{M_y}_t \right)^++\left(Y^{m_y}_{t} - \tilde{D}^{m_y}_t \right)^+ \quad a.s.
\label{eq:YeqXsumeq}
\end{align}
If $Y^{M_y}_{t} > X^{M_x,\hat{g}}_{t} $ and $X^{m_x,\hat{g}}_{t} = 0$
\begin{align}
&\left(X^{M_x,\hat{g}}_{t} - D^{M_x}_t \right)^+ +\left(X^{m_x,\hat{g}}_{t} - D^{m_x}_t \right)^+
\nonumber\\
= & \left(X^{M_x,\hat{g}}_{t} - D^{M_x}_t \right)^+
\nonumber\\
\leq & X^{M_x,\hat{g}}_{t}
\leq  Y^{M_y}_{t}-1 \nonumber\\
\leq &\left(Y^{M_y}_{t} - \tilde{D}^{M_y}_t \right)^++\left(Y^{m_y}_{t} - \tilde{D}^{m_y}_t \right)^+
\end{align}
If $Y^{M_y}_{t} > X^{M_x,\hat{g}}_{t} $ and $X^{m_x}_{t} > 0$, then 
\begin{align}
&\left(X^{M_x,\hat{g}}_{t} - D^{M_x}_t \right)^+ +\left(X^{m_x,\hat{g}}_{t} - D^{m_x}_t \right)^+
\nonumber\\
= & X^{M_x,\hat{g}}_{t} - D^{M_x}_t  +X^{m_x,\hat{g}}_{t} - D^{m_x}_t 
\nonumber\\
= & X^{M_x,\hat{g}}_{t}   +X^{m_x,\hat{g}}_{t}- \tilde{D}^{M_y}_t - \tilde{D}^{m_y}_t
\nonumber\\
\leq & Y^{M_y}_{t}   +Y^{m_y}_{t}- \tilde{D}^{M_y}_t - \tilde{D}^{m_y}_t
\nonumber\\
\leq &\left(Y^{M_y}_{t} - \tilde{D}^{M_y}_t \right)^++\left(Y^{m_y}_{t} - \tilde{D}^{m_y}_t \right)^+
\label{eq:pospos}
\end{align}
where the second equality in \eqref{eq:pospos} follows from \eqref{eq:DYX} and
the first inequality in \eqref{eq:pospos} follows from the induction hypothesis for \eqref{eq:stosumbdd}.
\\
The above results, namely \eqref{eq:YeqXsumeq}-\eqref{eq:pospos}, show that the right hand side of (\ref{eq:sumYXbar}) is positive, and the proof for (\ref{eq:xbarysum}) is complete.
\\
It remains to show that (\ref{eq:xbarymax}) is true.
\\
We first consider case 2.\\
In case 2, we know from (\ref{eq:ADtilde}) that
\begin{align}
&\left(\tilde{A}^{M_y}_t,\tilde{D}^{M_y}_t,\tilde{A}^{m_y}_t,\tilde{D}^{m_y}_t\right)= 
\left(A^{M_x}_t,D^{M_x}_t,A^{m_x}_t,D^{m_x}_t\right).
\label{eq:ADADeqADAD}
\end{align}
Then, 
\begin{align}
\overline{X}^{M_x,\hat{g}}_{t}
= &\left(X^{M_x,\hat{g}}_{t} - D^{M_x}_t \right)^+ + A^{M_x}_t  \nonumber\\
= &\left(X^{M_x,\hat{g}}_{t} - \tilde{D}^{M_y}_t \right)^+ +\tilde{A}^{M_y}_t \nonumber\\
\leq &\left(Y^{M_y}_{t} - \tilde{D}^{M_y}_t \right)^+ +\tilde{A}^{M_y}_t \nonumber\\
= & Y^{M_y}_{t+1},
\label{eq:MxleqMy}
\end{align}
where the second equality is a consequence of \eqref{eq:ADADeqADAD}
and the inequality follows from the induction hypothesis for \eqref{eq:stomaxbdd}.
\\
To proceed further we note that in case 2 there are three possibilities:
$\{Y^{M_y}_{t} = X^{M_x,\hat{g}}_{t} \}$,
$\{Y^{M_y}_{t}-2 \geq X^{m_x,\hat{g}}_{t} \}$ and 
$\{Y^{M_y}_{t} > X^{M_x,\hat{g}}_{t}, Y^{M_y}_{t}-2 < X^{m_x,\hat{g}}_{t}  \}$
\\
If $Y^{M_y}_{t} = X^{M_x,\hat{g}}_{t} $,
\eqref{eq:mymx} is also true. Following similar arguments as in (\ref{eq:MxleqMy}) we obtain
\begin{align}
\overline{X}^{m_x,\hat{g}}_{t} \leq Y^{m_y}_{t+1}.
\label{eq:poss1}
\end{align}
If $Y^{M_y}_{t}-2 \geq X^{m_x,\hat{g}}_{t} $
\begin{align}
\overline{X}^{m_x,\hat{g}}_{t} \leq X^{m_x,\hat{g}}_{t}+1 \leq Y^{M_y}_{t}-1 \leq Y^{M_y}_{t+1}.
\label{eq:poss2}
\end{align}
If $Y^{M_y}_{t} >X^{M_x,\hat{g}}_{t} $ and $Y^{M_y}_{t}-2 < X^{m_x,\hat{g}}_{t} $ it can only be
$Y^{M_y}_{t}-1 = X^{M_x,\hat{g}}_{t} = X^{m_x,\hat{g}}_{t} $. 
Since we are in case 2, $\left(A^{M_x}_t,D^{M_x}_t,A^{m_x}_t,D^{m_x}_t\right) \neq (0,1,1,0)$.
Therefore, 
\begin{align}
A^{m_x}_t-D^{m_x}_t \leq A^{M_x}_t-D^{M_x}_t+1.
\end{align}
Then we get
\begin{align}
\overline{X}^{m_x,\hat{g}}_{t} 
=& \left( Y^{M_y}_{t}-1 - D^{m_x}_t\right)^+ + A^{m_x}_t \nonumber\\
=& \max \left(A^{m_x}_t,
Y^{M_y}_{t}- 1 - D^{m_x}_t + A^{m_x}_t \right) \nonumber\\
\leq & \max \left(A^{m_x}_t,
 Y^{M_y}_{t}- D^{M_x}_t + A^{M_x}_t \right) \nonumber\\
\leq &\max \left(A^{m_x}_t, Y^{M_y}_{t+1} \right) \nonumber\\
\leq & \max \left(Y^{m_y}_{t+1}, Y^{M_y}_{t+1} \right).
\label{eq:poss3}
\end{align}
Combining \eqref{eq:MxleqMy}, \eqref{eq:poss1}, \eqref{eq:poss2} and \eqref{eq:poss3} we get
\eqref{eq:xbarymax} when case 2 is true.
\\
Now consider case 1. We have $Y^{M_y}_{t}-1 = X^{M_x,\hat{g}}_{t} = X^{m_x,\hat{g}}_{t} $.
\\
When $\left(A^{M_x}_t,D^{M_x}_t,A^{m_x}_t,D^{m_x}_t\right) = (0,1,1,0)$, then
\begin{align}
\overline{X}^{M_x,\hat{g}}_{t}
=& \left(X^{M_x,\hat{g}}_{t} -1 \right)^+\nonumber\\
\leq & \overline{X}^{m_x}_{t} \nonumber\\
=&  X^{m_x}_{t}+1 \nonumber\\      
=& \left(Y^{M_y}_{t}-D^{m_x}_t\right)^+ + A^{M_x}_t \nonumber\\     
= & Y^{M_y}_{t+1}
\label{eq:case1poss1}
\end{align}
When $\left(A^{M_x}_t,D^{M_x}_t,A^{m_x}_t,D^{m_x}_t\right) = (0,0,1,1)$ we get
\begin{align}
\overline{X}^{M_x,\hat{g}}_{t}
=& X^{M_x,\hat{g}}_{t}\nonumber\\
\leq & \overline{X}^{m_x,\hat{g}}_{t} \nonumber\\
=& \max \left(X^{m_x,\hat{g}}_{t}, 1 \right) \nonumber\\
= & \max \left( \left(Y^{M_y}_{t}-D^{m_x}_t\right)^+ + A^{M_x}_t, A^{m_x}_t \right) \nonumber\\
= & \max \left(Y^{M_y}_{t+1}, A^{m_x}_t \right) \nonumber\\
\leq & \max \left(Y^{M_y}_{t+1}, Y^{m_y}_{t+1} \right). 
\label{eq:case1poss2}
\end{align}
Combining \eqref{eq:case1poss1} and \eqref{eq:case1poss2} we obtain \eqref{eq:xbarymax} for case 1.
\\
As a result, \eqref{eq:xbarymax} holds for both cases 1 and 2.

\noindent\textit{Remark:}

We note that we need the two cases described in \eqref{eq:ADtilde} for the following reasons.
If we eliminate case 1 and always associate 
$\left(\tilde{A}^{M_y}_t,\tilde{D}^{M_y}_t,\tilde{A}^{m_y}_t,\tilde{D}^{m_y}_t\right)$
with $\left(A^{M_x}_t,D^{M_x}_t,A^{m_x}_t,D^{m_x}_t\right)$ as in case 2, then when
$\{Y^{M_y}_{t}-1 =X^{m_x,\hat{g}}_{t}$ and $\left(A^{M_x}_t,D^{M_x}_t,A^{m_x}_t,D^{m_x}_t\right) = (0,1,1,0)\}$,
the shorter queue $m_x$ increases by one customer, and the longer queue $M_y$ decreases by one customer; therefore
$\overline{X}^{m_x,\hat{g}}_{t} = Y^{M_y}_{t+1}+1$ and \eqref{eq:xbarymax} is not satisfied.

\end{proof}

\begin{proof}[Proof of Lemma \ref{lm:inf:bddunctrl}]
From Lemma \ref{lm:inf:stobdd}, at any time $t$ there exists $Y^{i}_t$ such that such that 
(\ref{eq:stoeq})-(\ref{eq:stomaxbdd}) hold.
\\
Adopting the notations $M_x,m_x$ and $M_y,m_y$ in the proof of Lemma \ref{lm:inf:stobdd}, 
we have at every time $t$
\begin{align}
&X^{m_x,\hat{g}}_t \leq X^{M_X,\hat{g}}_t \quad a.s.,\\
&Y^{m_y}_t \leq Y^{M_y}_t \quad a.s. 
\end{align}
Furthermore, from \eqref{eq:stomaxbdd} we have
\begin{align}
&X^{M_x,\hat{g}}_t \leq Y^{M_y}_t \quad a.s.
\label{eq:XMYMineq}
\end{align}
If $X^{m_x,\hat{g}}_t \leq Y^{m_y}_t$, \eqref{eq:XMYMineq} and the fact that $c(\cdot)$ is increasing give
\begin{align}
c\left(X^{M_X,\hat{g}}_t\right)+c\left(X^{m_X,\hat{g}}_t\right) \leq 
c\left(Y^{M_y}_t\right)+c\left(Y^{m_y}_t\right).
\end{align}
If $X^{m_x,\hat{g}}_t > Y^{m_y}_t$, then
\begin{align}
&Y^{m_y}_t < X^{m_x,\hat{g}}_t \leq X^{M_x,\hat{g}}_t  \leq Y^{M_y}_t . \label{pflm:inf:bddunctrl:eq1}
\end{align}
Since $c(\cdot)$ is convex, it follows from (\ref{pflm:inf:bddunctrl:eq1}) that
\begin{align}
\frac{c\left(Y^{M_y}_t\right)-c\left(X^{M_x,\hat{g}}_t\right)}{Y^{M_y}_t-X^{M_x,\hat{g}}_t} \geq 
\frac{c\left(X^{m_x,\hat{g}}_t\right) -c\left(Y^{m_y}_t\right)}{X^{m_x,\hat{g}}_t-Y^{m_y}_t} .
\label{pflm:inf:bddunctrl:eq2}
\end{align}
From (\ref{eq:stosumbdd}) in Lemma \ref{lm:inf:stobdd} we know that
\begin{align}
&Y^{M_y}_t - X^{M_x,\hat{g}}_t  \geq  X^{m_x,\hat{g}}_t - Y^{m_y}_t. \label{pflm:inf:bddunctrl:eq3}
\end{align}
Combining (\ref{pflm:inf:bddunctrl:eq2}) and (\ref{pflm:inf:bddunctrl:eq3}) we get
\begin{align}
c\left(Y^{M_y}_t\right)+c\left(Y^{m_y}_t\right) \geq
c\left(X^{M_x,\hat{g}}_t\right)+c\left(X^{m_x,\hat{g}}_t\right) .
\end{align}
\end{proof}

\begin{proof}[Proof of Lemma \ref{lm:inf:WTconv}]
Let $\{Y^{1}_t, t \in \mathbb{Z}_+ \}$ and $\{Y^{2}_t, t\in \mathbb{Z}_+\}$ be the processes defined in Lemma \ref{lm:inf:stobdd}.
Then $\{Y^{i}_t, t\in \mathbb{Z}_+\}$ has the same distribution as $\{X^{i,g_0}_t, t\in \mathbb{Z}_+\}$ for $i=1,2$.
\\
Since $\mu>\lambda$, the processes $\{Y^{i}_t, t\in \mathbb{Z}_+\},i=1,2$ are irreducible positive recurrent Markov chains.
Moreover, the two processes $\{Y^{1}_t, t \in \mathbb{Z}_+ \}$ and $\{Y^{2}_t, t\in \mathbb{Z}_+\}$ have the same stationary distribution, denoted by $\pi^{g_0}$.
Under Assumption \ref{assum:initial}, by Ergodic theorem of Markov chains (see \cite[chap. 3]{bremaud1999markov}) we get 
\begin{align}
\lim_{T \rightarrow \infty} \frac{1}{T} \sum_{t=0}^{T-1}c(Y^{1}_t) 
= &\lim_{T \rightarrow \infty} \frac{1}{T}\sum_{t=0}^{T-1}c(Y^{2}_t) \nonumber\\
= &\sum_{x=0}^{\infty} \pi^{g_0}(x) c(x)\quad a.s.
\end{align}
Let $W^i_T(Y_{0:T-1}) := \frac{1}{T}\sum_{t=0}^{T-1}c(Y^{i}_t), i=1,2$.\\
We show that $\{W^i_T(Y_{0:T-1}) , T=1,2,\dots\}$ is uniformly integrable for $i=1,2$.
That is, 
\begin{align}
\sup_{T} \mathbf{E}\left[ W^i_T(Y_{0:T-1}) 1_{\{W^i_T(Y_{0:T-1}) >N\}} \right] \rightarrow 0
\end{align}
as $N\rightarrow\infty $.
\\
Let $p^{g_0}(x,y),x,y\in \mathbb{Z}_+$ be the transition probabilities of the Markov chain.
Note that the initial PMF of the process $\{Y^{i}_t, t\in \mathbb{Z}_+\},i=1,2$ is $\pi^i_0$.
From Assumption \ref{assum:initial} we know that $\pi^i_0(x)=0,i=1,2$ for all $x>M$.
\\
Letting $R := \max_{x\leq M} \frac{\pi^i_0(x)}{\pi^{g_0}(x)}<\infty$, we obtain for $i=1,2$
\begin{align}
  &\mathbf{E}\left[ W^i_T(Y_{0:T-1}) 1_{\{W^i_T(Y_{0:T-1}) >N\}} \right]\nonumber\\
= &\sum_{y_{0:T-1}} W^i_T(y_{0:T-1})1_{\{W^i_T(y_{0:T-1}) >N\}}\mathbf{P}(Y_{0:T-1} = y_{0:T-1}) \nonumber\\
= &\sum_{y_{0:T-1}} W^i_T(y_{0:T-1})1_{\{W^i_T(y_{0:T-1}) >N\}}
\pi^{i}_0(y_0)\Pi_{t=1}^{T-1}p^{g_0}(y_{t-1},y_t)
 \nonumber\\
\leq &R\sum_{y_{0:T-1}} W^i_T(y_{0:T-1})1_{\{W^i_T(y_{0:T-1}) >N\}}
\pi^{g_0}(y_0)\Pi_{t=1}^{T-1}p^{g_0}(y_{t-1},y_t)
 \nonumber\\
= & R \mathbf{E}\left[ W^{\pi^{g_0}}_T 1_{\{W^{\pi^{g_0}}_T >N\}} \right],
\label{eq:WUI}
\end{align}
where $W^{\pi^{g_0}}_T = \frac{1}{T}\sum_{t=0}^{T-1}c(Y^{\pi^{g_0}}_t)$ and 
$\{Y^{\pi^{g_0}}_t, t\in \mathbb{Z}_+\}$ is the chain with transition probabilities $p^{g_0}(x,y)$ and initial PMF $\pi^{g_0}$.
\\
Note that $\{Y^{\pi^{g_0}}_t, t\in \mathbb{Z}_+\}$ is stationary because the initial PMF is the stationary distribution $\pi^{g_0}$.
From Birkhoff's Ergodic theorem we know that $\{W^{\pi^{g_0}}_T,T=1,2,\dots\}$ converges $a.s.$ and in expectation (see \cite[chap. 2]{petersen1989ergodic}). 
Therefore, $\{W^{\pi^{g_0}}_T,T=1,2,\dots\}$ is uniformly integrable, and the right hand side of (\ref{eq:WUI}) goes to zeros uniformly as $N \rightarrow \infty$.
Consequently, $\{W^i_T(Y_{0:T-1}) , T=1,2,\dots\}$ is also uniformly integrable for $i=1,2$.
\\
Since $W_T = W^1_T(Y_{0:T-1})+W^2_T(Y_{0:T-1})$ for all $T=1,2,\dots$, $\{W_T,T=1,2,\dots\}$	is uniformly integrable.

\end{proof}
\begin{proof}[Proof of Corollary \ref{cor:infcostfinite}]
From Lemma \ref{lm:inf:bddunctrl}, there exists  $\{Y^{1}_t,Y^2_t, t \in \mathbb{Z}_+\}$ such that (\ref{eq:stoeq}) holds and
\begin{align}
c\left(X^{1,\hat{g}}_t\right)+c\left(X^{2,\hat{g}}_t\right) \leq 
c\left(Y^{1}_t\right)+c\left(Y^{2}_t\right) \quad a.s.
\label{eq:XlessY}
\end{align}
Let
\begin{align}
W_T :=& \frac{1}{T}\sum_{t=0}^{T-1}
\left(c\left(Y^1_t\right)+c\left(Y^2_t\right)\right),\\
V_T :=& \frac{1}{T}\sum_{t=0}^{T-1}
\left(c\left(X^{1,\hat{g}}_t\right)+c\left(X^{2,\hat{g}}_t\right)\right).
\end{align}
From (\ref{eq:XlessY}) it follows that
\begin{align}
V_T \leq W_T, T=1,2,\dots
\end{align}
From Lemmas \ref{lm:inf:WTconv}, $\{W_T, T =1,2,\dots\} $ is uniformly integrable, therefore $\{V_T, T =1,2,\dots\} $, which is bounded above by $\{W_T, T =1,2,\dots\} $  is also uniformly integrable.
\\
From the property of uniformly integrability, if $\{V_T, T =1,2,\dots\} $ converges a.s., we know that $\{V_T, T =1,2,\dots\} $ also converges in expectation. Furthermore,
\begin{align}
J^{\hat{g}}\left(\pi^1_0,\pi^2_0\right)
= & \limsup_{T\rightarrow\infty}\frac{1}{T}
\mathbf{E}\left[
\sum_{t=0}^{T-1}
\left(c\left(Y^1_t\right)+c\left(Y^2_t\right)\right)\right]\nonumber\\
=& \limsup_{T\rightarrow\infty}
\mathbf{E}\left[V_T\right]\nonumber\\
\leq & \limsup_{T\rightarrow\infty}
\mathbf{E}\left[W_T\right] = J^{g_0}.
\end{align}
\end{proof}

\section{Proofs of the Results Associated with Step 2 of the Proof of Theorem \ref{thm:inf}}
\label{app:infinite2}

\begin{proof}[Proof of Lemma \ref{lm:inf:MC}]
First we show that $\{S_t,t\geq T_0+1\}$ is a Markov chain.\\
For $s_t \geq 2$, 
\begin{align}
&\mathbf{P}\left( S_{t+1}=s_{t+1}|S_{T_0+1:t}=s_{T_0+1:t}\right) \nonumber\\
= &\mathbf{P}\left( \left(s_t-D^1_t -D^2_t +A^1_t+A^2_t \right)=s_{t+1}\right.\nonumber\\
&\qquad \qquad\qquad \left.|S_{T_0+1:t}=s_{T_0+1:t}\right)\nonumber\\
= &\mathbf{P}\left( \left( s_t-D^1_t -D^2_t +A^1_t+A^2_t\right)=s_{t+1}|S_t=s_t\right)\nonumber\\
=&\mathbf{P}\left( S_{t+1}=s_{t+1}|S_t=s_t\right).
\label{pflm:inf:MC:eq1}
\end{align}
The first and last equalities in \eqref{pflm:inf:MC:eq1} follow from the construction of the process $\{S_t, t\geq T_0+1\}$.
The second equality in \eqref{pflm:inf:MC:eq1} is true because 
$T_0$ is a stopping time
 with respect to $\{X^{1,\hat{g}}_t,X^{2,\hat{g}}_t, t\in \mathbb{Z}_+\}$,
 and $A^i_t, D^i_t, i=1,2$ are independent of all random variables before $t$.
Similarly, for $s_t =0$ we have, by arguments similar to the above,
\begin{align}
&\mathbf{P}\left( S_{t+1}=s_{t+1}|S_{T_0+1:t}=s_{T_0+1:t}\right) \nonumber\\
= &\mathbf{P}\left( A^1_t+A^2_t=s_{t+1}|S_{T_0+1:t-1}=s_{T_0+1:t-1}, S_t=0\right)\nonumber\\
= &\mathbf{P}\left( A^1_t+A^2_t=s_{t+1}|S_t=0\right)\nonumber\\
=&\mathbf{P}\left( S_{t+1}=s_{t+1}|S_t=0\right).
\label{pflm:inf:MC:eq2}
\end{align}
The first and last equality in \eqref{pflm:inf:MC:eq2} follow from the construction of the process $\{S_t, t\geq T_0+1\}$.
The second equality in \eqref{pflm:inf:MC:eq2} is true because $A^i_t, D^i_t, i=1,2$ are independent of all variables before $t$.
For $s_t =1$, 
\begin{align}
&\mathbf{P}\left( S_{t+1}=s_{t+1}|S_{T_0+1:t}=s_{T_0+1:t}\right) \nonumber\\
= &\mathbf{P}\left( s_t+1_{\left\{X^{1,\hat{g}}_t=0\right\}}(D^1_t- D^2_t)-D^1_t +A^1_t+A^2_t
 =s_{t+1} |S_{T_0+1:t}=s_{T_0+1:t}\right)
\nonumber\\
= &\mathbf{P}\left( 1 -D^2_t +A^1_t+A^2_t=s_{t+1}, 
X^{1,\hat{g}}_t=0|S_{T_0+1:t-1}=s_{T_0+1:t-1}, S_t=1\right)\nonumber\\
&+ \mathbf{P}\left( 1 -D^1_t +A^1_t+A^2_t=s_{t+1},
 X^{1,\hat{g}}_t=1|S_{T_0+1:t-1}=s_{T_0+1:t-1}, S_t=1\right)
\nonumber\\
= &\mathbf{P}\left( 1 -D^1_t +A^1_t+A^2_t=s_{t+1}, 
  X^{1,\hat{g}}_t=0|S_{T_0+1:t-1}=s_{T_0+1:t-1}, S_t=1\right)\nonumber\\
&+ \mathbf{P}\left( 1 -D^1_t +A^1_t+A^2_t=s_{t+1},
  X^{1,\hat{g}}_t=1|S_{T_0+1:t-1}=s_{T_0+1:t-1}, S_t=1\right)
\nonumber\\
= &\mathbf{P}\left( 1 -D^1_t +A^1_t+A^2_t=s_{t+1} 
|S_{T_0+1:t-1}=s_{T_0+1:t-1}, S_t=1\right)
\nonumber\\
= &\mathbf{P}\left( 1 -D^1_t +A^1_t+A^2_t=s_{t+1}
|S_t=1\right)\nonumber\\
=&\mathbf{P}\left( S_{t+1}=s_{t+1}|S_t=s_t\right).
\label{pflm:inf:MC:eq3}
\end{align}
The first equality in \eqref{pflm:inf:MC:eq3} follows from the construction of the process $\{S_t, t\geq T_0+1\}$.
The second and forth equalities follow from the fact that $X^{1,\hat{g}}_t$ can be either $0$ or $1$.
In the third equality, $D^2_t$ is replaced by $D^1_t$ in the first term;
this is true 
because $D^1_t$ and $D^2_t$ are identically distributed and independent of $X^{1,\hat{g}}_t$ and all past random variables.
The fifth equality holds because $T_0$ is a stopping time
 with respect to $\{X^{1,\hat{g}}_t,X^{2,\hat{g}}_t, t\in \mathbb{Z}_+\}$ and $A^i_t, D^i_t, i=1,2$ are independent of all past random variables.
The last equality follows from the same arguments that lead to the first through the fifth equalities.
\\\\
Therefore, the process $\{S_t, t\geq T_0+1\}$ is a Markov chain.
\\
Since $\lambda,\mu >0$, the Markov chain is irreducible.
\\
We prove that the process $\{S_t, t\geq T_0+1\}$ is positive recurrent.
Note that, for all $s=0,1,2,\dots$, because of the construction of $\{S_t, t\geq T_0+1\}$ 
\begin{align}
&\mathbf{E}\left[ S_{t+1} | S_t=s\right] \nonumber\\
\leq & \mathbf{E}\left[ S_t +A^1_t+A^2_t| S_t=s\right]\nonumber\\
= & s +2\lambda < \infty.
\end{align}
Moreover, for all $s\geq 2$, 
\begin{align}
&\mathbf{E}\left[ S_{t+1} | S_t=s\right] \nonumber\\
= & \mathbf{E}\left[ s-D^1_t -D^2_t +A^1_t+A^2_t | S_t=s\right]\nonumber\\
= & s-2\mu +2\lambda < s.
\end{align}
Using Foster's theorem (see \cite[chap. 5]{bremaud1999markov}), we conclude that the Markov chain $\{S_t, t\geq T_0+1\}$ is positive recurrent.
\end{proof}

\begin{proof}[Proof of Lemma \ref{lm:inf:00io}]
Let $(\Omega, \mathcal{F}, \mathbf{P})$ denote the basic probability space for our problem.
Define events $E_t \in \mathcal{F}, t=0,1,\dots $ to be
\begin{align}
E_t =& \{\omega \in \Omega : \left(U^{1,\hat{g}}_{t'}(\omega),U^{2,\hat{g}}_{t'}(\omega)\right)\neq(0,0) \quad \forall t' \geq t\}
\end{align}
If the claim of this lemma is not true, we get
\begin{align}
\mathbf{P}\left(\bigcup_{t =0}^{\infty}E_t\right)=
1-\mathbf{P}\left( \left(U^{1,\hat{g}}_{t},U^{2,\hat{g}}_{t}\right)=(0,0) \quad i.o.\right)>0.
\end{align}
Therefore, there exist some $t_0$ such that $\mathbf{P}(E_{t_0}) >0$.
Since $t_0$ is a constant, it is a stopping time with respect to $\{X^{1,\hat{g}}_t,X^{2,\hat{g}}_t, t \in \mathbb{Z}_+ \}$.
\\
Consider the process $\{S_t,t=t_0+1,t_0+2,...\}$ defined in Lemma \ref{lm:inf:MC} with the stopping time $t_0$. 
From Lemma \ref{lm:inf:MC} we know that $\{S_t, t\geq t_0+1 \}$ is an irreducible positive recurrent Markov chain.
Furthermore, along the sample path induced by any $\omega \in E_{t_0}$, we claim that for all $t \geq t_0+1$
\begin{align}
S_t(\omega)
=&X^{1,\hat{g}}_{t}(\omega)+X^{2,\hat{g}}_{t}(\omega) \nonumber\\
= &\overline{X}^{1,\hat{g}}_{t-1}(\omega)+\overline{X}^{2,\hat{g}}_{t-1}(\omega).
\label{pflm:inf:00io:eqsum}
\end{align}
The claim is shown by induction below.
\\
By the definition of $\{S_t, t\geq t_0+1 \}$ in Lemma \ref{lm:inf:MC}, we have at time $t_0+1$ for any $\omega \in E_{t_0}$
\begin{align}
S_{t_0+1}(\omega) = &X^{1,\hat{g}}_{t_0+1}(\omega)+X^{2,\hat{g}}_{t_0+1}(\omega) \nonumber\\
			      = &\overline{X}^{1,\hat{g}}_{t_0}(\omega)+\overline{X}^{2,\hat{g}}_{t_0}(\omega),
\label{pflm:inf:00io:eqbas}
\end{align}
where the last inequality in \eqref{pflm:inf:00io:eqbas} follows from the system dynamics \eqref{Model:dynamic1}-\eqref{Model:dynamic3}.
\\
Assume equation \eqref{pflm:inf:00io:eqsum} is true at time $t$ ($t\geq t_0+1$).
At time $t+1$ we have, by \eqref{Model:dynamic1}-\eqref{Model:dynamic3},
\begin{align}
&X^{1,\hat{g}}_{t+1}+X^{2,\hat{g}}_{t+1}\nonumber\\
=& (X^{1,\hat{g}}_t-D^1_t)^++(X^{2,\hat{g}}_t-D^2_t)^+ +A^1_t+A^2_t\nonumber\\
=& X^{1,\hat{g}}_t+X^{2,\hat{g}}_t-D^1_t-D^2_t +A^1_t+A^2_t\nonumber\\
&+D^1_t1_{\left\{X^{1,\hat{g}}_t=0\right\}}+D^2_t1_{\left\{X^{2,\hat{g}}_t=0\right\}}.
\end{align}
Since along the sample path induced by $\omega \in E_{t_0}$, $\left(U^{1,\hat{g}}_{t-1}(\omega),U^{2,\hat{g}}_{t-1}(\omega)\right)\neq (0,0)$ and $X^{i,\hat{g}}_t=\overline{X}^{i,\hat{g}}_{t-1}-U^{i,\hat{g}}_{t-1}+U^{j,\hat{g}}_{t-1}$, 
the event 
$\{X^{i,\hat{g}}_t=0\} \bigcap E_{t_0}$ ($i=1$ or $2$) implies that 
$\overline{X}^{i,\hat{g}}_{t-1}=1,\, U^{i,\hat{g}}_{t-1}=1 \text{ and  }U^{j,\hat{g}}_{t-1}=0$.
For this case, $\overline{X}^{i,\hat{g}}_{t-1}=1$ and $U^{i,\hat{g}}_{t-1}=1$ further imply that the threshold is smaller than one. Then, the only possibility for $U^{j,\hat{g}}_{t-1}=0$ is $\overline{X}^{j,\hat{g}}_{t-1}=0$. 
Therefore,
\begin{align}
&\left\{X^{i,\hat{g}}_t=0\right\} \bigcap E_{t_0} \nonumber\\
\subseteq 
&\left\{\overline{X}^{i,\hat{g}}_{t-1}=1,\, U^{i,\hat{g}}_{t-1}=1, \, \overline{X}^{j,\hat{g}}_{t-1}=0 \text{ and  }U^{j,\hat{g}}_{t-1}=0\right\}\nonumber\\
\subseteq &
\{S_t = 1\}.
\label{eq:Seq1}
\end{align}
Consequently, from \eqref{eq:Seq1}, for any $\omega \in E_{t_0}$
\begin{align}
&D^1_t(\omega)1_{\left\{X^{1,\hat{g}}_t(\omega)=0\right\}}+D^2_t(\omega)1_{\left\{X^{2,\hat{g}}_t(\omega)=0\right\}}\nonumber\\
=& 1_{\left\{S_t(\omega)=1\right\}}
\left(D^1_t(\omega)1_{\left\{X^{1,\hat{g}}_t(\omega)=0\right\}}
+ D^2_t(\omega)1_{\left\{X^{2,\hat{g}}_t(\omega)=0\right\}} \right).\nonumber\\
=& 1_{\left\{S_t(\omega)=1\right\}}\left(1_{\left\{X^{1,\hat{g}}_t(\omega)=0\right\}}(D^1_t(\omega)-D^2_t(\omega))+ D^2_t(\omega) \right).
\label{eq:forS1}
\end{align}
Moreover, $\left(U^{1,\hat{g}}_{t-1}(\omega),U^{2,\hat{g}}_{t-1}(\omega)\right)\neq (0,0)$ implies that $\left(\overline{X}^{1,\hat{g}}_{t-1}(\omega),\overline{X}^{2,\hat{g}}_{t-1}(\omega)\right)\neq (0,0)$.
Hence, 
\begin{align}
S_t(\omega) = \overline{X}^{1,\hat{g}}_{t-1}(\omega)+\overline{X}^{2,\hat{g}}_{t-1}(\omega)\neq 0,
\label{eq:forS0}
\end{align}
and 
\begin{align}
&X^{1,\hat{g}}_{t+1}(\omega)+X^{2,\hat{g}}_{t+1}(\omega)\nonumber\\
=& X^{1,\hat{g}}_t(\omega)+X^{2,\hat{g}}_t(\omega)-D^1_t(\omega)-D^2_t(\omega) +A^1_t(\omega)+A^2_t(\omega)\nonumber\\
&+ 1_{\left\{S_t(\omega)=1\right\}}
\left(1_{\left\{X^{1,\hat{g}}_t(\omega)=0\right\}}(D^1_t(\omega)- D^2_t(\omega))+D^2_t(\omega) \right)\nonumber\\
=& X^{1,\hat{g}}_t(\omega)+X^{2,\hat{g}}_t(\omega)-D^1_t(\omega)-D^2_t(\omega) +A^1_t(\omega)+A^2_t(\omega)\nonumber\\
&+ 1_{\left\{S_t(\omega)=1\right\}}
\left(1_{\left\{X^{1,\hat{g}}_t(\omega)=0\right\}}(D^1_t(\omega)- D^2_t(\omega))+D^2_t(\omega) \right)\nonumber\\
&+1_{\{S_t(\omega)=0\}}\left(D^1_t(\omega)+D^2_t(\omega) \right)\nonumber\\
=& S_{t+1}(\omega),
\label{eq:Sinduction}
\end{align}
where the first and second equalities in \eqref{eq:Sinduction} follow from \eqref{eq:forS1} and \eqref{eq:forS0}, respectively.
The last equality in \eqref{eq:Sinduction} follows from the construction of $\{S_t, t\geq t_0+1\}$.
\\
Furthermore, by the system dynamics (\ref{Model:dynamic1})-(\ref{Model:dynamic3}) we have 
\begin{align}
\overline{X}^{1,\hat{g}}_{t}(\omega)+\overline{X}^{2,\hat{g}}_{t}(\omega)
= &X^{1,\hat{g}}_{t+1}(\omega)+X^{2,\hat{g}}_{t+1}(\omega)\nonumber\\
=& S_{t+1}(\omega).
\end{align}
Thus, equation (\ref{pflm:inf:00io:eqsum}) is true for any $\omega \in E_{t_0}$ for all $t \geq t_0+1$.
\\
Then, for any $\omega \in E_{t_0}$ 
\begin{align}
S_t(\omega) = \overline{X}^{1,\hat{g}}_{t-1}(\omega)+\overline{X}^{2,\hat{g}}_{t-1}(\omega) \neq 0 \text{ for all }t \geq t_0+1
\label{eq:Scontradict}
\end{align}
because $\left(U^{1,\hat{g}}_{t-1}(\omega),U^{2,\hat{g}}_{t-1}(\omega)\right)\neq (0,0)$ for all $t \geq t_0+1$.
Since $\mathbf{P}(E_{t_0})>0 $, \eqref{eq:Scontradict} contradicts the fact that $\{S_t, t\geq t_0+1\}$ is recurrent.
\\
Therefore, no such event $E_{t_0}\in \mathcal{F}$ with positive probability exists, and the proof of this lemma is complete.

\end{proof}

\section{Proofs of the Results Associated with Step 3 of the Proof of Theorem \ref{thm:inf}}
\label{app:lmcen}

\begin{proof}[Proof of Lemma \ref{lm:cen}]
For any fixed centralized policy $g\in \mathcal{G}_c$, the information $I^1_t,I^2_t$ available to the centralized controller includes all primitive random variables $X^i_{0},A^i_{0:t},D^i_{0:t},i=1,2$
up to time $t$.
Since all other random variables are functions of these primitive random variables and $g$, 
we have
\begin{align}
U^{i,g}_t=& g^i_t(I^1_t,I^2_t)\nonumber\\
	     =& g^i_t(X^1_{0},X^2_{0},A^1_{0:t},A^2_{0:t},D^1_{0:t},D^2_{0:t}),
\end{align}
for $i=1,2$.
For any initial queue lengths $x^1_{0},x^2_{0}$, we now define a policy $\tilde{g}$ from $g$ for the case when both queues are initially empty. 
Let $\tilde{g}$ be the policy such that for $i=1,2$
\begin{align}
U^{i,\tilde{g}}_t = &\tilde{g}^i_t(I^1_t,I^2_t) \nonumber\\
:=&
\left\lbrace
\begin{array}{ll}
g^i_t(x^1_{0},x^2_{0},A^1_{0:t},A^2_{0:t},D^i_{0:t},D^2_{0:t}) & \text{ if }\overline{X}^{i,\tilde{g}}_t>0\\
0 & \text{ if }\overline{X}^{i,\tilde{g}}_t=0
\end{array}
\right. \nonumber\\
=& \min \left( U^{i,g}_t, \overline{X}^{i,\tilde{g}}_t \right) \leq U^{i,g}_t,
\label{pflmcen:UU}
\end{align}
where $X^{1,\tilde{g}}_t$ and $X^{2,\tilde{g}}_t$ denote the queue lengths at time $t$ due to policy $\tilde{g}$ with initial queue lengths 
$X^{1,\tilde{g}}_0=X^{2,\tilde{g}}_0=0$.
\\
At time $0$ we have $X^{i,g}_0=x^i_0\geq 0 = X^{i,\tilde{g}}_0$ for $i=1,2$.
We now prove by induction that for all time $t$
\begin{align}
X^{i,g}_t \geq X^{i,\tilde{g}}_t, \quad i=1,2.
\label{pflmcen:claim}
\end{align}
Suppose the claim is true at time $t$.
Then, from the system dynamics \eqref{Model:dynamic1}-\eqref{Model:dynamic2} and \eqref{pflmcen:claim} we obtain, for $i=1,2$,
\begin{align}
\overline{X}^{i,g}_t =& \left(X^{i,g}_t  - D^{i}_t \right)^+ +A^i_t \nonumber\\
                    \geq &\left(X^{i,\tilde{g}}_t  - D^{i}_t \right)^+ +A^i_t
                     = \overline{X}^{i,\tilde{g}}_t.
\label{pflmcen:eq1}
\end{align}
Furthermore from \eqref{Model:dynamic1}-\eqref{Model:dynamic2} and \eqref{pflmcen:UU}
\begin{align}
X^{i,g}_{t+1} = & \overline{X}^{i,g}_t-U^{i,g}_t+U^{j,g}_t \nonumber\\
             \geq &\overline{X}^{i,g}_t-U^{i,g}_t+U^{j,\tilde{g}}_t
             \label{pflmcen:eq2}
\end{align}
If $\overline{X}^{i,\tilde{g}}_t>0 $, then, because of \eqref{pflmcen:UU} and \eqref{pflmcen:eq1}
\begin{align}
\overline{X}^{i,g}_t-U^{i,g}_t =&\overline{X}^{i,g}_t-\min \left( U^{i,g}_t, \overline{X}^{i,\tilde{g}}_t \right)
\nonumber\\
= &\overline{X}^{i,g}_t-U^{i,\tilde{g}}_t
\geq \overline{X}^{i,\tilde{g}}_t-U^{i,\tilde{g}}_t
.
\label{pflmcen:eq3}
\end{align}
If $\overline{X}^{i,\tilde{g}}_t=0 $, since $\overline{X}^{i,g}_t-U^{i,g}_t \geq 0$, \eqref{pflmcen:UU} implies
\begin{align}
\overline{X}^{i,g}_t-U^{i,g}_t
\geq  0
= \overline{X}^{i,\tilde{g}}_t-U^{i,\tilde{g}}_t.
\label{pflmcen:eq4}
\end{align}
Combining \eqref{pflmcen:eq2}-\eqref{pflmcen:eq4} and \eqref{Model:dynamic1}-\eqref{Model:dynamic2} we get
\begin{align}
X^{i,g}_{t+1} \geq &\overline{X}^{i,g}_t-U^{i,g}_t+U^{j,\tilde{g}}_t \nonumber\\
\geq & \overline{X}^{i,\tilde{g}}_t-U^{i,\tilde{g}}_t+U^{j,\tilde{g}}_t
= X^{i,\tilde{g}}_{t+1}.
\end{align}
Therefore, we complete the proof of the claim \eqref{pflmcen:claim}.
\\
Since the cost function is increasing, (\ref{pflmcen:claim}) implies that for all $g\in \mathcal{G}_c$ and any initial condition
$X^1_0=x^1_0,X^2_0=x^2_0$,
\begin{align}
\inf_{g\in \mathcal{G}_c} J^g_T(0,0) \leq J^{\tilde{g}}_T(0,0) \leq J^{g}_T(x^1_0,x^2_0).
\end{align}
Consequently, for any PMFs $\pi^1_0,\pi^2_0$
\begin{align}
\inf_{g\in \mathcal{G}_c} J^g_T(0,0) \leq \inf_{g\in \mathcal{G}_c} J^{g}_T(\pi^1_0,\pi^2_0).
\end{align}
Moreover, the result of Lemma \ref{lm:comp} ensures that $\hat{g}$ gives the smallest expected cost among policies in $\mathcal{G}_c$ for any finite horizon when $X^1_0=X^2_0=0$. It follows that, for any finite $T$,
\begin{align}
J^{\hat{g}}_T(0,0) = \inf_{g\in \mathcal{G}_c} J^g_T(0,0) \leq J^{\tilde{g}}_T(0,0) \leq J^{g}_T(x^1_0,x^2_0).
\label{eq:finitecostorder}
\end{align}
For infinite horizon cost, we divide each term in (\ref{eq:finitecostorder}) by $T$ and let $T$ to infinity, and we obtain, for any $\pi^1_0,\pi^2_0$,
\begin{align}
J^{\hat{g}}(0,0) = \inf_{g\in \mathcal{G}_c} J^g(0,0) \leq J^{\tilde{g}}(0,0) \leq J^{g}(x^1_0,x^2_0).
\end{align}

\end{proof}

\section{Proofs of the Results Associated with Step 4 of the Proof of Theorem \ref{thm:inf}}
\label{app:claiminfthminf}

\begin{proof}[Proof of the claim in the proof of Theorem \ref{thm:inf}]

We prove here our claim expressed by equation (\ref{eq:pfinf:claim}) to complete the proof of Theorem \ref{thm:inf}.
By (\ref{pflm:inf:MC:eqS1}),
\begin{align}
S_{T_0+1} = X^{1,\hat{g}}_{T_0+1}+X^{2,\hat{g}}_{T_0+1}.
\end{align}
We prove by induction that $X^{1,\hat{g}}_{t}+X^{2,\hat{g}}_{t}= S_t$ for all $t \geq T_0+1$.
\\
Assume that $X^{1,\hat{g}}_{t}+X^{2,\hat{g}}_{t}= S_t$ at time $t$, $t\geq T_0+1$. 
Then for time $t+1$, because of the systems dynamics (\ref{Model:dynamic1})-(\ref{Model:dynamic3}),
\begin{align}
&X^{1,\hat{g}}_{t+1}+X^{2,\hat{g}}_{t+1}\nonumber\\
=& (X^{1,\hat{g}}_t-D^1_t)^++(X^{2,\hat{g}}_t-D^2_t)^+ +A^1_t+A^2_t\nonumber\\
=& X^{1,\hat{g}}_t+X^{2,\hat{g}}_t-D^1_t-D^2_t +A^1_t+A^2_t\nonumber\\
&+D^1_t1_{\left\{X^{1,\hat{g}}_t=0\right\}}+D^2_t1_{\left\{X^{2,\hat{g}}_t=0\right\}}.
\label{eq:pfinf:ind1}
\end{align}
When $X^{i,\hat{g}}_t=0$ ($i=1$ or $2$), $U^{j,\hat{g}}_{t-1}$ should be $0$
because 
\begin{align}
0=X^{i,\hat{g}}_t=\overline{X}^{i,\hat{g}}_{t-1}-U^{i,\hat{g}}_{t-1}+U^{j,\hat{g}}_{t-1}
\label{eq:pfinf:eqdn}
\end{align}
and $\overline{X}^{i,\hat{g}}_{t-1}-U^{i,\hat{g}}_{t-1}\geq 0$.\\
We consider the following two cases separately:
\begin{enumerate}[{Case} 1]
\item $U^{i,\hat{g}}_{t-1}=0$.
\item $U^{i,\hat{g}}_{t-1}=1$.
\end{enumerate}
\begin{enumerate}[{Case} 1]
\item When $U^{i,\hat{g}}_{t-1}=0$, we must have $\overline{X}^{i,\hat{g}}_{t-1}=0$ by (\ref{eq:pfinf:eqdn}). 
Then $\overline{X}^{j,\hat{g}}_{t-1}\in \{0,1\}$ for the following reason. 
When $U^{i,\hat{g}}_{t-1}=U^{j,\hat{g}}_{t-1}=0$, the sizes of both queues are between the lower bound and the threshold.
That is
\begin{align}
\overline{LB}^{\hat{g}}_{t-1}
\leq &\overline{X}^{i,\hat{g}}_{t-1}
\leq \left\lceil TH_t
\right \rceil-1,
\label{eq:pfinf:eqbd1p1}
\\
\overline{LB}^{\hat{g}}_{t-1}
\leq &\overline{X}^{j,\hat{g}}_{t-1}  
\leq \left\lceil TH_t
\right \rceil-1.
\label{eq:pfinf:eqbd1}
\end{align}
Combining \eqref{eq:pfinf:eqbd1p1}, \eqref{eq:pfinf:eqbd1} with $\overline{X}^{i,\hat{g}}_{t-1}=0$ we obtain
\begin{align}
\overline{X}^{j,\hat{g}}_{t-1} =
& \left|\overline{X}^{j,\hat{g}}_{t-1} - \overline{X}^{i,\hat{g}}_{t-1} \right| \nonumber\\
\leq &\left\lceil
TH_t
\right \rceil-1
-\overline{LB}^{\hat{g}}_{t-1}\nonumber\\
\leq 
&\frac{1}{2}\left( 
\overline{UB}^{\hat{g}}_{t-1}-\overline{LB}^{\hat{g}}_{t-1} \right) \leq 1.5,
\label{eq:pfinf:eqbd2}
\end{align}
where the last inequality in (\ref{eq:pfinf:eqbd2}) is true because of (\ref{eq:UBLBbarupdate}) in Lemma \ref{lm:bounds},  \eqref{eq:pfinf:UBLBless1}, and 
\begin{align*}
\overline{UB}^{\hat{g}}_{t-1}-\overline{LB}^{\hat{g}}_{t-1} \leq
&UB^{\hat{g}}_{t}+1-LB^{\hat{g}}_{t}+1 \leq 3 .
\end{align*}
Therefore, $\overline{X}^{j,\hat{g}}_{t-1}\leq 1$ because $\overline{X}^{j,\hat{g}}_{t-1}$ takes integer values.
\item When $U^{i,\hat{g}}_{t-1}=1$, we must have $\overline{X}^{i,\hat{g}}_{t-1}=1$ by (\ref{eq:pfinf:eqdn}). 
This implies that the threshold is not more than $1$, and the only possible value of $\overline{X}^{j,\hat{g}}_{t-1}$ less than the threshold is $0$.
\end{enumerate}
As a consequence of the above analysis for the cases 1 and 2, $\{X^{i,\hat{g}}_t=0 \}$ implies 
\begin{align}
S_t = \overline{X}^{i,\hat{g}}_{t-1}+\overline{X}^{j,\hat{g}}_{t-1} \leq 1.
\end{align}
Thus, for $i=1,2$,
\begin{align}
\left\lbrace X^{i,\hat{g}}_t=0 \right\rbrace
= \left\lbrace X^{i,\hat{g}}_t=0, S_t \leq 1 \right\rbrace.
\end{align}
Then,
\begin{align}
&D^1_t1_{\left\{X^{1,\hat{g}}_t=0\right\}}+D^2_t1_{\left\{X^{2,\hat{g}}_t=0\right\}}\nonumber\\
=&D^1_t1_{\left\{X^{1,\hat{g}}_t =0, S_t \leq 1\right\}}+D^2_{t+1}1_{\left\{X^{2,\hat{g}}_t=0, S_t \leq 1\right\}}\nonumber\\
=&D^1_t1_{\left\{X^{1,\hat{g}}_t=0, S_t = 1\right\}}+D^2_t1_{\left\{X^{1,\hat{g}}_t\neq 0, S_t = 1\right\}}\nonumber\\
&+
D^1_t1_{\left\{ S_t =0\right\}}+D^2_t1_{\left\{S_t =0\right\}}.
\label{eq:pfinf:ind2}
\end{align}
Combining (\ref{eq:pfinf:ind1}) and (\ref{eq:pfinf:ind2}) we obtain
\begin{align}
&X^{1,\hat{g}}_{t+1}+X^{2,\hat{g}}_{t+1}\nonumber\\
=& X^{1,\hat{g}}_t+X^{2,\hat{g}}_t-D^1_t-D^2_t +A^1_t+A^2_t\nonumber\\
&+ D^1_t1_{\left\{X^1_t=0, S_t = 1\right\}}+D^2_t1_{\left\{X^1_t\neq 0, S_t = 1\right\}}\nonumber\\
&+ D^1_t1_{\left\{ S_t =0\right\}}+D^2_t1_{\left\{S_t =0\right\}} \nonumber\\
=& S_{t+1},
\end{align}
where the last equality follows by the definition of $S_{t+1}$.
\\
Therefore, at any time $t \geq T_0+1$ we have
\begin{align}
 X^{1,\hat{g}}_t+X^{2,\hat{g}}_t =S_t.
\end{align}
The proof of claim (\ref{eq:pfinf:claim}), and consequently, the proof of Theorem \ref{thm:inf} is complete.
\end{proof}

%
%

\begin{acknowledgements}
This work was partially supported by National Science Foundation (NSF) Grant CCF-1111061 and NASA grant NNX12AO54G.
The authors thank Mark Rudelson and Aditya Mahajan for helpful discussions.
\end{acknowledgements}

\bibliographystyle{spbasic}      
\bibliography{decroutingref}


\end{document}